\documentclass[11pt]{article}

\usepackage[margin=1in]{geometry}
\usepackage{amsmath,array,bm}
\usepackage{amsmath,amsfonts}
\usepackage{amsthm}
\usepackage{amssymb,latexsym}
\usepackage{algorithm}
\usepackage{subcaption}
\usepackage{algorithmicx}
\usepackage{marginnote}
\usepackage[]{algpseudocode}
\usepackage[dvipsnames]{xcolor}
\usepackage{graphicx}
\usepackage{wrapfig, framed, caption}
\usepackage{caption}
\usepackage{float}
\usepackage{thmtools}
\usepackage{hyperref}
\usepackage{mathtools}
\usepackage{thm-restate}
\usepackage{cleveref}
\usepackage{bussproofs}
\usepackage[square,sort,comma,numbers]{natbib}
\usepackage{tikz}
\usepackage{xspace}
\usepackage{xcolor}
\usepackage{tipa}
\usepackage{mdframed}
\usepackage[export]{adjustbox}[2011/08/13]
\usepackage{multirow}
\usepackage{hhline}
\usepackage{todonotes}
\usepackage{tcolorbox} 
\usepackage{dirtytalk}

\newtheorem{lemma}{Lemma}
\newtheorem{claim}{Claim}
\newtheorem{observation}{Observation}
\newtheorem{theorem}{Theorem}
\newtheorem{corollary}{Corollary}
\newtheorem{definition}{Definition}
\newtheorem{proposition}{Proposition}

\newcommand{\difty}[1]{D(#1)}
\newcommand{\nifty}[1]{||#1||_{\infty}}
\newcommand{\norm}[1]{||#1||}
\newcommand{\dx}{\Delta_x}
\newcommand{\dy}{\Delta_y}
\newcommand{\seg}[1]{\overline{#1}}




\title{Coresets for Clustering in Geometric Intersection Graphs}
\author{Sayan Bandyapadhyay \thanks{Portland State University, Oregon, USA.} \and Fedor V. Fomin \thanks{University of Bergen, Norway.} \and Tanmay Inamdar \addtocounter{footnote}{-1}\footnotemark{} \thanks{The research leading to these results has received funding from the Research Council of Norway via the project BWCA (grant no. 314528), and the European Research Council (ERC) via grant LOPPRE, reference 819416.}}
\date{} 

\newcommand{\real}{\mathbb{R}}

\renewcommand{\d}{\mathsf{d}}
\renewcommand{\P}{\mathcal{P}}
\newcommand{\cost}{\mathsf{cost}}

\newcommand{\lr}[1]{\left( #1\right)}
\newcommand{\LR}[1]{\left\{ #1\right\}}
\newcommand{\Oh}{\mathcal{O}}
\newcommand{\A}{\mathcal{A}}
\newcommand{\T}{\mathcal{T}}
\newcommand{\W}{\mathcal{W}}
\newcommand{\tils}{\tilde{s}}
\newcommand{\R}{\mathcal{R}}
\renewcommand{\ij}{^{i, j}}

\renewcommand{\L}{\mathcal{L}}
\DeclarePairedDelimiter\near{\lceil}{\rfloor}
\DeclareMathOperator*{\argmin}{arg\,min}

\bibliographystyle{plainurl}

\newcommand{\calS}{\mathcal{S}}
\newcommand{\tilS}{\tilde{\calS}}
\newcommand{\bbC}{\mathbb{C}}
\newcommand{\bbCL}{\mathbb{C}_{\text{landmark}}}
\newcommand{\bbCS}{\mathbb{C}_{\text{support}}}
\newcommand{\bbCN}{\mathbb{C}_{\text{net}}}

\newcommand{\Snet}{\mathcal{S}_{\text{net}}}
\newcommand{\Snetsub}{\mathcal{S}_{\text{net-sub}}}
\newcommand{\Ssup}{\mathcal{S}_{\text{support}}}
\newcommand{\Sland}{\mathcal{S}_{\text{landmark}}}

\begin{document}

\maketitle

\begin{abstract}
Designing coresets---small-space sketches of the data preserving cost of the solutions within $(1\pm \epsilon)$-approximate factor---is an important research direction in the study of center-based $k$-clustering problems, such as $k$-means or  $k$-median.  
Feldman and Langberg [STOC'11] have shown that for $k$-clustering of $n$ points in general metrics, it is possible to obtain coresets whose size depends logarithmically in $n$. Moreover,  such a dependency in $n$ is inevitable in general metrics. A significant amount of  recent work in the area is devoted to obtaining coresests whose sizes are independent of $n$ for special metrics, like  $d$-dimensional Euclidean space [Huang, Vishnoi, STOC'20],  doubling metrics [Huang, Jiang, Li, Wu, FOCS'18], metrics of graphs of bounded treewidth [Baker, Braverman, Huang, Jiang, Krauthgamer, Wu, ICML’20], or graphs excluding a fixed minor [Braverman, Jiang, Krauthgamer, Wu, SODA’21].

In this paper, we provide the first constructions of coresets whose size does not depend on $n$ for $k$-clustering in the metrics induced by \emph{geometric intersection graphs}. For example, we obtain $\frac{k\log^2k}{\epsilon^{\Oh(1)}}$ size coresets for $k$-clustering in Euclidean-weighted unit-disk graphs (UDGs) and unit-square graphs (USGs). 
These constructions follow from a general theorem that identifies two canonical properties of a graph metric sufficient for obtaining coresets whose size is independent of $n$. The proof of our  theorem builds on the recent work of 
Cohen-Addad, Saulpic, and Schwiegelshohn [STOC '21], which ensures small-sized coresets conditioned on the existence of an interesting set of centers, called \emph{centroid set}. The main technical contribution of our work is the proof of the existence of such a small-sized centroid set for graphs that satisfy two canonical geometric properties. Loosely speaking, we exploit the fact that the metrics of geometric intersection graphs are ``similar'' to the Euclidean metrics for points that are close, and to the shortest path metrics of planar graphs for points that are far apart. The main technical challenge in constructing centroid sets of small sizes is in combining these two very different metrics. 

The new coreset construction helps to design the first $(1+\epsilon)$-approximation for center-based clustering problems in UDGs and USGs, that is fixed-parameter tractable in $k$ and $\epsilon$ (FPT-AS). 
\end{abstract}

\section{Introduction}
\label{sec:intro}
Clustering is one of the most important data analysis techniques where the goal is to partition a dataset into a number of groups such that each group contains similar set of data points. The notion of similarity is captured by a distance function between the data points, and the goal of retrieving the {best} natural clustering of the data points is achieved by minimizing a proxy cost function. 
In this work, we study the popular $(k,z)$-clustering problem.

\medskip\noindent\textbf{$(k,z)$-clustering.} Given a set of points $P$ in a metric space $(\Omega,\d)$ and two positive integers $k$ and $z$, find a set $C$ of $k$ points (or centers) in $\Omega$ that minimizes the following cost function: \[\cost(C)=\sum_{p\in P} \cost(p, C)\] where $\cost(p, C)=(\d(p,C))^z$ and $\d(p,C)=\min_{c\in C} \d(p,c)$. 

Two widely studied clustering problems, $k$-means, and $k$-median clustering, are  special versions of $(k,z)$-clustering with $z=2$ and $z=1$, respectively. 
  A popular way of dealing with large data for the purpose of the analysis is to apply a data reduction scheme as a preprocessing step. In the context of clustering, one such way of preprocessing the data is to construct an object known as \emph{coresets}.

\medskip\noindent\textbf{Coresets.} Informally, an $\epsilon$-coreset for $(k,z)$-clustering is a small-sized summary of the data that approximately (within $(1\pm \epsilon)$ factor) preserves the cost of clustering with respect to any set of $k$ centers (we will often shorten ``$\epsilon$-coreset'' to simply ``coreset''). Thus, any solution set of centers on the coreset points can be readily used as a solution for the original dataset. A formal definition follows.
\begin{definition}[$\epsilon$-Coreset]
	A coreset for $(k, z)$-clustering problem on a set $P$ of points in a metric space $(\Omega, \d)$ is a weighted subset $Y$ of $\Omega$ with weights $\omega: Y \to \real^+$ such that for any set $\calS \subseteq \Omega$ with $|\mathcal{S}| = k$, 
	\[ \left| \sum_{p \in P} \cost(p, \mathcal{S}) - \sum_{p \in Y} \omega(p) \cost(p, \mathcal{S}) \right| \le \epsilon \cdot \sum_{p \in P}  \cost(p, \mathcal{S}). \]
\end{definition}

Feldman and Langberg \cite{DBLP:conf/stoc/FeldmanL11} showed that for $n$ points in any general metric, a coreset of size $\Oh(\epsilon^{-2z}k\log k\log n)$ can be constructed in time $\tilde{\Oh}(nk)$, where $\tilde{\Oh}()$ notation hides a poly-logarithmic factor.
 Also, it is known that the dependency on $\log n$ in the above bound cannot be avoided \cite{DBLP:conf/icml/BakerBHJK020,Cohen-AddadLSS22}. 
However, for several special metrics, it is possible to construct coresets  whose size does not 
depend on the data size.  There has been a large pool of work for Euclidean spaces, culminating in a bound of $\Oh(k\cdot (\log k)^{\Oh(1)}\cdot 2^{\Oh(z\log z)}\epsilon^{-2} \cdot \min\{\epsilon^{-z},k\})$ \cite{Cohen-AddadLSS22}, which is independent of the data size and dimension of the space. Moreover, the question has been studied in other specialized settings such as doubling metrics \cite{huang2018epsilon}, shortest-path metrics in the graphs of bounded-treewidth \cite{DBLP:conf/icml/BakerBHJK020}, and graphs excluding a fixed minor \cite{braverman2021coresets}. A recent result by Cohen-Addad et al.\ \cite{Cohen-AddadLSS22} gives a unified framework that encompasses all these results. We note that, since the number of distinct (weighted) points in coresets is usually much smaller (and sometimes independent of) $n$, they naturally find applications in non-sequential settings such as streaming \cite{DBLP:conf/stoc/Har-PeledM04,chen2009coresets}.

Let us remark that all known results about small coresets in graph metrics strongly exploit the sparsity property of graphs such as bounded treewidth~\cite{DBLP:conf/icml/BakerBHJK020} or excluding a fixed minor~\cite{braverman2021coresets}. There is a very good reason for that.  In a complete graph,  by setting suitable weights on the edges one can represent \emph{any} general metric.  Thus if a graph family contains large cliques, clustering in such graphs is as difficult as in general metrics.

In this work, we are interested in coreset construction for edge-weighted geometric intersection graphs with shortest-path metric. A geometric intersection graph of a set of geometric objects contains a vertex for each object and an edge corresponding to each pair of objects that have non-empty intersection. (We note that for our purpose of designing algorithms, we do not explicitly need the objects or their geometric representation. It is sufficient to work with the graph representation as long as the edge-weights are given.) In particular, geometric intersection graphs are a widely studied model for ad-hoc communication and wireless sensor networks \cite{schulz2007modeling,balakrishnan2004distance,lebhar2009unit,li2005efficient,kuhn2005local}. Notably, clustering is a common topology management method in such networks. Grouping nodes are used as subroutines for executing various tasks in a distributed manner and for resource management, see the survey \cite{shahraki2020clustering} for an overview of different clustering methods for wireless sensor networks. 

Our work is motivated by the following question: \textsl{``Is it possible to exploit the properties of geometric intersection graphs for obtaining coresets whose size does not depend on the data size?''} In general,  the answer to this question is \emph{no}. 
This is because geometric intersection graphs can contain large cliques. Even for objects as simple as unit squares, the corresponding intersection graph could be a clique, and, as we already noted,  by setting suitable weights on the edges of the clique, one can represent any metric.
Hence constructing coresets in geometric intersection graphs with arbitrary edge weights is as difficult as in general metrics. Thus, we need to restrict edge weights in some manner in order to obtain non-trivial coresets for geometric intersection graphs. As an illustrative example, let us take a look at Euclidean-weighted UDGs, a well-studied class of geometric intersection graphs.

\medskip\noindent\textbf{Euclidean-weighted unit-disk graph metric.} A unit-disk graph (UDG) is defined in the following way---there is a configuration of closed disks of radii 1 in the plane and a one-to-one correspondence between the vertices and the centers of the disks such that there is an edge between two vertices if and only if the disks having the two corresponding centers intersect. The weight of an edge is equal to the Euclidean distance between the two corresponding centers. 
Euclidean-weighted UDGs have been well-studied in computational geometry \cite{ChanS19a,gao2005well}. Apart from practical motivation, UDGs are interesting from theoretical perspectives as well. On the one hand, being embedded on the plane they resemble planar graphs when ``zoomed out'', but could contain large cliques locally. On the other hand, the metric induced by them is an amalgamation of geometric and graphic settings, as it is locally Euclidean but globally a graph metric. Due to the latter property, UDG metric can be used for fine-tuned clustering, as with pure Euclidean distances one can only retrieve clusters induced by convex partitions of the space (see 
\Cref{fig:fig1}). 

\begin{figure}[h]
	\centering
	\begin{framed}
		\includegraphics[width=.9\textwidth]{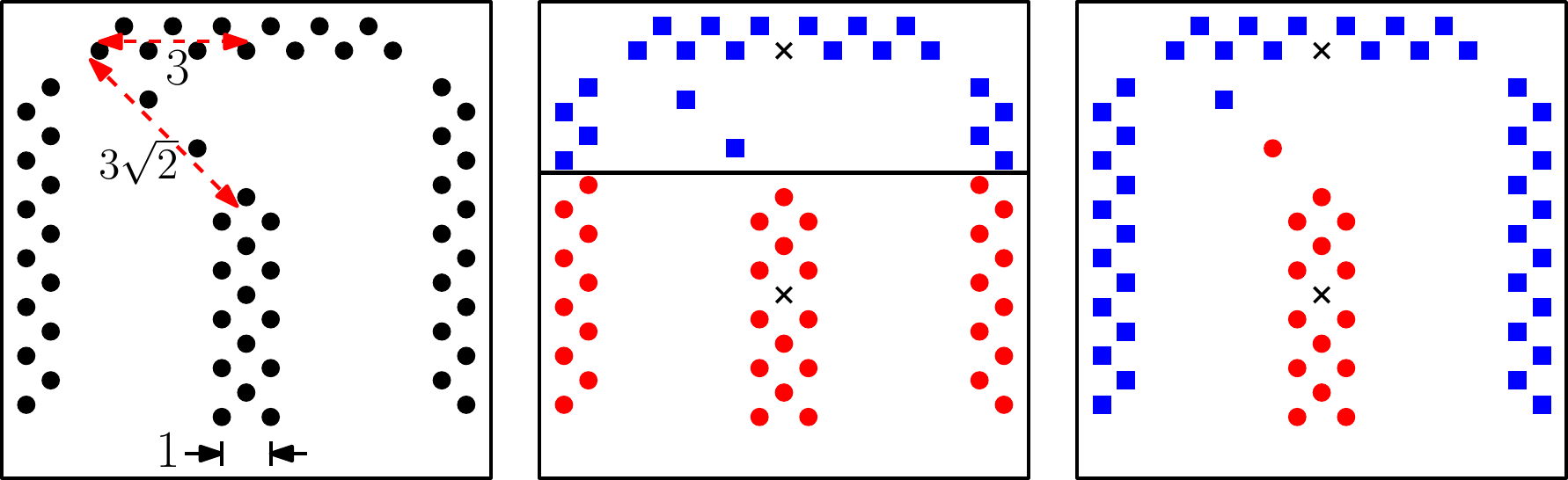}
		\caption{(Left.) A point set in 2D. (Middle.) $2$-means clustering with Euclidean distances. Centers are shown by crosses. Points of two clusters are shown by disks (red) and squares (blue). (Right.) $2$-means clustering on UDG -- due to the $3\sqrt{2}$ diagonal path distance, all the blue points (squares) are closer to the upper center.}
		\label{fig:fig1}
	\end{framed}
\end{figure}

\subsection{Our Results.}

We now formalize our intuition about the ``hybrid'' nature about the Euclidean weighted UDGs, by identifying two canonical geometric properties of a graph $G$ that are sufficient for constructing small-sized coresets. For better exposition,  we fix a few notations. For any subgraph $H$ of $G$, we denote its set of vertices and set of edges by $V(H)$ and $E(H)$, respectively. For   vertex set $V'\subseteq V$, we denote by  $G[V']$ the subgraph of $G$ induced by $V'$. For a subgraph $H$ of $G$, and   $u, v \in V(H)$, let $\pi_H(u, v)$ denote a shortest path between $u$ and $v$ (according to the edge-weights in $G$ restricted to $H$) that uses the edges of $H$, and let $\d_H(u, v)$ denote the weight of $\pi_H(u, v)$, i.e., the sum of the weights of the edges along $\pi_H(u, v)$. For any path $\pi$ in a graph, let $|\pi|$ denote the number of edges on $\pi$. Note that $\d_H$ is the so-called \emph{shortest path metric} on $H$. Finally, for any pair of points $p, q \in \real^2$, let $|pq|$ denote the euclidean (i.e., $\ell_2$-norm) distance between $p$ and $q$.

\medskip\noindent\textbf{Canonical geometric properties.}\\ 
\medskip\noindent(1) \textsl{Locally Euclidean:} There exist (not necessarily distinct) constants $c_1, c_2, c_3, c_4 \ge 0$, such that the following holds. $G$ has an embedding $\lambda: V(G) \to \real^2$ in the plane such that the vertices of $G$ are mapped to points in the plane, with the following two properties.
\begin{enumerate}
	\item For any two $u, v \in V(G)$, if $|\lambda(u)\lambda(v)| \le c_1$ then $uv \in E(G)$, and for any $u', v' \in V(G)$, if $|\lambda(u') \lambda(v')|_2 > c_2$, then $u'v' \not\in E(G)$. 
	\item For any $u, v \in V(G)$ such that $uv \in E(G)$, let $w(uv)$ denote the weight of the edge $uv$. Then, the edge $uv$ is a shortest path between $u$ and $v$ in $G$. 
	\\Furthermore, $c_3 \cdot |\lambda(u)\lambda(v)| \le w(uv) \le c_4 \cdot |\lambda(u) \lambda(v)|$. 
\end{enumerate}
\medskip\noindent(2) \textsl{Planar Spanner:} For any induced subgraph $G' = G[V']$ with $V' \subseteq V(G)$, there exists a planar $\alpha$-spanner $H' = (V', E(H'))$ for some fixed $\alpha \ge 1$, i.e., (i) $H'$ is a subgraph of $G'$ (and hence of $G$) -- $E(H') \subseteq E(G')$, and (ii) for any $u, v \in V'$, $\d_{G'}(u, v) \le \d_{H'}(u, v) \le \alpha \cdot \d_{G'}(u, v)$. 



Our main result is the following theorem. 
\begin{theorem}[Informal]
	\label{thm:coreset}
	Consider the metric space $(V,\d_G)$ induced by any graph $G$ satisfying the two canonical geometric properties (1) and (2), and a set $P\subseteq V(G)$. Then there exists a polynomial time algorithm that constructs a coreset for $(k,z)$-clustering on $P$ of size $\Oh(\epsilon^{-\beta}k\log^2k )$, where  $\beta = \Oh(z \log z)$. 
\end{theorem}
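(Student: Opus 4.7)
The plan is to reduce the problem to constructing a small-sized \emph{centroid set} and then invoke the black-box framework of Cohen-Addad, Saulpic, and Schwiegelshohn. In that framework, if one exhibits a set $\bbC \subseteq V(G)$ with the property that, for every candidate solution $\calS$ of size $k$, there is $\tilS \subseteq \bbC$ of size $k$ that approximately preserves clustering cost on a designated set of representative points, then $\bbC$ can be boosted into an $\epsilon$-coreset of size roughly $\epsilon^{-\Oh(z)} \cdot k \log |\bbC|$. Consequently it suffices to construct a centroid set whose size is polynomial in $k$, $\log k$, and $1/\epsilon$ with no dependence on $|V(G)|$; this is the single target of the proof.

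The construction will build $\bbC$ in two layers, directly mirroring the hybrid nature captured by properties (1) and (2). The short-distance regime is handled using property (1): a ball of small graph-radius around any vertex maps, under the embedding $\lambda$, into a bounded-diameter Euclidean region, so a standard Euclidean $\Oh(1/\epsilon)^{\Oh(1)}$-net over each such ball gives a sufficiently fine ``local'' candidate set. The long-distance regime is handled using property (2): on any induced subgraph, an $\alpha$-spanner that is planar exists, so one can port the centroid-set construction of Braverman et al.\ for minor-free graphs to obtain a ``global'' candidate set of size $\Oh(k \log k)/\epsilon^{\Oh(z)}$ whose distances to designated vertices approximate $\d_G$ up to a factor $\alpha(1\pm\epsilon)$.

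Concretely, the steps will be as follows. First, compute a constant-factor approximation $\A$ to $(k,z)$-clustering on $P$; its centers form the landmark set $\Sland$. Second, for each landmark $\ell \in \Sland$, place a Euclidean net, with resolution tied to the average assignment radius around $\ell$, over the region of $\lambda$-space within bounded radius of $\ell$; their union forms the local part $\bbCL$. Third, identify a small ``support'' set $\Ssup$ of vertices (e.g.\ shortest-path skeletons between landmarks, or skeletons of the Voronoi cells of $\Sland$ in $G$) such that $\d_{G[\Sland \cup \Ssup]}$ accurately tracks $\d_G$ between pairs of landmarks; then build a planar $\alpha$-spanner of $G[\Sland \cup \Ssup]$ via property (2) and run the planar centroid-set construction on it to obtain $\bbCN$. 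Set $\bbC := \bbCL \cup \bbCN$. Given an arbitrary solution $\calS$, each center $c \in \calS$ is replaced by its closest point in $\bbCL$ if $c$ is near some landmark, and otherwise by its closest point in $\bbCN$; a case analysis, driven by the distance of $c$ to $\Sland$, bounds the resulting cost change against $\epsilon \cdot \cost(\calS)$.

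The main obstacle is the interface between the two regimes: a shortest path $\pi_G(p,c)$ from a clustered point $p$ to a replacement center $c$ may start with ``locally Euclidean'' edges inside the ball around $p$'s landmark and continue with ``spanner-like'' edges between landmarks, so the approximation error has to be controlled against a composite path whose two segments are handled by different mechanisms. The argument I envisage decomposes $\pi_G(p,c)$ at the first landmark it visits, bounds the Euclidean segment by the net granularity and the scaling constants $c_3, c_4$ from property (1), bounds the remaining segment by the spanner stretch $\alpha$ and the guarantees of the planar centroid set, and then recombines via triangle-inequality-style accounting. Making sure that $\Ssup$ stays small (so that the planar spanner is applied to a graph whose vertex count depends only on $k$ and $1/\epsilon$) while still preserving $\d_G$ between landmarks up to a $(1\pm\epsilon)$ factor is the most delicate point, and is the place where properties (1) and (2) must be used in tandem.
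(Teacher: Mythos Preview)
Your high-level plan---invoke the Cohen-Addad--Saulpic--Schwiegelshohn framework and build the centroid set in a local (Euclidean-net) and a global (planar-spanner) layer---matches the paper's strategy, but two concrete points in your proposal would break the argument, and one crucial step is missing.

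\medskip
\textbf{The spanner stretch does not disappear.} You write that the planar construction yields distances that approximate $\d_G$ ``up to a factor $\alpha(1\pm\epsilon)$''. That is not good enough: a coreset requires $(1\pm\epsilon)$, and an extra constant $\alpha>1$ cannot be absorbed. The paper's way out is \emph{not} to approximate an entire path through the spanner. It builds the recursive decomposition on the full graph $G$, using the planar spanner only to obtain shortest-path separators. When a shortest path $\pi_G(p,s)$ crosses a separator, the spanner is used to reroute a \emph{single} edge $(u,v)$ of $\pi_G(p,s)$; the resulting detour costs $\le\alpha\cdot\d_G(u,v)$. This is harmless only because the argument has already restricted to the ``long-path'' case $|\pi_G(p,s)|\ge\Omega(z\alpha/\epsilon)$, so \textsl{Locally Euclidean} gives $\d_G(u,v)=\Oh(1)$ while $\d_G(p,s)=\Omega(z/\epsilon)$, whence the $\alpha$-rerouting contributes only an $\Oh(\epsilon/z)$-fraction of the total. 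Your proposal lacks this mechanism, and your stated $\alpha(1\pm\epsilon)$ guarantee would yield only an $\Oh(\alpha)$-coreset.

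\medskip
\textbf{A small support set is not available.} You propose to pick $\Ssup$ so that $G[\Sland\cup\Ssup]$ has $(1\pm\epsilon)$-accurate distances between landmarks \emph{and} $|\Ssup|$ depends only on $k$ and $1/\epsilon$. There is no reason such a set exists: a shortest path between two landmarks can have $\Theta(n)$ vertices, and any vertex set that preserves its length must contain essentially all of them. The paper never attempts this; it applies the planar spanner to the full induced subgraph at each node of the recursive decomposition.

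\medskip
\textbf{The iterative size reduction is missing.} Because the centroid set is built on the \emph{client set} $X$ (landmarks along separator paths are indexed by pairs of points of $X$, etc.), its size is $\exp\!\big(\Oh(\log^2|X|+\mathrm{poly}(z/\epsilon)\log|X|)\big)$, which, after the framework, gives a coreset of size $\mathrm{poly}(k,1/\epsilon)\cdot\log^2|X|$---still depending on $n$ when $X=P$. The paper removes this dependence by the iterative size-reduction trick of Braverman et al.: it feeds the coreset back as the new client set and repeats $\Oh(\log^*n)$ times with a geometrically increasing schedule of $\epsilon_i$'s, after which $|X|$ is $\mathrm{poly}(k,1/\epsilon)$ and one final application gives the stated bound. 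Your proposal asserts independence from $|V(G)|$ outright, with no such bootstrapping; that claim is unsupported.
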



\Cref{thm:coreset} is a handy tool to construct coresets for several interesting geometric intersection graphs coupled with suitable metrics. First, let us observe that our initial example, namely, a metric induced by a Euclidean-weighted UDG $G$ satisfies the two canonical properties. Consider an embedding of $G$ in $\real^2$. Note that there is an edge between any two points iff the Euclidean distance between the two points is at most $2$, and the weight of such an edge is exactly the euclidean distance. Thus, $G$ is \textsl{Locally Euclidean} with $c_1 = c_2 = 2$, and $c_3 = c_4 = 1$. Furthermore, due to a result of Li, Calinescu, and Wan \cite{li2002distributed}, any Euclidean-weighted UDG admits a constant-stretch planar spanner (cf. \Cref{prop:udg-spanner}). Thus, $G$ also satisfies the \textsl{Planar Spanner} property. Therefore, due to \Cref{thm:coreset}, we can obtain $\Oh(\epsilon^{-\beta} k \log^2 k)$-size coresets for $(k, z)$-clustering on Euclidean-weighted UDGs. In the following, we discuss further applications of our framework.

\medskip
\noindent
\textbf{$\ell_\infty$-weighted unit-square graph metric.} Unit-square graphs (USGs) are similar to UDGs except they are defined as intersection graphs of (axis-parallel) unit squares instead of unit disks \footnote{Although it might seem unnatural at first, it is convenient to define a \emph{unit square} as a square of sidelength $2$. This is analogous to a unit disk being a disk of diameter $2$. In either case, the class of USGs remains unaffected by scaling.}. Indeed, these two graph classes are distinct. For example, the $K_{1,5}$ claw can be realized by a UDG, but not by any USG. (See  \Cref{fig:fig2}.) Since a unit square is a ``unit ball'' in $\ell_\infty$-norm, it is more natural to consider $\ell_\infty$ weights on the edges. It is not too difficult to see that the \textsl{Locally Euclidean} property holds for $\ell_\infty$-weighted USGs -- we give a formal proof in \Cref{subsec:usg}. On the other hand, in order to establish the second property, we have to prove the existence of a constant-stretch planar spanner for USGs. To the best of our knowledge this result was previously not known and is of independent interest. We show this result in \Cref{sec:usg-spanner}. Thus, $\ell_\infty$ weighted USGs also satisfy the two properties required to apply \Cref{thm:coreset} in order to obtain a small-sized coreset.

\begin{figure}[h]
	\centering
	\begin{framed}
		\includegraphics[width=.6\textwidth]{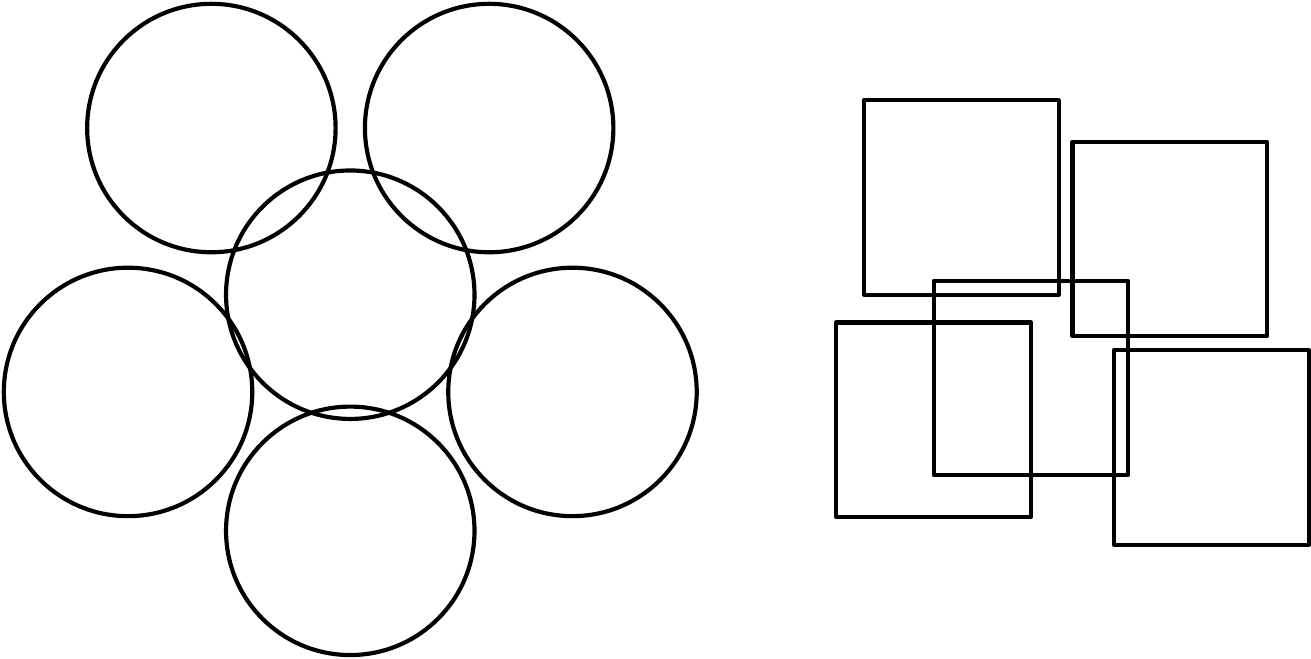}
		\caption{Figure showing a set of disks realizing $K_{1,5}$ and a set of squares realizing $K_{1,4}$. Two pairs of intersecting unit squares must contain a corner of each other, and so the central square can intersect with at most four other unit squares that are pairwise disjoint. }
		\label{fig:fig2}
	\end{framed}
\end{figure}
\medskip
\noindent
\textbf{Other extensions.} In $\real^2$, all $\ell_p$ distances ($1 \le p \le \infty$) are within a $\sqrt{2}$ factor from each other. Thus, our arguments can be easily extended to any $\ell_p$ weights on UDGs/USGs for any $1 \le p \le \infty$ without any changes on the bounds. We formally prove this in \Cref{subsec:other-norms}. Lastly, we consider shortest-path metrics in unweighted (i.e., hop-distance) unit-disk graphs of bounded maximum degree. Notably, these graphs satisfy the \textsl{Planar Spanner} property due to a result of \cite{Biniaz20}. Nevertheless, we show in \Cref{subsec:hop-udg} that we can modify our approach to construct a small-sized coreset for such metrics. To summarize, we obtain coresets for $(k, z)$-clustering with size independent of $n$ for the following graph metrics.
\begin{itemize}
	\item $\ell_p$-distance weighted UDGs for any $1 \le p \le \infty$,
	\item $\ell_p$-distance weighted USGs for any $1 \le p \le \infty$,
	\item Bounded-degree unweighted UDGs.
\end{itemize}

\medskip\noindent\textbf{FPT Approximation Schemes.} As a corollary to \Cref{thm:coreset}, we obtain $(1+\epsilon)$-approximations for $(k,z)$-clustering in geometric intersection graphs that are fixed-parameter tractable (FPT) in $k$ and $\epsilon$. Note that such a $(1+\epsilon)$-approximation was not known before even for UDGs, as it does not follow from previously known bound on coreset sizes. Prior to our work, the best known bound for UDGs --- as in general metrics --- was $O(k\log n\cdot \epsilon^{-\max(2,z)})$ \cite{Cohen-AddadSS21}. We note that even though a coreset reduces the number of distinct points (or clients) to be clustered, the number of potential centers (or facilities) still remains the same, i.e., $n$. Hence, a coreset does not directly help us enumerate all possible sets of $k$ centers from which we could pick the best set. An alternative way to enumerate these sets of centers is to enumerate all possible partitions (or clusterings) of the coreset points. Note that each clustering of coreset points corresponds to a clustering of the original points, and the cost of clustering is preserved to within a $(1\pm \epsilon)$ factor.
With our coreset bound of $\Oh(\epsilon^{-\beta}k\log^2k)$, the number of distinct clusterings is only $k^{\Oh(\epsilon^{-\beta}k\log^2k)}$. As the time complexity is dominated by the computations of cluster centers and costs for all the partitions, overall the algorithm takes $k^{\Oh(\epsilon^{-\beta}k\log^2k)}n^{\Oh(1)}$ time. 

\begin{corollary}
	For each of the metrics listed in the above, there exists a $(1+\epsilon)$-approximation for $(k,z)$-clustering with $z\ge 1$ that runs in time $2^{\Oh(\epsilon^{-\beta}k\log^3k)}n^{\Oh(1)}$, where $\beta = \Oh(z \log z)$. 
\end{corollary}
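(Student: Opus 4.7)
The plan is to combine Theorem~\ref{thm:coreset} with an exhaustive enumeration over partitions of the coreset, as previewed in the paragraph immediately before the corollary. First, I would invoke Theorem~\ref{thm:coreset} with parameter $\epsilon' = \Theta(\epsilon)$ to construct, in polynomial time, a weighted coreset $Y \subseteq V(G)$ with weights $\omega \colon Y \to \real^+$ of size $s = \Oh(\epsilon^{-\beta} k \log^2 k)$ for $(k,z)$-clustering on $P$. I would then precompute all pairwise shortest-path distances in $G$ in $n^{\Oh(1)}$ time, so that each subsequent distance query is $\Oh(1)$.

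Next, I would enumerate every function $\phi \colon Y \to [k]$, each of which induces a partition $Y = Q_1 \cup \cdots \cup Q_k$ into at most $k$ (possibly empty) groups; there are $k^s$ such functions. For each $\phi$ and each index $i$, I would compute the best \emph{weighted $1$-$z$-clustering} center $c_i(\phi) \in V(G)$ for $Q_i$ by scanning every vertex $v \in V(G)$ and picking one that minimizes $\sum_{y \in Q_i} \omega(y) \cdot \d_G(y,v)^z$; this yields a candidate center set $C_\phi = \{c_1(\phi), \dots, c_k(\phi)\}$ in $\Oh(ns)$ time per partition. Finally, I output the $C_\phi$ minimizing $\sum_{y \in Y} \omega(y) \cost(y, C_\phi)$ across all enumerated $\phi$.

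For correctness, let $C^\star \subseteq V(G)$ be an optimum $k$-center set for $P$ with cost $\mathrm{OPT}$, and let $\phi^\star$ be the partition of $Y$ obtained by nearest-center assignment to $C^\star$. Because $C^\star \subseteq V(G)$ lies inside the per-part search space scanned by the algorithm, $C_{\phi^\star}$ has \emph{fixed-assignment} cost on $Y$ (with respect to $\phi^\star$) at most that of $C^\star$; but the $C^\star$-fixed-assignment cost for $\phi^\star$ equals the nearest-center cost of $C^\star$ on $Y$ (by definition of $\phi^\star$), and the nearest-center cost of $C_{\phi^\star}$ on $Y$ only improves on any fixed assignment. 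Applying the coreset guarantee to $C^\star$ and then in the reverse direction to $C_{\phi^\star}$ yields $\sum_{p \in P} \cost(p, C_{\phi^\star}) \le \tfrac{1+\epsilon'}{1-\epsilon'}\,\mathrm{OPT}$; the output $C_\phi$ is at least as good, so choosing $\epsilon' = \Theta(\epsilon)$ gives a $(1+\epsilon)$-approximation. The total running time is $k^s \cdot n^{\Oh(1)} = 2^{\Oh(s \log k)} n^{\Oh(1)} = 2^{\Oh(\epsilon^{-\beta} k \log^3 k)} n^{\Oh(1)}$. The genuine difficulty has already been absorbed into Theorem~\ref{thm:coreset}; the only point requiring care in the corollary itself is the bookkeeping that relates nearest-center costs to the fixed-assignment costs implicit in the partition enumeration, and the observation that restricting candidate centers to $V(G)$ is without loss because the problem is defined on the metric space $(V(G),\d_G)$.
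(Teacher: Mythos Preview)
Your proposal is correct and matches the paper's argument essentially line for line: build the coreset of size $s = \Oh(\epsilon^{-\beta} k \log^2 k)$ from Theorem~\ref{thm:coreset}, enumerate all $k^s$ partitions of the coreset into $k$ groups, compute an optimal center for each group by scanning $V(G)$, and return the best. Your correctness bookkeeping (fixed-assignment versus nearest-center costs, and the two applications of the coreset inequality) is spelled out more carefully than in the paper, but the approach and the running-time calculation $k^s \cdot n^{\Oh(1)} = 2^{\Oh(\epsilon^{-\beta} k \log^3 k)} n^{\Oh(1)}$ are identical.
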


\subsection{General overview of the methods}

Our coreset construction is based on a  recent work due to Cohen-Addad, Saulpic, and Schwiegelshohn \cite{Cohen-AddadSS21}, which gives a   framework for constructing coresets in various settings.The essence of the framework is that it translates the problem of coreset construction to showing the existence of an interesting set of centers or centroid set. In particular, consider any set $\calS$ of $k$ centers and any subset $X\subseteq P$ of points that are sufficiently close to $\calS$ compared to an existing solution $\A$. Then a subset $\bbC\subseteq \Omega$ is a \emph{centroid set} for $P$ if it contains centers that well-approximates $\calS$, i.e., there exists $\tilS\subseteq \bbC$, such that for every $p\in X$, it holds that $|\cost(p, \calS)-\cost(p, \tilS)| \le {\epsilon}(\cost(p, \mathcal{S}) + \cost (p, \A))$. The framework informally states that if there is a centroid set $\bbC$, then a coreset can be constructed whose size depends logarithmically on $|\bbC|$. Such a dependency arises in their randomized construction in order to prove a union bound over all possible interesting solutions, which can be at most $|\bbC|^k$. By showing the existence of small-sized centroid sets, they obtain improved coreset size bounds for a wide range of spaces. 

We use the   framework of Cohen-Addad  et al. for our coreset construction. Our main technical result shows if for a graph $G$ with metric $\d_G$, the two canonical geometric properties are satisfied, then there exists a small-sized centroid set for $G$. This is the most challenging part of the proof and it requires a novel combination of tools and techniques from computational geometry. As soon as we establish the existence of  the centroid set, the construction of coresets follows the steps of \cite{Cohen-AddadSS21}. For the sake of exposition, let us consider a concrete example of Euclidean-weighted UDGs.

The first hurdle one faces while dealing with UDGs is that they encapsulate a combination of the Euclidean case and the case of \emph{graphical} or shortest-path metric. For example, consider any cluster of points with cluster center $s$. The points that are nearby (i.e., within distance $2$) $s$ behave simply as points in the Euclidean case. But, a point $p$ that is far away from $s$ can have a shortest path distance which is much larger than the actual Euclidean distance between $p$ and $s$, see \Cref{fig:fig1}. We first show that it is possible to conceptually separate out these two cases---but one has to be careful, as a cluster can potentially contain both types of points. Notably, none of the previous works had to deal with such a hybrid metric. To handle the set of nearby points, we exploit the \textsl{Locally Euclidean} property. In particular, by overlaying a grid of appropriately small sidelength, and selecting one representative point from each cell of the grid, we can compute a centroid set that preserves the distances from the nearby points. 

In the other case, a shortest path between a point $p$ and a center $s$ consists of more than one edge, and we need to deal with a graphical metric. This case is much more interesting. All other works establishing small-sized centroid sets in certain graph metrics exploit the fact that certain graph classes admit small or well-behaved separators. For example, bounded treewidth graph admit separators bounded by treewidth; whereas graphs excluding a fixed minor admit shortest path separators. However, UDGs may contain arbitrarily large cliques, and therefore do not admit such separators in general. Thus, we reach a technical bottleneck. Note that this is the first work of its kind that handles such a dense graph. To overcome this challenge, we use the other canonical property. Instead of directly working with the UDG, we consider its planar spanner, where distances are preserved up to a constant factor. The existence of such a spanner is guaranteed by the second canonical property, \textsl{Planar Spanner}. As planar graphs have shortest path separators, now we can apply the existing techniques. However, if we were to entirely rely on the spanner, some of the distances may be scaled up by a constant ($> 2$) factor in the spanner, and thus it would not be possible to ensure the $(1\pm \epsilon)$-factor bound required to construct a coreset. 

Thus, we use the spanner as a supporting graph in the following way. First, we recursively decompose the original UDG by making use of the shortest path separators admitted by the planar spanner. We note that although planar graph decomposition has been used in coreset literature, using such a guided scheme to obtain a decomposition of a much more general graph is novel. Then, we use this recursive decomposition of the UDG, along with the shortest path separators used to find this decomposition, in order to construct the centroid set. In this construction, we use the spanner in a restricted manner, and use it such that error incurred by the use of the spanner is upper bounded by $\alpha$ times the weight of at most one edge along a shortest path from a point $p$ to its corresponding (approximate) center. However, observe that if such a shortest path consists of a single edge, then even this error is too large. To resolve this issue, we rely on the planar spanner, only if the shortest path is ``long enough'', i.e., contains $\Omega(z/\epsilon)$ edges. In this case, \textsl{Locally Euclidean} property implies that for such a ``long path'', the \emph{length} of the path and the \emph{number of edges} on the path are within a constant factor from each other. This implies that the error introduced by rerouting a single edge using the spanner is at most $\epsilon$ times the length of the path, i.e., negligible.

Finally, if a shortest path between a point and a center consists of $\Oh(z/\epsilon)$ edges, then we can use a modified version of the grid-cell argument to obtain a small-sized centroid set.

We note that this is simply an intuitive overview of the challenges faced in each of the three cases. The actual construction of the centroid set, and the analysis of the error incurred in each of the cases is fairly convoluted. While replacing a center $s\in \calS$ by another one $\tils \in \bbC$, we need to ensure that for a point $p$ having $s$ as its closest center, $\d_G(p, \tils)$ is neither too large nor too small compared to $\d_G(p, s)$, since we want to bound the error in the absolute difference. In addition, we have to be extremely careful while combining the three centroid (sub)sets constructed for each of the cases, and ensure that a good replacement $\tils$ found for a center $s$ in one of the cases does not adversely distort the error for a point that is being handled in another case.

\medskip\noindent\textbf{Related work.} Here we give an overview of the literature on  
coresets.  For a more exhaustive list, we refer to  \cite{Cohen-AddadSS21,DBLP:conf/stoc/HuangV20}. 
Coreset construction was popularized by an initial set of works that obtained small-sized coresets in low-dimensional Euclidean spaces \cite{DBLP:conf/stoc/Har-PeledM04,DBLP:journals/dcg/Har-PeledK07,frahling2005coresets}. Chen \cite{chen2009coresets} obtained the first coreset for Euclidean spaces with polynomial dependence on the dimension and the first coreset in  general metrics, where the size is $\Oh(k^2\epsilon^{-2}\log^2 n)$ for $k$-median. Subsequently, the dependence on the dimension has been further improved \cite{langberg2010universal,feldman2011unified}. Finally, such dependence was removed in   \cite{feldman2013turning,sohler2018strong}. See also \cite{becchetti2019oblivious,Cohen-AddadSS21,DBLP:conf/stoc/HuangV20,braverman2021coresets,Cohen-AddadLSS22} for recent improvements. 

Both $k$-median and $k$-means admit polynomial-time $\Oh(1)$-approximations in general metrics \cite{CharikarGTS-STOC99,DBLP:conf/icalp/CharikarL12,DBLP:journals/jacm/JainV01,DBLP:journals/siamcomp/LiS16,DBLP:journals/talg/ByrkaPRST17,DBLP:conf/focs/AhmadianNSW17}. Moreover, algorithms with improved approximation guarantees can be obtained that is FPT in $k$ and $\epsilon$
\cite{DBLP:conf/icalp/Cohen-AddadG0LL19}. 
Naturally, the problems have also been studied in specialized metrics. 
Polynomial-time approximation schemes (PTASes) are known for Euclidean $k$-median~\cite{Arora} and $k$-means~\cite{DBLP:journals/siamcomp/Cohen-AddadKM19,DBLP:journals/siamcomp/FriggstadRS19}. See \cite{cohen2021near,DBLP:journals/jacm/KumarSS10} for other improvements. 
Similar to geometric clustering, clustering in graphic setting is also widely studied.
PTASes are known for excluded-minor graphs~\cite{DBLP:journals/siamcomp/Cohen-AddadKM19,braverman2021coresets}. 
Also, FPT approximation schemes are known for graphs of bounded-treewidth~\cite{DBLP:conf/icml/BakerBHJK020} and graphs of bounded highway dimension \cite{becker2018polynomial,braverman2021coresets}.

\ 


\section{Coresets for Geometric Graphs}
To set up the stage, we need the following definition of \textit{centroid set} from \cite{Cohen-AddadSS21}. 

\begin{definition}[Centroid Set]
Consider any metric space $(\Omega,\d)$, a set of clients $P\subseteq \Omega$, and two positive integers $k$ and $z$. Let $\epsilon >0$ be a precision parameter. Given a set of centers $\A$, a set $\bbC$ is an $\A$-approximate centroid set for $(k,z)$-clustering on $P$ that satisfies the following property. 

For every set of $k$ centers $\calS\subseteq \Omega$, there exists $\tilS\subseteq \bbC$, such that for every $p\in P$ that satisfies  $\cost(p, \calS) \le \lr{\frac{10z}{\epsilon}}^z \cdot \cost(p, \A)$ or $\cost(p, \tilS) \le \lr{\frac{10z}{\epsilon}}^z \cdot \cost(p, \A)$, it holds that
\[|\cost(p, \calS)-\cost(p, \tilS)| \le \frac{\epsilon}{z\log (z/\epsilon)}(\cost(p, \mathcal{S}) + \cost (p, \A)).\]
\end{definition}

Informally, a centroid set $\bbC$ is a collection of candidate centers, potentially much smaller than $\Omega$, such that the $k$ centers in $\calS$ can be replaced by $k$ centers in $\tilS\subseteq\bbC$ without changing the cost of points by a large amount, that are much closer to $\calS$ or $\tilS$ compared to $\A$ w.r.t. $\d$. They proved that one can obtain coresets whose size depends only logarithmically on the size of any such centroid set.    

\begin{proposition}[\cite{Cohen-AddadSS21}]
\label{prop:coreset} Consider any metric space $(\Omega,\d)$, a set of points $P\subseteq \Omega$ with $n$ distinct points, and two positive integers $k$ and $z$. Let $\epsilon >0$ be a precision parameter. Suppose $\A$ be a given constant-factor approximation for $(k,z)$-clustering on $P$. 

Suppose there exists an $\A$-approximate centroid set for $(k,z)$-clustering on $P$. Then there exists a polynomial time algorithm that constructs with probability at least $1-\delta$ a coreset of size 
\[\Oh\bigg(\frac{2^{\Oh(z\log z)}\cdot \log^4 (1/\epsilon)}{\min\{\epsilon^2,\epsilon^z\}}(k\log |\bbC|+\log \log (1/\epsilon)+\log (1/\delta))\bigg)\]
with positive weights for $(k,z)$-clustering on $P$.
\end{proposition}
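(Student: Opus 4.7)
The plan is to apply the Feldman--Langberg sensitivity sampling paradigm, where the centroid set enters precisely to discretize and control the union bound over the uncountably many candidate solutions. First I would upper bound the sensitivity $s(p) := \sup_{\calS} \cost(p,\calS)/\cost(\calS)$ of each point $p \in P$, the supremum ranging over all $k$-center sets in $\Omega$. Using triangle inequality arguments together with the fact that $\A$ is a constant-factor approximation, one obtains an explicit upper bound on $s(p)$ in terms of $\cost(p,\A)$ and the size of $p$'s cluster in $\A$; summing over $P$ gives total sensitivity $\Oh(k)$. I would then draw $m$ points i.i.d., with $p$ chosen proportional to $s(p)$ and re-weighted by $\sum_q s(q)/(m \cdot s(p))$, so that for every fixed $\calS$ the weighted sampled cost is an unbiased estimator of $\cost(\calS)$.

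For a fixed candidate $\calS$, concentration follows from Bernstein's inequality after splitting $P$ into \emph{close} points $\{p : \cost(p,\calS) \le \lr{10z/\epsilon}^z \cost(p,\A)\}$ and \emph{far} points. For far points, $\cost(p,\calS)$ dominates $\cost(p,\A)$ by such a huge factor that the weighted estimator has small variance and range relative to $\cost(\calS)$, and a $(1\pm\epsilon)$-bound follows routinely. For close points, the target per-sample fluctuation is $\tfrac{\epsilon}{z\log(z/\epsilon)}\lr{\cost(p,\calS)+\cost(p,\A)}$; the finer slack $z\log(z/\epsilon)$ arises because a chaining/layering decomposition partitions contributions by distance scale in order to cope with the $z$-th power in the cost, and this is also the source of the $2^{\Oh(z\log z)}\log^4(1/\epsilon)$ overhead in the sample size.

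The main obstacle, and the reason the proposition is stated in this particular form, is the union bound over \emph{all} candidate $\calS$, since $\Omega$ is typically infinite. The centroid set exists precisely to resolve this: by its defining property, for any $\calS$ there is $\tilS \subseteq \bbC$ of cardinality $k$ that matches $\calS$ on close points to within $\tfrac{\epsilon}{z\log(z/\epsilon)}$ relative error. After absorbing this $\Oh(\epsilon)$ discrepancy into the concentration error, it suffices to union bound over the $|\bbC|^k$ possible choices of $\tilS$. Choosing
\[ m \;=\; \Theta\!\lr{\frac{2^{\Oh(z\log z)}\log^4(1/\epsilon)}{\min\{\epsilon^2,\epsilon^z\}}\,\lr{k\log|\bbC| + \log\log(1/\epsilon) + \log(1/\delta)}} \]
balances the Bernstein failure probability for each fixed $\tilS$ against this $|\bbC|^k$ union bound, yielding the claimed coreset size with overall success probability at least $1-\delta$. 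All ingredients --- sensitivities, the sampling distribution, and $\A$ itself --- are computable in polynomial time, so the algorithm runs in polynomial time as claimed.
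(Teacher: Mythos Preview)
The paper does not prove this proposition at all: it is quoted as a black-box result from Cohen-Addad, Saulpic, and Schwiegelshohn~\cite{Cohen-AddadSS21}, and the paper's own technical work lies entirely in establishing the \emph{existence} of a small centroid set (Theorem~\ref{thm:centroidset}) so that this proposition can be invoked. There is therefore no ``paper's own proof'' to compare against.

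That said, your sketch is a faithful high-level outline of how the cited result is actually obtained in~\cite{Cohen-AddadSS21}: importance sampling proportional to (bounds on) sensitivities derived from the constant-factor solution $\A$, a close/far decomposition of points matching exactly the threshold $\lr{10z/\epsilon}^z$ appearing in the centroid-set definition, concentration for a fixed solution, and then discretization of the solution space via the centroid set so that a union bound over $|\bbC|^k$ candidates suffices. The one place where your description is a bit loose is the origin of the $2^{\Oh(z\log z)}\log^4(1/\epsilon)/\min\{\epsilon^2,\epsilon^z\}$ factor: in~\cite{Cohen-AddadSS21} this does not come from a single Bernstein bound but from a more delicate variance-reduction and layering argument tailored to $z$-th powers, and the $\tfrac{\epsilon}{z\log(z/\epsilon)}$ slack in the centroid-set definition is calibrated precisely so that, after propagating through this analysis, one ends up with an $\Oh(\epsilon)$ coreset guarantee. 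But as a summary of the mechanism, and in particular of \emph{why} the centroid set is the right object and why only $\log|\bbC|$ enters the final bound, your proposal is on target.
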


First, note that the above coreset framework requires only the existence of such a centroid set. It is not needed to explicitly compute it. Indeed, such a centroid set is only used to bound the size of computed coresets in their analysis. 

Our main technical contribution is the theorem that guarantees the existence of a  {small-sized} centroid set when the geometric graph metrics satisfy two canonical geometric properties.  

\begin{theorem}
	\label{thm:centroidset} (Centroid Set Theorem) Consider the metric space $(V,\d_G)$ induced by any graph $G=(V,E)$ satisfying the \textsl{Locally Euclidean} and \textsl{Planar Spanner} properties defined before. Also consider a set of points $X\subseteq V$ and two positive integers $k$ and $z\ge 1$. Let $\epsilon >0$ be the precision parameter. Additionally, suppose $\A$ be a solution for $(k,z)$-clustering on $X$. Then there exists an $\A$-approximate centroid set $\mathbb{C}$ for $(k,z)$-clustering on $X$ of size $\exp (\Oh (\log^2 |X| + z^{16} \epsilon^{-8} (\log(z/\epsilon))^{8} \log|X| ))$. 
\end{theorem}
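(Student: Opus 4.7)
The plan is to construct the centroid set $\bbC$ as $\bbCN \cup \bbCS \cup \bbCL$, where each component targets a different regime of the shortest path $\pi_G(p,s)$ between a client $p \in X$ and a candidate center $s$. The split is governed by two natural thresholds: whether $|\lambda(p)\lambda(s)| \le c_2$, i.e., $s$ is a direct neighbor of $p$ in $G$ (the \emph{local regime}), and whether $|\pi_G(p,s)| \le L$ for $L = \Theta(z\log(z/\epsilon)/\epsilon)$, separating the \emph{short-path} from the \emph{long-path} regime.

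For the local regime I would build $\bbCN$ by overlaying, around each $p \in X$, a fine grid of side $\Theta(\epsilon c_1/z)$ on the Euclidean disk of radius $c_2$ centered at $\lambda(p)$, and selecting one representative vertex of $G$ per non-empty cell. For any $s$ adjacent to $p$ in $G$, the representative $\tils$ sharing a grid cell with $\lambda(s)$ satisfies $|\lambda(s)\lambda(\tils)| = \Oh(\epsilon c_1/z)$; by Locally Euclidean~(2) the difference $|w(p\tils) - w(ps)|$ is then an $\Oh(\epsilon/z)$-fraction of $w(ps)$, which after raising to the $z$-th power yields the precision needed for the centroid-set inequality. The short-path regime is handled by $\bbCS$, constructed analogously but with the grid spread over the larger disk of radius $\Theta(c_2 L)$ around $\lambda(p)$ and with a somewhat finer cell-size. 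Locally Euclidean~(1) guarantees that any $s$ within $L$ hops of $p$ has $\lambda(s)$ inside this enlarged disk, and a small adaptation of the Part~1 argument — bounding the additive edge-replacement error against the total length of the multi-edge path $\pi_G(p,s)$ — yields the required error bound.

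The long-path regime, which yields $\bbCL$, is the technically most delicate piece. Invoking the Planar Spanner property I take a planar $\alpha$-spanner $H$ of $G$ and recursively apply the planar shortest-path separator theorem: at each recursion node I remove $\Oh(1)$ shortest paths of $H$ whose union is a balanced vertex separator, producing a recursion tree of depth $\Oh(\log|X|)$. Along every separator path $\sigma$ discovered at any level I place a net of portals at intervals proportional to $\tfrac{\epsilon}{z\log(z/\epsilon)} \cdot \mathrm{length}(\sigma)$, and I put all such portals into $\bbCL$. For a long-path pair $(p,s)$, the shortest $G$-path must cross some separator $\sigma$, and I replace $s$ by the portal on $\sigma$ nearest in $H$ to the crossing vertex; this introduces two sources of error, namely geometric rounding by one net-interval and an $\alpha$-stretch from rerouting a single edge through $H$ near the crossing. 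Locally Euclidean forces $\d_G(p,s) = \Omega(|\pi_G(p,s)| \cdot c_1 c_3)$ for hop-long shortest paths, so both error sources become an $\Oh(\epsilon/z)$-fraction of $\d_G(p,s)$, exactly what the centroid-set definition demands.

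The hard part will be the coordination inside $\bbCL$: several clients may be assigned to the same center $s$ yet live in different pieces of the recursive decomposition, and the common replacement $\tils \in \bbC$ must approximate $s$ for all of them simultaneously. I plan to resolve this by selecting, for each $s \in \calS$, a portal on the highest-level separator crossed by every relevant $\pi_G(p,s)$ (essentially the lowest common ancestor in the recursion tree of the pieces containing those clients), and then arguing via a budgeting use of the $(10z/\epsilon)^z \cost(p,\A)$ slack in the centroid-set definition that this single $\tils$ also does no harm to the clients of $s$ whose approximation actually came from $\bbCN$ or $\bbCS$. For the size bound, $|\bbCN| + |\bbCS|$ contributes $|X| \cdot (z/\epsilon)^{\Oh(1)}$ candidate vertices, while $|\bbCL|$ inflates to $\exp(\Oh(\log^2|X|))$ because a candidate is effectively encoded by the sequence of separator choices along an $\Oh(\log|X|)$-deep root-to-leaf path of the recursion with $\mathrm{poly}(z/\epsilon,\log|X|)$ choices at each level; this is the source of the leading $\exp(\Oh(\log^2|X|))$ term, with the additional $z^{16}\epsilon^{-8}\log^8(z/\epsilon)\log|X|$ term tracking the polynomial grid-density factor from the two Euclidean buckets.
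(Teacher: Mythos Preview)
Your high-level split into $\bbCN \cup \bbCS \cup \bbCL$ matches the paper, but the construction of $\bbCL$ has a genuine gap. You propose to replace $s$ by a \emph{portal} on some separator path near where $\pi_G(p,s)$ crosses it. This cannot work: different clients $p, p' \in X_s$ assigned to the same center $s$ will in general have shortest paths crossing different separators, or the same separator at far-apart locations, so no single portal simultaneously approximates $\d_G(p,s)$ and $\d_G(p',s)$. Your proposed fix --- picking a portal on the highest-level common separator --- does not help, because even on a common separator the crossing vertices for $p$ and $p'$ may be far apart, and a portal near one crossing says nothing about the other.

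The paper's resolution is conceptually different: the landmarks on separator paths are not themselves candidate centers. Instead, for each root-leaf path in the recursive decomposition, the paper records for every vertex $s$ in the leaf region its \emph{rounded distance tuple} to all landmarks along that path (plus certain auxiliary points $q_1,q_2\in X$ chosen per separator, and a second Boolean-valued tuple for a degenerate case), and puts into $\bbCL$ one representative per realized tuple. The replacement $\tils$ is then the representative in $s$'s leaf region sharing $s$'s tuple. Because $s$ and $\tils$ have approximately equal distance to \emph{every} landmark, the triangle-inequality detour through whichever landmark happens to be near $\pi_G(p,s)$'s crossing works uniformly for every client $p$. This is also where the size bound actually comes from: the $z^{16}\epsilon^{-8}(\log(z/\epsilon))^{8}\log|X|$ term counts rounded-tuple values over $\Oh(\log|X|)$ separators with $\mathrm{poly}(z/\epsilon)$ landmarks and distance buckets each --- it is not, as you write, the contribution of the Euclidean grids.

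A secondary but real issue: your $\bbCN$ uses a grid of fixed sidelength $\Theta(\epsilon c_1/z)$, which yields an \emph{absolute} error $\Oh(\epsilon/z)$ in $\d_G(p,\tils)$, not an $\Oh(\epsilon/z)$-fraction of $w(ps)$ as you assert. If both $\d_G(p,s)$ and $\d_G(p,\A)$ are much smaller than $\epsilon/z$, the centroid-set inequality fails. The paper avoids this by scaling the net precision at each $p$ with $\d_G(p,\A)<1$ by $\Theta(\epsilon^3/z^3)\cdot\d_G(p,\A)$, tying the additive error directly to the $\cost(p,\A)$ slack available in the centroid-set definition. Finally, you do not address the reverse direction --- bounding $\cost(p,\calS)$ in terms of $\cost(p,\tilS)$ --- which is nontrivial here because the closest center to $p$ in $\tilS$ may be the replacement of some $s'\neq s$, and requires its own case analysis in the paper.
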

We prove Theorem \ref{thm:centroidset} in the following section. Then the desired coreset result follows by Proposition \ref{prop:coreset} and from \cite{braverman2021coresets}, with some minor changes due to our different bound on coreset-size. For completeness, we describe the proof. 



\begin{restatable}{theorem}{coreset}
	\label{cor:coreset}
	Consider the metric space $(V,\d_G)$ induced by any graph $G=(V,E)$ satisfying \textsl{Locally Euclidean} and \textsl{Planar Spanner} properties, a set $P\subseteq V$ with $n$ distinct points, and two positive integers $k$ and $z\ge 1$. Then there exists a polynomial time algorithm that constructs with probability at least $1-\delta$ a coreset for $(k,z)$-clustering on $P$ of size $\Oh(\epsilon^{-\Oh(z\log z)}k\log^2k \log^3 (1/\delta))$, where $z$ is a constant, and $\delta < 1/4$.  
\end{restatable}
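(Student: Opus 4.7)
The plan is to deduce Theorem \ref{cor:coreset} from our Centroid Set Theorem (Theorem \ref{thm:centroidset}) via the Cohen-Addad--Saulpic--Schwiegelshohn framework (Proposition \ref{prop:coreset}), coupled with an iterative bootstrapping step in the style of \cite{braverman2021coresets} in order to eliminate the $\log^2|X|$ dependence inherent in our centroid-set bound. The hard work has already been carried out in Theorem \ref{thm:centroidset}; the remaining task is to stitch it into Proposition \ref{prop:coreset} and then peel off the $n$-dependence.

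The first step is to precompute a constant-factor approximation $\A$ to $(k,z)$-clustering on $P$ in polynomial time, using any standard general-metric algorithm (e.g.\ local search or LP-based rounding); this $\A$ serves as input both to Proposition \ref{prop:coreset} and to Theorem \ref{thm:centroidset}. A direct application of Proposition \ref{prop:coreset} with $X=P$, using the centroid set $\bbC$ produced by Theorem \ref{thm:centroidset}, already yields a valid coreset $P_1$. However, its size contains $k\log|\bbC| = \Oh(k\log^2 n + k \cdot z^{16}\epsilon^{-8}(\log(z/\epsilon))^{8}\log n)$, which still depends on $n$ and is therefore not what we want.

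To eliminate this dependence, I would iterate: treat $P_1$ as a weighted client set and re-run the construction on it with rescaled precision $\epsilon/t$ and failure probability $\delta/t$, where $t$ will be chosen as a small constant. Writing $m_0 = n$ and
\[ m_{i+1} \;=\; \Oh\!\big(\epsilon^{-\Oh(z\log z)}\big(k\log^{2} m_i \;+\; k\cdot\mathrm{poly}(z/\epsilon)\log m_i \;+\; \log(t/\delta)\big)\big), \]
the recursion contracts rapidly: after only $t=\Oh(1)$ rounds the value stabilizes at $m_t = \Oh(\epsilon^{-\Oh(z\log z)}\, k\log^{2}k\,\log^{\Oh(1)}(1/\delta))$, since $\log^{2}m_i$ becomes logarithmic in its own output after just a couple of iterations. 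That a $(1\pm\epsilon/t)$-coreset of a $(1\pm\epsilon/t)$-coreset is still a $(1\pm\Oh(\epsilon/t))$-coreset follows from an elementary approximate-triangle-inequality argument on clustering cost, together with reweighting; composing $t$ such constructions, rescaling $\epsilon/t$ appropriately, and union-bounding failures over the rounds then yields a $(1+\epsilon)$-coreset of the stated size with success probability at least $1-\delta$.

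The main obstacle, such as it is, has already been absorbed into Theorem \ref{thm:centroidset}; the residual work is bookkeeping. The two points that require care are (i) verifying that the composition of weighted coresets is itself a coreset at each step, for which the standard relative-error triangle inequality for $\cost(\cdot,\mathcal{S})$ suffices, and (ii) carefully tracking the growth of both the precision and the failure probability across rounds, so that rescaling $\epsilon \mapsto \epsilon/t$ and $\delta \mapsto \delta/t$ (together with the $\log(1/\delta)$ factor already present in Proposition \ref{prop:coreset} and the $\log|\bbC|$ that still depends logarithmically on $\log(1/\delta)$ through $m_i$) compounds into at most the stated $\log^{3}(1/\delta)$ factor. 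No new geometric tool beyond Theorem \ref{thm:centroidset} and the iteration scheme of \cite{braverman2021coresets} is required.
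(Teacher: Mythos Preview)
Your high-level plan---combine Theorem~\ref{thm:centroidset} with Proposition~\ref{prop:coreset} and then iterate \`a la \cite{braverman2021coresets} to kill the residual $n$-dependence---is exactly the paper's approach. The gap is in the iteration analysis: your claim that a \emph{constant} number of rounds suffices is incorrect. The recursion $m_{i+1} = C\cdot \log^{\Oh(1)} m_i$ (with $C$ depending on $k,\epsilon,\delta,z$) reaches its fixed point only after $\Theta(\log^* n)$ steps from $m_0=n$; after any fixed number $t$ of rounds one still has $m_t$ depending on $\log^{(t)} n$, which is unbounded as $n\to\infty$.

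Once you take $t=\Theta(\log^* n)$, your uniform rescaling $\epsilon\mapsto\epsilon/t$ introduces a further problem: the size bound in each round carries a factor $(\epsilon/t)^{-\Oh(z\log z)}$, so the final coreset picks up an extra $(\log^* n)^{\Oh(z\log z)}$ factor and is again not independent of $n$. The paper avoids this by a non-uniform schedule $\epsilon_i=\epsilon/(\log^{(i)} n)^{1/\rho}$ and $\delta_i=\delta/|X_{i-1}|$: the $\epsilon_i$ grow geometrically along the iteration, so $\sum_i\epsilon_i=\Oh(\epsilon)$ (controlling the composed error), while at each step the factor $\epsilon_i^{-\alpha}=(\log^{(i)} n)^{\alpha/\rho}$ is calibrated to be exactly cancelled by the shrinkage of $\log^{\Oh(1)} m_{i-1}$. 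This is the missing idea; the rest of your bookkeeping (coreset composition, union-bounding failures, tracking the $\log(1/\delta)$ factor) is fine.
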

\begin{proof}
	For proving the theorem, we apply the Iterative size reduction algorithm \cite{braverman2021coresets}. Let $t$ be the largest integer such that $\log^{(t-1)} n\ge \max\{125k\epsilon^{-\rho}\log (1/\delta), \rho 2^{\rho+1}\}$ for a sufficiently large constant $\rho$ to be set later. Also, let $X_0=P$. For $1\le i\le t$, set $\epsilon_i=\epsilon/(\log^{(i)} n)^{1/\rho}$  and $\delta_i=\delta/|X_{i-1}|$. Let $X_i$ be the coreset computed by applying the algorithm of  \Cref{prop:coreset} on the client set $X_{i-1}$ with $\epsilon$ being $\epsilon_i$ and $\delta$ being $\delta_i$ for $1\le i\le t$. Finally, let $U$ be the coreset returned by applying the same algorithm on $X_t$ with approximation guarantee $1+\epsilon$ and failure probability $\delta$. 

First we argue about the rate of decrement in coreset sizes between two consecutive iterations. Let $n_i$ be the size of $X_i$ for $1\le i\le t$. Then,

%
	\begin{align*}
		n_i & = \frac{2^{\Oh(z\log z)}\cdot\log^4 (1/\epsilon_i)}{\min\{\epsilon_i^2,\epsilon_i^z\}}(k\log |\bbC_i|+\log \log (1/\epsilon_i)+\log (1/\delta_i)) \tag{$\bbC_i$ is the centroid set w.r.t. $X_i$}\\
		& = \frac{2^{\Oh(z\log z)}\cdot\log^4 (1/\epsilon_i)}{\min\{\epsilon_i^2,\epsilon_i^z\}}(k\log^2 n_{i-1}+kz^9 \epsilon_i^{-5}  \log n_{i-1}+\log \log (1/\epsilon_i)+\log (1/\delta_i))\\
		& \le \epsilon_i^{-\alpha} k\log^2 n_{i-1} \log (1/\delta_i) \tag{$\alpha=\Oh(z\log z)$}\\
		& \le \epsilon_i^{-\alpha} k\log^2 n_{i-1} (\log n_{i-1}+\log (1/\delta))\\
		& \le \epsilon_i^{-\alpha} k\log^3 n_{i-1} \log (1/\delta)
	\end{align*}
	
	The last inequality follows, as $n_{i-1} \ge 4$ and $\delta < 1/4$. Let $\Gamma=k\epsilon^{-\alpha}\log (1/\delta)$. Next, we prove that $n_i\le 125\Gamma (\log^{(i)} n)^4$ for $1\le i\le t$. We use induction on $i$. In the base case, $n_1\le \epsilon_1^{-\alpha} k\log^3 n \log (1/\delta)=k\epsilon^{-\alpha}\log (1/\delta) (\log n)^{\alpha/\rho} \log^3 n =\Gamma \log^4 n\le 125 \Gamma \log^4 n$. The second last equality follows by setting the value of $\rho$ to be equal to $\alpha$. Now, consider any $i \ge 2$. 
	
	\begin{align*}
		n_i &\le \epsilon_i^{-\alpha} k\log^3 n_{i-1} \log (1/\delta)\\
		& \le (\epsilon/(\log^{(i)} n)^{1/\rho})^{-\alpha} k\log^3 n_{i-1} \log (1/\delta)\\
		& = k\epsilon^{-\alpha}\log (1/\delta) (\log^{(i)} n)^{\alpha/\rho} \log^3 n_{i-1}\\
		& = \Gamma (\log^{(i)} n) \log^3 n_{i-1} \tag{as $\rho = \alpha$}\\
		& \le \Gamma (\log^{(i)} n) (\log (125\Gamma (\log^{(i-1)} n)^4))^3\\
		& \le \Gamma (\log^{(i)} n) (\log (125\Gamma) + 4\log^{(i)} n)^3\\
		& \le \Gamma (\log^{(i)} n) (\log^{(i)} n+4\log^{(i)} n)^3 \tag{by definition of $t$, $\log^{(i-1)} n\ge 125k\epsilon^{-\rho}\log (1/\delta)=125\Gamma$}\\
		& = 125\Gamma (\log^{(i)} n)^4
	\end{align*}

It follows that $n_t \le 125\Gamma (\log^{(t)} n)^4$. Now, by definition of $t$, $\log^{(t)} n < \max\{125\Gamma, \rho 2^{\rho+1}\}$. Hence, $n_t \le \max\{(125\Gamma)^5,125\Gamma (\rho 2^{\rho+1})^4\}$. Next, we analyze the error in the following claim.
\begin{restatable}{claim}{epsproof} \label{cl:epsproof}
	$\prod_{i=1}^t (1+\epsilon_i) \le 1+10\epsilon$.
\end{restatable}

\begin{proof}
	By definition, $X_t$ is a coreset with approximation guarantee at most $\Pi_{i=1}^t (1+\epsilon_i)$. Now, by our assumption, $\log^{(i-1)} n\ge \rho 2^{\rho+1}$ for $i\le t$. For such value of $\log^{(i-1)} n$, 
	
	\begin{align*}
		& \log^{(i-1)} n\ge 2^{\rho} \log^{(i)} n\\
		& \Rightarrow (\log^{(i-1)} n)^{1/\rho} \ge 2 (\log^{(i)} n)^{1/\rho}\\
		& \Rightarrow \epsilon/(\log^{(i-1)} n)^{1/\rho}\le (1/2)\cdot (\epsilon/(\log^{(i)} n)^{1/\rho})\\
		& \Rightarrow \epsilon_i\ge 2\epsilon_{i-1}\\
		& \Rightarrow \sum_{i=1}^t \epsilon_i \le 2\epsilon_t
	\end{align*}
	
	It follows that \[\prod_{i=1}^t (1+\epsilon_i)\le \exp\lr{\sum_{i=1}^t \epsilon_i}\le \exp\lr{2\epsilon_t}\le \exp\lr{\frac{2\epsilon}{(\log^{(t)} n)^{1/\rho}}}\le \exp\lr{2\epsilon}\le 1+10\epsilon.\]
	The second last inequality follows from the fact that $\log^{(t)} n\ge 1$, which is true as $\log^{(t-1)} n\ge \rho2^{\rho+1}$.   
\end{proof}

Next, we analyze the failure probability. First note that we can assume that the size of $X_{i-1}$ is at least $\log^{(i-1)} n$. Otherwise, we can add arbitrary points to the coreset $X_{i-1}$ and increase its size. Hence, $n_{i-1}\ge \log^{(i-1)} n$. Then, $\delta_i=\delta/n_{i-1}\le \delta/\log^{(i-1)} n$. So, the total failure probability is at most, 

\[\sum_{i=1}^t \delta_i=\delta \lr{\frac{1}{n}+\frac{1}{\log n}+\ldots+\frac{1}{\log^{(t-1)} n}}=\Oh(\delta)\]
The last inequality follows, as $\log^{(t-1)} n\ge \rho2^{\rho+1}\ge 4$.     

In the above we showed that $X_t$ is a coreset of size at most $\max\{(125\Gamma)^5,125\Gamma (\rho 2^{\rho+1})^4\}$ with approximation guarantee $1+\Oh(\epsilon)$ and failure probability at most $\Oh(\delta)$. Hence, the set $U$ obtained by applying the  algorithm of  \Cref{prop:coreset} on $X_t$ is a coreset with approximation guarantee $1+\Oh(\epsilon)$, failure probability at most $\Oh(\delta)$ and size at most 

\begin{align*}
	\epsilon^{-\alpha} k\log^2 |X_t| \log (1/\delta) & = \Oh(\epsilon^{-\alpha}k\log (1/\delta)\log^2 \Gamma)\\
	&=\Oh(\epsilon^{-\alpha}k\log (1/\delta)(\log k+\log (1/\epsilon)+\log\log (1/\delta))^2)\\
	&=\Oh(\epsilon^{-\beta}k\log^2k \log^3 (1/\delta))
\end{align*}
where $\beta = \Oh(\alpha) = \Oh(z\log z)$. Scaling $\epsilon$ and $\delta$ by constants we obtain the desired bound. Finally, as $t+1=\Oh(\log^* n)$, we apply the $\Oh(n)$ time algorithm in  \Cref{prop:coreset} $\Oh(\log^* n)$ times. Moreover, we need to compute an approximate solution $\A$ in order to apply this algorithm. For this purpose, we use the $\tilde{\Oh}(nk)$ time algorithm of Mettu and Plaxton \cite{mettu2004optimal} that returns a constant approximation.
\end{proof}

\newcommand{\btrue}{\texttt{true}}
\newcommand{\bfalse}{\texttt{false}}

\section{Proof of the Centroid Set Theorem}

In this section, we prove our main result, existence of a small-sized centroid set. 
Recall that we are given $G = (V, E)$, a connected, undirected, and edge-weighted graph on $n$ vertices. Moreover, $G$ satisfies two canonical geometric properties: (1) \textsl{Locally Euclidean}, and (2) \textsl{Planar Spanner}.

As $(V,\d_G)$ is our metric space, we use the terms points and vertices interchangeably. $X\subseteq V$ is the given set of points. We are also given $\A$, a solution for $(k,z)$-clustering on $X$. We prove that there exists an $\A$-approximate centroid set of size $\exp (\Oh (\log^2 |X| + z^{16} \epsilon^{-8} (\log(z/\epsilon))^{8} \log|X|))$ for $(k,z)$-clustering on $X$, which satisfies the following property. 

For every set of $k$ centers $\calS\subseteq V$, there exists $\tilS\subseteq \bbC$, such that for every $p\in X$ that satisfies  $\cost(p, \calS) \le \lr{\frac{10z}{\epsilon}}^z \cdot \cost(p, \A)$ or $\cost(p, \tilS) \le \lr{\frac{10z}{\epsilon}}^z \cdot \cost(p, \A)$, it holds that 
\[|\cost(p, \calS)-\cost(p, \tilS)| \le \frac{\epsilon}{z\log (z/\epsilon)}(\cost(p, \mathcal{S}) + \cost (p, \A)).\]

\subsection{Useful Properties}

In the analysis, we will use a few consequences of the \textsl{Locally Euclidean} property, which we state in the following.

\begin{proposition}[Bounded Distance.] \label{prop:bounded-distance}
	There exist universal constants $c'_1, c'_2$ such that, the weight of any edge of $G$ is at most $c'_2$, and for any $\tau \ge 2$, and for any $u, v \in V(G)$, if $|\pi_G(u, v)| = \tau$, then $\d_G(u, v)\ge c'_1 \tau$. That is, $\d_G(u, v) = \Theta(\tau)$ for $\tau \ge 2$. 
\end{proposition}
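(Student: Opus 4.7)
The plan is to handle the two assertions separately, each a short consequence of the \textsl{Locally Euclidean} axioms.

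For the edge-weight bound, I would pick any edge $uv \in E(G)$ and invoke the contrapositive of the first clause of property~(1) of \textsl{Locally Euclidean}: since $uv$ is an edge, we cannot have $|\lambda(u)\lambda(v)| > c_2$, so $|\lambda(u)\lambda(v)| \le c_2$. Combining this with the upper inequality in clause~(2) gives $w(uv) \le c_4 \, |\lambda(u)\lambda(v)| \le c_4 c_2$, so one can set $c'_2 := c_4 c_2$.

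For the path-length lower bound, fix $u, v \in V(G)$ and, without loss of generality, take $\pi_G(u,v)$ to be a shortest $u$-$v$ path with the \emph{fewest edges} among all shortest paths (this is a harmless canonical choice and is consistent with the use of $\pi_G$ elsewhere). Write $u = p_0, p_1, \ldots, p_\tau = v$. The key step is a ``no shortcut'' claim: if $\tau \ge 2$, then for every pair $i < j$ with $j - i \ge 2$ we have $|\lambda(p_i)\lambda(p_j)| > c_1$. Indeed, if $|\lambda(p_i)\lambda(p_j)| \le c_1$, clause~(1) of \textsl{Locally Euclidean} forces $p_ip_j \in E(G)$, and clause~(2) guarantees that this single edge is itself a shortest path from $p_i$ to $p_j$. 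Replacing the sub-path $p_i,\ldots,p_j$ of $\pi_G(u,v)$ with that one edge produces a shortest $u$-$v$ path with strictly fewer edges, contradicting minimality of $\tau$.

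Given the claim, I would pair up consecutive edges along the path and apply the planar triangle inequality: for every $i$ with $2i+2 \le \tau$,
\[
|\lambda(p_{2i})\lambda(p_{2i+1})| + |\lambda(p_{2i+1})\lambda(p_{2i+2})| \;\ge\; |\lambda(p_{2i})\lambda(p_{2i+2})| \;>\; c_1.
\]
Summing over $i = 0, 1, \ldots, \lfloor \tau/2 \rfloor - 1$ and noting that $\lfloor \tau/2 \rfloor \ge \tau/4$ for $\tau \ge 2$ yields $\sum_{k=0}^{\tau-1} |\lambda(p_k)\lambda(p_{k+1})| \ge c_1 \tau / 4$. Passing from Euclidean lengths back to edge weights by the lower bound in clause~(2),
\[
\d_G(u,v) \;=\; \sum_{k=0}^{\tau-1} w(p_k p_{k+1}) \;\ge\; c_3 \sum_{k=0}^{\tau-1} |\lambda(p_k)\lambda(p_{k+1})| \;\ge\; \frac{c_1 c_3}{4}\,\tau,
\]
so setting $c'_1 := c_1 c_3 / 4$ finishes the proof. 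The matching upper bound $\d_G(u,v) \le c'_2 \tau$ implicit in ``$\d_G(u,v) = \Theta(\tau)$'' is immediate from the first part, since each of the $\tau$ edges has weight at most $c'_2$.

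The only genuine subtlety is the canonical choice of $\pi_G$: without insisting on an edge-minimal representative, a shortest path could in principle contain stretches of short edges (between vertices that $\lambda$ places very close together) that shortcut through an edge $p_ip_j$ whose presence would sabotage the ``no shortcut'' claim. Restricting to an edge-minimal shortest path rules this out and is the one spot where care is needed; the rest is a routine two-line deduction from the \textsl{Locally Euclidean} axioms.
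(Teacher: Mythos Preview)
Your argument is correct and follows essentially the same route as the paper: bound edge weights via the \textsl{Locally Euclidean} axioms, then take an edge-minimal shortest path, use the ``no shortcut'' observation to force $|\lambda(p_i)\lambda(p_{i+2})| > c_1$, apply the Euclidean triangle inequality on consecutive pairs, and multiply by $c_3$. Your version is in fact slightly cleaner in one place: you correctly deduce $|\lambda(u)\lambda(v)| \le c_2$ from $uv \in E(G)$ by contraposition (yielding $c'_2 = c_2 c_4$), whereas the paper's proof writes $c_1$ there, which appears to be a typo; your constant $c'_1 = c_1 c_3/4$ is a bit looser than necessary ($\lfloor \tau/2 \rfloor \ge \tau/3$ for $\tau \ge 2$ would give $c_1 c_3/3$), but this is immaterial.
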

\begin{proof}
	Let $c_1, c_2, c_3, c_4 \ge 0$ be the constants such that $G$ and the embedding $\lambda$ satisfies the \textsl{Locally Euclidean} property. Consider an edge $uv \in E(G)$. It follows that $|\lambda(u) \lambda(v)| \le c_1$, which implies that $w(uv) \le c_4 \cdot |\lambda(u) \lambda(v)| \le c_4 \cdot c_1$. Hence, the weight of any edge is bounded by $c'_2 \coloneqq c_1 \cdot c_4$.
	
	Now, consider a minimum-hop shortest path $\pi(u, v) = (u = w_0, w_1, \ldots, w_{\tau} = v)$, where $\tau \ge 1$. For any $0 \le i \le t-2$, observe that $w_i w_{i+2} \not\in E(G)$ -- otherwise we can short-cut and obtain a path of smaller length as well as smaller number of hops. This implies that, $|\lambda(w_{i}) \lambda(w_{i+2})| > c_1$. Then, by triangle inequality, it follows that $ |\lambda(w_{i}) \lambda(w_{i+1})| +  |\lambda(w_{i+1}) \lambda(w_{i+2})| > c_1$.
	On the other hand, $\d_G(w_{i}, w_{i+2}) = \d_G(w_i, w_{i+1}) + \d_{G}(w_{i+1}, w_{i+2}) \ge c_3 \cdot ( |\lambda(w_{i}) \lambda(w_{i+1})| +  |\lambda(w_{i+1}) \lambda(w_{i+2})| ) > c_3 \cdot c_1$. That is, the length of each sub-path of $\pi(u, v)$ of length $2$ is at least a constant. It follows that, $\d_G(u, v) \ge c_3 \cdot c_1 \cdot \lfloor \tau/2 \rfloor \ge c'_1 \cdot \tau$ if $\tau \ge 2$.  
\end{proof}

\begin{proposition}[Construction of $\mu$-net] \label{prop:mu-net}
	Let $0 < \mu < c_1/\sqrt{2}$, and $t \ge 0$. Then, for any $v \in V(G)$, let $B(v, r) = \{ u \in V(G) : \d_G(u, v) \le r \}$. Then, there exists a subset $B' \subseteq B(v, r)$ of size $\Oh(r^2/\mu^2)$ such that for any $u \in B(v, r)$, there exists $u' \in B'$ such that $\d_G(u, u') = \Oh(\mu)$. Such a set $B'$ is called a $\mu$-net of $B(v, r)$.
\end{proposition}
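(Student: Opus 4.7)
\medskip\noindent\textbf{Proof plan for the $\mu$-net proposition.}
The plan is to project $B(v, r)$ into the plane via the embedding $\lambda$, impose a sufficiently fine grid, and pick one representative per non-empty cell. The \textsl{Locally Euclidean} property will ensure that (i) the image of $B(v, r)$ sits inside a Euclidean disk of radius $O(r)$, so the grid intersects $O(r^2/\mu^2)$ cells, and (ii) each representative is joined to any other vertex in its cell by a single edge of weight $O(\mu)$, which is also the graph distance since every edge is a shortest path.

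First I would bound the Euclidean spread of $B(v,r)$. Fix any $u \in B(v,r)$ and a shortest path $u = w_0, w_1, \ldots, w_\tau = v$. Each edge $w_i w_{i+1}$ satisfies $|\lambda(w_i)\lambda(w_{i+1})| \le w(w_iw_{i+1})/c_3$ by the lower bound in \textsl{Locally Euclidean} property (2). Summing and using triangle inequality in $\real^2$ gives $|\lambda(u)\lambda(v)| \le \d_G(u,v)/c_3 \le r/c_3$. Hence $\lambda(B(v,r))$ lies in the closed Euclidean disk $D$ of radius $r/c_3$ centered at $\lambda(v)$.

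Next I would overlay an axis-parallel grid on $\real^2$ with cell side length $\mu$. Since $\mu < c_1/\sqrt{2}$, any two points in the same cell have Euclidean distance at most $\mu\sqrt{2} < c_1$. The number of grid cells meeting $D$ is $O((r/c_3)^2/\mu^2) = O(r^2/\mu^2)$. For each cell that contains at least one vertex of $B(v,r)$, pick an arbitrary such vertex and collect them into $B' \subseteq B(v,r)$; then $|B'| = O(r^2/\mu^2)$ as required.

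Finally I would verify the covering guarantee. Take any $u \in B(v,r)$, and let $u' \in B'$ be the representative chosen in $u$'s cell. Then $|\lambda(u)\lambda(u')| \le \mu\sqrt{2} < c_1$, so by \textsl{Locally Euclidean} property (1) we have $uu' \in E(G)$, and property (2) tells us both that $uu'$ is itself a shortest $u$-$u'$ path and that $w(uu') \le c_4 |\lambda(u)\lambda(u')| \le c_4\sqrt{2}\,\mu$. Therefore $\d_G(u,u') = w(uu') = O(\mu)$, completing the proof. The argument is essentially a bookkeeping exercise with the constants $c_1, c_3, c_4$; the only mild subtlety is picking the grid cell size so that it is simultaneously small enough to force same-cell vertices to be adjacent (needing $\mu\sqrt{2} < c_1$) and large enough so the count $O(r^2/\mu^2)$ suffices.
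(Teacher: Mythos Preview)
Your proof is correct and follows the same overall grid-and-representative strategy as the paper. The one notable difference is in how you bound the Euclidean spread of $B(v,r)$: you sum the per-edge inequality $|\lambda(w_i)\lambda(w_{i+1})| \le w(w_iw_{i+1})/c_3$ along a shortest path and apply the triangle inequality in $\real^2$, obtaining $|\lambda(u)\lambda(v)| \le r/c_3$ in one line. The paper instead splits into cases on the size of $r$ (whether $r \le c_1$, $c_1 < r < c_2$, or $r > c_2$) and, in the last case, invokes the Bounded Distance property (\Cref{prop:bounded-distance}) to relate hop-length to $\d_G$ and then to Euclidean distance. Your argument is more direct and avoids the case analysis entirely; the paper's detour through hop-counts is unnecessary here since the edge-weight lower bound already controls Euclidean length per edge.
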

\begin{proof}
	Consider an embedding $\lambda: V(G) \to \real^2$ guaranteed by \textsl{Locally Euclidean} property. Consider a grid of sidelength $\mu < c_1 \sqrt{2}$. It follows that for any two points contained in a grid cell, the corresponding vertices are neighbors in $G$. Furthermore, for any neighbors $u, v \in V(G)$, it follows that $|\lambda(v) \lambda(f(v))| \le \sqrt{2} \mu \le c_1$, which implies that $\d_G(v, f(v)) = w(v f(v)) \le c_4 \cdot |\lambda(v) f(v)| \le c_4 \cdot \sqrt{2} \mu = \Oh(\mu)$. To construct $B'$, we arbitrarily select one point from each cell $C \cap B(v, r)$, provided that $C \cap B(v, r) \neq \emptyset$. 
	
	To bound the size of $B'$, we consider two cases. If $r \le c_1$, then all vertices in $B(v, r)$ are neighbors of $v$. It follows that all the corresponding points are contained in a disk of radius $r/c_3$ around $\lambda(v)$. The number of grid cells of sidelength $\mu$ contained in this disk can be upper bounded by $\Oh(r^2/\mu^2)$, and we add one point from each such cell to $B'$. If $c_1 < r < c_2$, then we can again use a similar argument to upper bound the number of cells by $\Oh(c_2^2/\mu^2) = \Oh(r^2/\mu^2)$.
	
	Otherwise, suppose $r > c_2$. Consider a vertex $w$ such that $\d_G(v, w) \le r$, but $v$ and $w$ are not neighbors in $G$. It follows that $|\pi_G(v, w)| = \tau \ge 2$. It follows that $|\lambda(v) \lambda(w)| \le c_4 \tau$. On the other hand, \Cref{prop:bounded-distance} implies that $\d_G(v, w) \ge c'_1 \tau$. This implies that $|\lambda(v) \lambda(w)| \le (c_4/c'_1) \cdot r$. It follows that a point corresponding to a vertex in $B(v, r)$ is contained in a disk of radius $\Oh(r)$ around $\lambda(v)$. The number of cells of sidelength $\mu$ inside this disk is $\Oh(r^2/\mu^2)$.
\end{proof}

\begin{proposition}[Bounded Degree Support Graph] \label{prop:support-graph}
	For any \emph{precision value} $0 < \mu < c_1/\sqrt{2}$, there exists a graph $H$ with the following properties (i) $V(H)\subseteq V(G)$, the maximum vertex degree of $H$ is bounded by a polynomial function $\Delta(1/\mu) = \Oh(1/\mu^2)$, and (iii) there is a mapping $f:V(G)\rightarrow V(H)$ such that for any $v\in V(G)$, $\d_G(v,f(v))= \Oh(\epsilon)$ and for any path $u_1,u_2,\ldots,u_t$ of $G$, $H$ contains the path $f(u_1),f(u_2),\ldots,f(u_t)$.  
\end{proposition}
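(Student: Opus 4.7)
The plan is to build $H$ as a grid-quotient of $G$. Using the embedding $\lambda:V(G)\to\real^2$ provided by the \textsl{Locally Euclidean} property, overlay an axis-aligned grid on $\real^2$ of side length $\mu$. Because $\mu < c_1/\sqrt{2}$, any two vertices whose images lie in the same cell are at Euclidean distance at most $\mu\sqrt{2} < c_1$, hence are adjacent in $G$ by property~(1). From each non-empty cell $C$, pick an arbitrary representative vertex $v_C\in V(G)$ with $\lambda(v_C)\in C$; let $V(H)$ be the set of all such representatives (so $V(H)\subseteq V(G)$ automatically), and define the mapping $f(v) := v_C$ whenever $\lambda(v)\in C$.

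Next, define the edge set by quotienting edges of $G$:
\[ E(H) \;:=\; \{\, f(u)f(v) : uv\in E(G),\ f(u)\neq f(v)\,\}. \]
Verifying the easy properties: for any $v\in V(G)$, both $\lambda(v)$ and $\lambda(f(v))$ lie in the same cell, so their Euclidean distance is at most $\mu\sqrt{2} < c_1$; by \textsl{Locally Euclidean}~(1) the edge $v\,f(v)$ exists in $G$, and by part~(2) its weight is at most $c_4\mu\sqrt{2}=\Oh(\mu)$, giving $\d_G(v,f(v))=\Oh(\mu)$. The path property is immediate from the definition of $E(H)$: for any path $u_1,u_2,\ldots,u_t$ in $G$, each consecutive pair satisfies either $f(u_i)=f(u_{i+1})$ or $f(u_i)f(u_{i+1})\in E(H)$, so the image sequence is a walk in $H$ (in the natural loose sense in which the proposition is phrased).

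The main step, and the only nontrivial calculation, is the degree bound. Fix $w\in V(H)$; a vertex $w'\in V(H)$ is a neighbor of $w$ in $H$ iff there exist $u,v\in V(G)$ with $f(u)=w$, $f(v)=w'$, and $uv\in E(G)$. By \textsl{Locally Euclidean}~(1) we have $|\lambda(u)\lambda(v)|\le c_2$, while $|\lambda(w)\lambda(u)|\le\mu\sqrt{2}$ and $|\lambda(w')\lambda(v)|\le\mu\sqrt{2}$ since $u,w$ (resp.\ $v,w'$) share a cell. The triangle inequality then gives
\[ |\lambda(w)\lambda(w')| \;\le\; c_2 + 2\mu\sqrt{2} \;=\; \Oh(1). \]
Thus $\lambda(w')$ is contained in one of the $\Oh(1/\mu^2)$ grid cells lying inside a disk of radius $\Oh(1)$ centered at $\lambda(w)$, and each such cell contains at most one representative. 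Hence $\deg_H(w)=\Oh(1/\mu^2)$, which is the claimed $\Delta(1/\mu)$.

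The main obstacle is really just carefully combining the two halves of the \textsl{Locally Euclidean} property: part~(1) is what bounds the Euclidean distance between images of adjacent vertices and ultimately caps how many cells $w$ can reach, while part~(2) is what converts ``same cell'' into an $\Oh(\mu)$ graph-distance bound for $\d_G(v,f(v))$. Note that this construction needs only the \textsl{Locally Euclidean} property; the \textsl{Planar Spanner} property is not used here.
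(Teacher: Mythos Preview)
Your proof is correct and essentially identical to the paper's: both overlay a grid of side $\mu$, pick one representative per nonempty cell, define $f$ by cell membership, add an edge between two representatives whenever their cells contain $G$-adjacent vertices, and bound the degree by counting cells within Euclidean distance $c_2+\Oh(\mu)$. Your write-up is in fact slightly more explicit than the paper's (e.g., spelling out the triangle-inequality step for the degree bound), and you correctly note that the bound on $\d_G(v,f(v))$ is $\Oh(\mu)$ rather than the $\Oh(\epsilon)$ stated in the proposition.
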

\begin{proof}
	Consider a gridcell of sidelength $\mu$ as in the previous proposition, which implies that any two points belonging to a grid cell are neighbors in $G$. Now, we construct support graph $H$ in the following way. We select any arbitrary point from each grid cell that contains a point in $V(G)$ and add this special point to $V(H)$. The mapping $f$ is constructed in the following way. Map each vertex $v\in V(G)$ to the special point of the grid cell that contains $v$. It follows that $|\lambda(v) \lambda(f(v))| \le \sqrt{2} \mu \le c_1$. It also follows that $\d_G(v, f(v)) = w(v f(v)) \le c_4 \cdot |\lambda(v) f(v)| \le c_4 \cdot \sqrt{2} \mu = \Oh(\mu)$.
	
	Next, we define the edges of $H$. Consider two gridcells $C_1$ and $C_2$ with their special points $p_1$ and $p_2$, respectively. We add the edge $p_1p_2$ to $H$ if $C_1$ and $C_2$ contain two points that are neighbors in $G$. Thus, the maximum distance between two special points that can contain an edge is at most $c_2+\Oh(\mu)$. As we pick at most one special point from each grid cell, it follows that the degree of any special point in $H$ is bounded by $\Oh(1/\mu^2)$. Now, consider any path $u_1,u_2,\ldots,u_t$ of $G$. Then the edge between two special points $f(u_i)$ and $f(u_{i+1})$ exists in $H$ for $1\le i\le t-1$. Hence, the path $f(u_1),f(u_2),\ldots,f(u_t)$ is in $H$.
\end{proof}



In the following analysis, we will often need the following generalization of triangle inequality that works for powers of distances.
\begin{lemma}[Triangle Inequality for Powers \cite{MakarychevMR19}] \label{lem:triangle}
	Let $\epsilon > 0$ and $z \ge 1$.
	\begin{enumerate}
		\item For any non-negative reals $x, y$, $ (x+y)^z \le (1+\epsilon)^{z-1} x^z + \lr{\frac{1+\epsilon}{\epsilon}}^{z-1} y^z.$
		\item Let $(\Omega, \d)$ be a metric space, and $u, v, w \in \Omega$, then,
		\[ \d(u, v)^z \le (1+\epsilon)^{z-1} \cdot (\d(u, v))^z + \lr{\frac{1+\epsilon}{\epsilon}}^{z-1} (\d(v, w))^z.  \]
	\end{enumerate}
\end{lemma}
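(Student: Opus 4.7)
The plan is to prove part (1) by using the convexity of the map $t \mapsto t^z$ for $z \ge 1$, and then derive part (2) as an immediate corollary by combining the (ordinary) triangle inequality with part (1).

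For part (1), I would introduce the parameter $\lambda = \tfrac{1}{1+\epsilon} \in (0,1)$, so that $1 - \lambda = \tfrac{\epsilon}{1+\epsilon}$. I would then write the sum $x + y$ as a convex combination in the form
\[ x + y = \lambda \cdot \frac{x}{\lambda} + (1-\lambda) \cdot \frac{y}{1-\lambda}. \]
Since $t \mapsto t^z$ is convex on $[0, \infty)$ for $z \ge 1$, Jensen's inequality gives
\[ (x+y)^z \le \lambda \cdot \lr{\frac{x}{\lambda}}^z + (1-\lambda) \cdot \lr{\frac{y}{1-\lambda}}^z = \lambda^{1-z} x^z + (1-\lambda)^{1-z} y^z. \]
Substituting the chosen $\lambda$, we have $\lambda^{1-z} = (1+\epsilon)^{z-1}$ and $(1-\lambda)^{1-z} = \lr{\tfrac{1+\epsilon}{\epsilon}}^{z-1}$, which yields exactly the claimed bound. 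The only minor care needed is the degenerate case $z = 1$, where both coefficients become $1$ and the inequality reduces to the trivial $(x+y) \le x + y$; and the edge cases $x = 0$ or $y = 0$, where the bound is immediate.

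For part (2) (reading the left-hand side as $\d(u,w)^z$, which is the only statement that makes sense), I would simply apply the ordinary metric triangle inequality $\d(u,w) \le \d(u,v) + \d(v,w)$. Since both sides are non-negative and $t \mapsto t^z$ is monotone nondecreasing on $[0,\infty)$ for $z \ge 1$, raising to the power $z$ preserves the inequality:
\[ \d(u,w)^z \le \lr{\d(u,v) + \d(v,w)}^z. \]
Applying part (1) with $x = \d(u,v)$ and $y = \d(v,w)$ then gives the desired bound.

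There is no real technical obstacle here; the only subtlety is choosing the right convex-combination parameter $\lambda$. The value $\lambda = 1/(1+\epsilon)$ is forced by the form of the target inequality, and once it is fixed, convexity does the rest. An alternative route, should one prefer to avoid invoking Jensen's inequality, is to use Young's inequality with conjugate exponents or to verify the bound directly via calculus by fixing the ratio $y/x = t$ and minimizing the right-hand side over $\lambda$; both recover the same optimal choice.
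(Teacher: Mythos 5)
Your proof is correct. The paper does not actually prove this lemma---it is cited directly from Makarychev, Makarychev, and Razenshteyn---so there is no in-paper argument to compare against, but the Jensen/convexity argument with $\lambda = 1/(1+\epsilon)$ is the standard derivation and is exactly right; you also correctly spot that the left-hand side of part (2) is a typo for $\d(u,w)^z$ (as written the inequality is vacuous), and your fix is the intended reading.
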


To prove the centroid set theorem, we show an explicit construction of a centroid set. Recall that, due to the first canonical property, points in $V$ that are close to each other behaves as in the Euclidean case. To take care of the case of \textbf{points nearby} to their closest centers, we add a set of points $\bbCN$ to our centroid set. The case of far away points is further divided into two subcases. In the first subcase, we deal with the points whose shortest paths to closest centers are \textbf{short} or $O(1/\epsilon)$ hops away. To take care of this subcase,  
we add a set of points $\bbCS$ to our centroid set. The last subcase concerns \textbf{long paths}, and here we make use of the planar spanner property. In particular, we construct the centroid points in this case based on a recursive decomposition of the graph guided by underlying planar spanners of the decomposed subgraphs. This subcase resembles the centroid set construction in excluded-minor graph metrics from \cite{Cohen-AddadLSS22,braverman2021coresets}. 

Next, we proceed towards the details of our construction. But, before we describe our approach, we need to define some ingredients, mainly to make sense of the decomposition of a graph guided by planar spanners.   

\subsection{Shortest Path Separators and Recursive Decomposition} \label{subsec:decomp}

\begin{proposition}[\cite{KawarabayashiST13,ChanS19a}] \label{thm:sp-separator}
	Given a planar graph $H = (V, E)$, with non-negative weights on vertices, there exists a collection of shortest paths $\P = \{P_1, P_2, \ldots, P_b\}$ with $b = \Oh(1)$, such that the weight of the set of vertices belonging to any connected component in $H \setminus \bigcup_{P_i \in \P} V(P_i)$ has weight at most half of that of $V(H)$.
\end{proposition}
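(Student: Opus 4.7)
The plan is to prove this via a Lipton--Tarjan-style cycle separator argument adapted to expose shortest paths rather than arbitrary vertex sets. First I would fix an arbitrary vertex $r \in V(H)$ and build a shortest path tree $T \subseteq H$ rooted at $r$; the key point is that every root-to-vertex path in $T$ is itself a shortest path in $H$, so any collection of such root-paths is automatically a collection of shortest paths, as required by the statement. If $H$ is not already triangulated, I would first add zero-weight ``fake'' edges to triangulate the fixed planar embedding; this neither changes vertex weights nor shortens any existing shortest path.

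Next I would search for a non-tree edge $\{u,v\} \in E(H) \setminus E(T)$ whose associated fundamental cycle $C_{uv}$---consisting of the two $T$-paths from $\mathrm{lca}(u,v)$ to $u$ and $v$ together with the edge $\{u,v\}$---yields a $(1/3,2/3)$-balanced vertex separator in the planar embedding. The existence of such an edge is the classical cycle separator lemma (Lipton--Tarjan; see also Miller), proved by a standard Jordan-curve counting argument over the fundamental cycles of $T$. Crucially, each of the two non-edge arcs of $C_{uv}$ is a subpath of a root-path in $T$ and hence a shortest path in $H$, so $C_{uv}$ is covered by at most three shortest paths (the two $T$-segments and the singleton path $\{uv\}$).

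To upgrade from a $(1/3,2/3)$-balanced separator to the $(1/2)$-balanced separator required by the statement, I would recurse on any connected component whose weight still exceeds $|V(H)|/2$. Since the heaviest piece shrinks by a factor of at most $2/3$ per level, only $\Oh(1)$ levels are needed, each contributing $\Oh(1)$ shortest paths, so the final collection $\P$ has size $b = \Oh(1)$. To make sure the recursively produced paths remain shortest paths in the original graph $H$ (rather than only in the residual subgraph), I would, at each recursive call, contract all previously exposed separator vertices into a single super-root while setting the weight of each edge from this super-root to be the original $H$-distance from $r$ to the corresponding endpoint. The new shortest path tree in the contracted planar graph then unfolds into concatenations of genuine $H$-shortest subpaths joined at the boundary.

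The main obstacle I expect is precisely this last bookkeeping step: certifying that the separator arcs produced in the recursive calls, when unfolded, are still shortest paths in the original $H$. The cleanest resolution is the super-root contraction described above, which preserves planarity and the shortest-path metric simultaneously; once this is set up, the balance guarantee and the $|\P| = \Oh(1)$ bound follow immediately from the constant-depth recursion, and the statement is established.
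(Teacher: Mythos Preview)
The paper does not prove this proposition; it is quoted as a known result from \cite{KawarabayashiST13,ChanS19a} and used as a black box in the recursive decomposition. Your outline is the standard Lipton--Tarjan / Thorup argument and is essentially the right proof, so in that sense it matches what the cited references do.

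Two small remarks on the sketch itself. First, the fundamental cycle $C_{uv}$ needs only \emph{two} shortest paths, not three: the vertex set of $C_{uv}$ is exactly $V(\pi_T(r,u)) \cup V(\pi_T(r,v))$, so the edge $\{u,v\}$ contributes no new vertices and need not be listed as a separate path. Second, adding \emph{zero}-weight triangulation edges can shorten shortest paths (a zero-weight edge between $u$ and $v$ forces $\d_H(u,v)=0$), contradicting your claim that the triangulation ``[does not] shorten any existing shortest path.'' The usual fix is to give the fake edges infinite weight, or equivalently to build the shortest-path tree $T$ in the original $H$ first and use the triangulation only for the topological fundamental-cycle argument; either way the two root-paths you extract remain genuine shortest paths in $H$. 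With that correction, your plan (including the super-root contraction to keep recursive separators as $H$-shortest paths and the constant-depth recursion to tighten $2/3$ to $1/2$) goes through and yields $b=\Oh(1)$.
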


Let $X \subseteq V(G)$ be a set of vertices, and let $I_X: V(G) \to \{0, 1\}$ be the indicator function for $X$. We obtain a recursive decomposition $\T = (\R, E(\T))$ as follows, where every node of the tree $R_i \in \R$ corresponds to a subset of $V(G)$, which is termed as a region.
\begin{itemize}
	\item The region $R_1$ corresponding to the root of $\T$ is equal to $V(G)$.
	\item Consider an internal node of $\T$ that corresponds to a region $R_i \subseteq V(G)$, such that $|R_i \cap X| > 2$. Then, let $G_i = G[R_i]$, and let $H_i$ be a planar $\alpha$-spanner for $G_i$. Then, we apply \Cref{thm:sp-separator} on $H_i$ to obtain a collection of $\Oh(1)$ \emph{separator paths} $\P_i$, such that each $P\ij \in \P_i$ is a shortest path in $H_i$ w.r.t.\ $I_X$ being the weight function (i.e., we seek to obtain a decomposition of the vertices of $X$ in a balanced manner). Then, the regions that become the children of $R_i$ in $\T$ are as follows. (See \Cref{fig:decomp} for an illustration)
	\begin{enumerate}
		\item The subsets of vertices corresponding to the connected components in the induced subgraph of $H_i$ on $R_i \setminus \bigcup_{P^i_j \in \P_i}V(P^i_j)$, and these are termed as \emph{component children} of $R_i$ and
		\item Each path in $\P_i$ is broken into maximal sub-paths each containing at most two points of $X$.\footnote{More precisely, if a path $P$ contains at least one vertex of $X$, then interpret as a real interval $[0, 1]$ with break-points $0 \le x_1 \le x_2 \le \ldots \le x_m \le 1$, where $x_i$'s are the vertices in $P \cap X$. Then, the sub-paths are given by $[0, x_1], [x_1, x_2], \ldots, [x_m, 1]$. If $P$ does not contain a vertex of $X$, then let $P$ itself be the sub-path.} Then, the vertex sets of each such sub-paths are added as different children of $R_i$. These are termed as \emph{subpath children}.
	\end{enumerate}
	Thus, a region $R_i$ is equal to the union of all the regions corresponding to its component and subpath children. Note that only the regions corresponding to the consecutive subpath children of a separator path may intersect at a vertex of $X$; whereas the regions corresponding to the component children of $R_i$ define a partition of $R_i \setminus \bigcup_{P\ij \in \P_i} V(P^i_j)$. 
	\item Recursion stops when a region contains at most $2$ vertices of $X$, and such a region is called a leaf region/node. Since the weight falls by a factor of at least $1/2$ at every level, the height of $\T$ is at most $\Oh(\log |X|)$. 
\end{itemize}

\begin{figure}
	\centering
	\begin{framed}
		\includegraphics[scale=0.8]{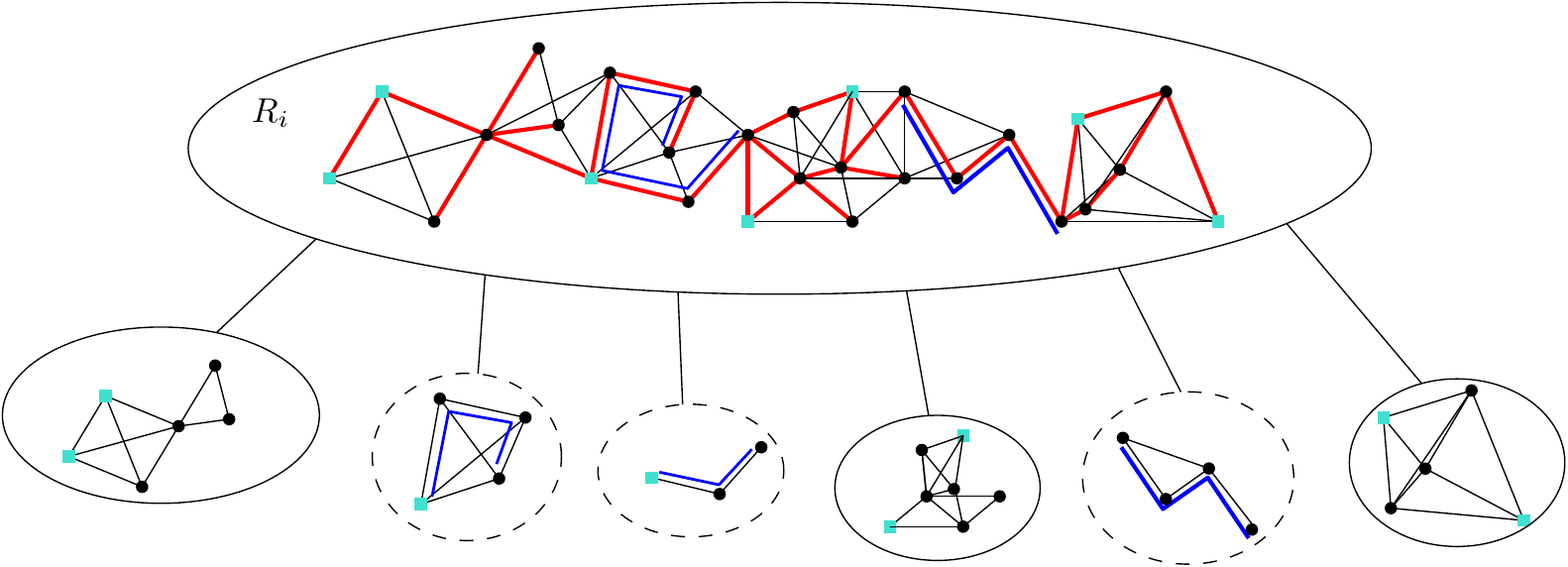}
		\caption{Decomposition of a region $R_i$ into multiple children. Inside $R_i$, we show the induced subgraph $G_i = G[R_i]$ (which is not planar) and the corresponding planar spanner $H_i$. The edges of $H_i$ are shown in red, and the edges in $E(G_i) \setminus E(H_i)$ are in black. The two shortest-path separators of $H_i$ are shown in blue, which form a balanced separator $\P_i$ for the vertices of $X$ (shown as light blue squares). Then, we add a child for every connected component of $H \setminus V(\P_i)$ (children regions inside solid ellipses). We also have children corresponding to paths in $\P_i$ (dashed ellipses). One of the paths is broken into two pieces due to a vertex in $X$.} \label{fig:decomp}
	\end{framed}
\end{figure}

Fix a vertex $s \in V(G)$, and let $R_1, R_2, \ldots, R_t$ be the root-leaf path in $\T$ such that $s$ is contained in every $R_i$, where $R_1$ is the root region, and $R_t$ is a leaf region (recall that $t = \Oh(\log |X|)$). Now consider any $p \in X \subseteq V(G)$. Let $R_i$ be the lowest (i.e., one with the maximum $i$) region that contains every vertex on a shortest path $\pi_G(p, s)$ in $G$ between $p$ and $s$. Additionally, suppose $R_i$ is not equal to $R_t$. Then, the construction implies that there exist two children $R_{i+1}, R'_{i+1}$ of $R_i$, such that $p \in R'_{i+1}$ and $s \in R_{i+1}$. Let $u$ be the \emph{last vertex} along $\pi_G(p, s)$ (while going from $p$ to $s$) that lies in $R'_{i+1}$, and $v$ be the vertex immediately after $u$ along $\pi_G(p, s)$. Note that $u, v$ belong to $R_i$, which induces a connected component in $H_i$, and thus in $G_i$. Let $\pi_{H_i}(u, v)$ be a shortest path between $u$ and $v$ in $H_i$. Since $u \in R'_{i+1}$, and $v \not\in R'_{i+1}$, it follows that $\pi_{H_i}(u, v)$ intersects with a path separator $P\ij \in \P_i$ at a vertex $x$, such that $x \in V(P\ij) \cap V(\pi_{H_i}(u, v))$. In this case, we say that $P\ij$ \emph{separates} $p$ and $s$, and $x$ is a separating vertex that lies between $u$ and $v$ (see  \Cref{fig:separation}). We summarize a few properties in the following discussion. 

\begin{figure}
	\centering
	\begin{framed}
		\includegraphics[scale=0.7]{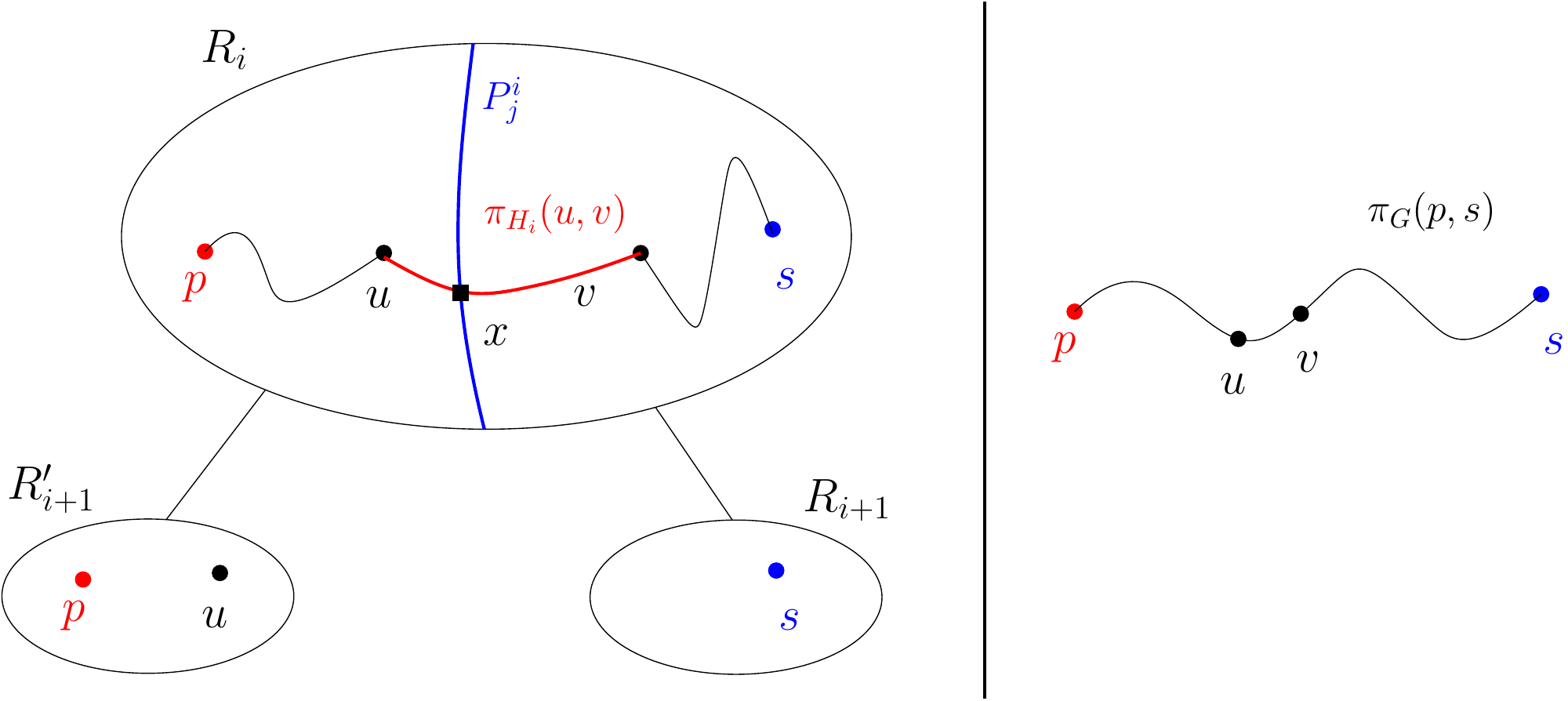}
		\caption{A shortest path $\pi_G(p, s)$ is depicted on the right. On the left hand side, we depict a path $P\ij$ separating $p$ and $s$ and $x$ is a separating vertex that lies between $u$ and $v$. Note that the paths ($P\ij$) inside $R_i$ are shortest paths in the planar spanner $H_i$. Also note that $u \in R'_{i+1}$, and $s \in R_{i+1}$ by definition, but $v$ may not belong to $R_{i+1}$.} \label{fig:separation}
	\end{framed}
\end{figure}

\begin{claim}\label{cl:separating-paths}
	Let $P\ij \in \P_i$ be a separator path corresponding to a region $R_i$ that is the lowest region containing a shortest path $\pi(s, p)$, such that $P\ij$ separates $p$ and $s$, and $x$ is a separating vertex that lies between $u$ and $v$. Then,
	\begin{enumerate}
		\item $\d_{G_i}(p, s) = \d_G(p, s)$,
		\item $\d_{G_i}(u, v) \le \d_{H_i}(u, v) \le \alpha \cdot \d_{G_i}(u, v) = \alpha \cdot \d_{G}(u, v)$, 
		\item $\d_{H_i}(u, x) + \d_{H_i}(v, x) = \d_{H_i}(u, v) \le \alpha \cdot \d_{G_i}(u, v) = \alpha \cdot \d_{G}(u, v)$
		\item For any $q > 0$, if $|\pi_G(p, s)| \ge q \cdot \frac{\alpha\cdot  c'_2}{c'_1\cdot \epsilon}$, then $\d_{H_i}(u, v) \le \epsilon/q \cdot \d_G(p, s) $.
	\end{enumerate}
\end{claim}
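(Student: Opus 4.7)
The plan is to verify each of the four items in order, since they build on one another and each follows fairly directly from the construction in Section~\ref{subsec:decomp} together with the canonical properties. None of the four items seems to be a serious obstacle; the bulk of the work is simply assembling what has already been established.

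For item~(1), I would use the hypothesis that $R_i$ is the lowest region containing every vertex of $\pi_G(p,s)$. Since $G_i = G[R_i]$ is an induced subgraph of $G$, every shortest path in $G$ whose vertices lie in $R_i$ is also a path in $G_i$, giving $\d_{G_i}(p,s) \le \d_G(p,s)$. Conversely, any path in $G_i$ is a path in $G$, so $\d_G(p,s) \le \d_{G_i}(p,s)$. Equality follows. The same argument, together with the subpath-of-a-shortest-path property, shows that for the consecutive vertices $u,v$ of $\pi_G(p,s)$ we have $\d_G(u,v) = w(uv) = \d_{G_i}(u,v)$, which I will reuse below.

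For item~(2), I would invoke the \textsl{Planar Spanner} property applied to $G_i$: the graph $H_i$ is a planar $\alpha$-spanner of $G_i$, so by definition $\d_{G_i}(u,v) \le \d_{H_i}(u,v) \le \alpha \cdot \d_{G_i}(u,v)$ for all $u,v \in R_i$. Combining with $\d_{G_i}(u,v) = \d_G(u,v)$ from the previous paragraph yields the stated chain of inequalities. Item~(3) then follows immediately: because $x$ lies on the $u$-to-$v$ shortest path $\pi_{H_i}(u,v)$ in $H_i$, the subpath property gives $\d_{H_i}(u,x) + \d_{H_i}(x,v) = \d_{H_i}(u,v)$, and the upper bound is inherited from~(2).

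Item~(4) is the only one that mixes the two canonical properties, and is where I expect the (mild) arithmetic work. Here I would use \Cref{prop:bounded-distance} twice. First, each edge of $G$ has weight at most $c'_2$, so $\d_G(u,v) = w(uv) \le c'_2$; combined with~(2) this gives $\d_{H_i}(u,v) \le \alpha \cdot c'_2$. Second, if $|\pi_G(p,s)| \ge q \cdot \frac{\alpha c'_2}{c'_1 \epsilon} \ge 2$, then \Cref{prop:bounded-distance} yields $\d_G(p,s) \ge c'_1 \cdot |\pi_G(p,s)| \ge c'_1 \cdot q \cdot \frac{\alpha c'_2}{c'_1 \epsilon} = \frac{q \alpha c'_2}{\epsilon}$. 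Rearranging gives $\alpha c'_2 \le (\epsilon/q)\cdot \d_G(p,s)$, and chaining with the first inequality yields $\d_{H_i}(u,v) \le (\epsilon/q)\cdot \d_G(p,s)$, as required. The only mild subtlety is ensuring that the hypothesis $|\pi_G(p,s)| \ge q\cdot \alpha c'_2/(c'_1\epsilon)$ is at least $2$ so that \Cref{prop:bounded-distance} applies; this is harmless since in all regimes of interest $\alpha, c'_2/c'_1, 1/\epsilon$ are at least~$1$.
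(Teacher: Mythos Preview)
Your proposal is correct and follows essentially the same approach as the paper's proof: item~(1) from $R_i$ containing all vertices of $\pi_G(p,s)$, items~(2) and~(3) from the spanner property and the fact that $x$ lies on $\pi_{H_i}(u,v)$, and item~(4) by combining the edge-weight bound $\d_G(u,v)\le c'_2$ with the lower bound $\d_G(p,s)\ge c'_1\cdot|\pi_G(p,s)|$ from \Cref{prop:bounded-distance}. Your observation about needing $|\pi_G(p,s)|\ge 2$ to invoke \Cref{prop:bounded-distance} is a nice point that the paper leaves implicit.
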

\begin{proof}
	The first property follows from the assumption that $R_i$ contains all the vertices along the shortest path $\pi_G(p, s)$, thus $G_i = G[R_i]$ also contains the shortest path $\pi_G(p, s)$. Now consider the second property. Note that $u$ and $v$ are consecutive vertices along $\pi_G(p, s)$, which is a shortest path in $G$ and $G_i$. Thus, there exists a shortest path $\pi_{H_i}(u, v)$ in $H_i$, which is an $\alpha$-spanner for $G_i$, of weight at most $\alpha$ times $\d_{G_i}(u, v)$. For the third inequality, recall that $x$ is a vertex along the path $\pi_{H_i}(u, v)$, and then we use the second item. 
	
	Since $\pi_G(p, s)$  contains at least $q \cdot \frac{\alpha c'_2}{\epsilon c'_1}$ vertices, the Bounded Distance property of $G$ implies that, $\d_G(p, s) \ge c'_1 \cdot q \frac{\alpha c'_2}{\epsilon c'_1} = q \cdot \frac{\alpha c'_2}{\epsilon}$, whereas $\d_G(u, v) \le c'_2$. Then, it follows that $\d_{H_i}(u, v) \le \alpha \cdot \d_G(u, v) \le \alpha c'_2\le \frac{\epsilon}{q} \cdot \d_G(p, s)$. 
\end{proof}
Note that the last item in the above claim infers that the distance between $u$ and $v$ in the spanner $H_i$ is negligible if $|\pi_G(p, s)|$ is large enough.  
\subsection{Construction of the Centroid Set}

We will construct the three sets of centers $\bbCL, \bbCS, $ and $\bbCN$, and define the centroid set as the union of the three sets, i.e., $\bbC \coloneqq \bbCL \cup \bbCS \cup \bbCN$. 
The three sets correspond to the three cases we discussed before. 

\subparagraph*{Nearby points case.} First, we construct the set $\bbCN$.
Let $p_1,p_2,\ldots,p_{n'}$ be the points of $X$ such that $\d_G(p,\A) < 1$ for all $p=p_i$, where $1\le i\le n'$. Let $B_i=B(p_i,(10z/\epsilon)\cdot \d_G(p_i,\A))$ be the set of points contained within distance $10z/\epsilon \cdot \d_G(p_i, \A)$ from $p_i$. Then, we use \Cref{prop:mu-net} to compute a set $B'_i \subseteq B_i$ by setting $\mu = \Theta(\epsilon^3/z^3) \cdot \d_G(p_i, \A)$, and add all points of every such $B'_i$ to $\bbCN$. 

\subparagraph*{Short-path case.} Next, we construct the set $\bbCS$. Consider the Bounded Degree Support Graph $H''=(V'',E'')$ obtained by applying \Cref{prop:support-graph} with precision value $\mu = \epsilon^2/z^2$. Also consider the mapping $f:V(G)\rightarrow V(H'')$. For each point $p\in X$, add any vertex $v\in V(H'')$ to $\bbCS$ that is at most $\ell=\frac{\gamma \cdot z \alpha\cdot  c'_2}{c'_1\cdot \epsilon}$ hops away from $f(p)$ in $H''$, i.e., there is a path between $v$ and $f(p)$ in $H''$ containing at most $\ell$ edges. Here, $\gamma \ge 1$ is a large enough constant (it suffices to set $\gamma = 1600$). 

\subparagraph*{Long-path case.}

First, we define a set of {important} points, referred to as the landmarks, which will help us divide the points of $V$ into equivalence classes. In particular, each equivalence class would contain a subset of points of $V$ whose distance vectors w.r.t. a fixed sequence of landmarks are approximately equal. Moreover, the centroid set we will construct contains exactly one representative point from each such equivalence class, essentially leading to a small-sized set. To this end, we will then define a notion of \emph{rounded distances} to landmarks, and use it to construct $\bbCL$ such that for each possible distance tuple, $\bbCL$ contains a point having that \emph{rounded} distance tuple.    

\medskip\noindent\textbf{Landmarks.} Let $0 \le \mu \le \frac{1}{2}$ be small enough -- later we will define $\mu$ to be a function of $\epsilon$ and $z$. Throughout the following discussion, fix a root-leaf path $R_1, R_2, \ldots, R_t$ in $\T$, where $R_1$ is the root region, and $R_t$ is a leaf-region. Without loss of generality, we assume that $R_t \cap X \neq \emptyset$ -- otherwise we prune such a root-leaf path at the lowest region that contains at least one vertex of $X$. Note that $t = \Oh(\log |X|)$. We recall some notation from \Cref{subsec:decomp}. Let $G_i = G[R_i]$ be an induced subgraph corresponding to region $R_i$, and let $H_i$ be a constant-stretch spanner of $G_i$. For $1 \le i \le t-1$, let $\P_i$ be the set of shortest-path separators obtained by applying \Cref{thm:sp-separator} to obtain the children of $R_i$ in the recursive decomposition, where $\P_i = \{P^{i, 1}, P^{i, 2}, \ldots, P^{i, b}\}$, with $b = \Oh(1)$. Let $\P = \bigcup_{i = 1}^{t-1} \P_i$. Finally, we define some notational shorthand: let $d_i(\cdot, \cdot) = d_{G_i}(\cdot, \cdot)$, and $\d(\cdot, \cdot) = d_G(\cdot, \cdot)$. For a path $P\ij$, and a point $q_1 \in X$, we consider two ways of defining rounded distance tuples, that will be useful in different cases.

\textbf{First rounded distance tuple.} Consider a non-leaf region $R_i$ and a shortest path $P\ij \in \P_i$. For every pair of vertices $q_1, q_2 \in X$, let $D(i, q_1, q_2) \coloneqq \d_i(q_1, q_2) + \d(q_2, \A)$. We fix $i$ and points $q_1, q_2$ until the end of the current discussion about first rounded distance tuple, and use $D \coloneqq D(i, q_1, q_2)$ for brevity. 

Let $Q\ij(q_1, q_2)$ denote the set of vertices $v$ on the path $P\ij$ such that $\d_i(v, q_1) \le D/\mu^2$. Arbitrarily orienting $P\ij$ from one endpoint to another, let $u$ and $v$ denote the first and last vertex belonging to $Q\ij$ respectively. Note that, 
\begin{align*}
	\d_{P\ij}(u, v) &= \d_{H_i}(u, v) \tag{Since $P\ij$ is a shortest path in $H_i$}
	\\&\le \d_{H_i}(u, q_1) + \d_{H_i}(q_1, v) 
	\\&\le \alpha \cdot \lr{ \d_{i}(u, q_1) + \d_{i}(q_1, v) } 
	\\&\le \frac{2\alpha D}{\mu^2}.
\end{align*}
Now, we construct an $(\mu^2 \cdot D)$-net $\L(i, j, q_1, q_2)$ of $Q\ij(q_1, q_2)$. Formally, we treat the subpath between $u$ and $v$ as a real interval $\left[ 0, \beta \right]$, where $\beta \le \frac{2\alpha D}{\mu^2}$. We place a ``mark'' at every $\mu^2 D$ units of distance. Now we go back to the subpath, which is a discrete sequence of edges. If a mark coincides with a vertex along the path, then we add the vertex to the set $\L(i, j, q\ij)$. Otherwise, if a mark falls between two consecutive vertices along the path, then we add both of the vertices to the set $\L(i, j, q_1, q_2)$.

\begin{observation} \label{obs:landmarks}
	\begin{itemize}
		\item $|\L(i, j, q_1, q_2)| = \Oh \lr{\mu^{-4}}$ (since $\alpha$ is an absolute constant), and
		\item For any $x \in V(P\ij)$ with $\d_{G_i}(q_1, x) \le D/\mu^2$, there exists an $l \in \L(i, j, q_1, q_2)$ such that $\d_{G_i}(x, l) \le \d_{H_i}(x, l) = \d_{P\ij}(x, l) \le \mu^2 D$.
	\end{itemize}
\end{observation}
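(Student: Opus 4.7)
Both parts of the observation follow directly from the construction of $\L(i, j, q_1, q_2)$ given in the paragraph just above it; the argument is essentially a packing-and-covering estimate on the real interval that represents the subpath of $P\ij$ between $u$ and $v$.

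For the size bound, I would first invoke the length estimate $\d_{P\ij}(u, v) \le 2\alpha D/\mu^2$ already derived in the preceding display. Since marks are placed at uniform spacing $\mu^2 D$ along this subpath, the number of marks is at most $\lceil (2\alpha D/\mu^2) / (\mu^2 D) \rceil + 1 = \Oh(\mu^{-4})$ using that $\alpha$ is an absolute constant; and since each mark contributes at most two vertices to $\L(i, j, q_1, q_2)$ (one if it coincides with a vertex, two if it falls strictly between two consecutive vertices), the asymptotic bound $|\L(i, j, q_1, q_2)| = \Oh(\mu^{-4})$ follows.

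For the covering bullet, the key point is that the hypothesis $\d_{G_i}(q_1, x) \le D/\mu^2$ places $x$ in $Q\ij(q_1, q_2)$ by the very definition of $Q\ij$, so $x$ lies on the subpath between $u$ and $v$. I would identify this subpath with a real interval $[0, \beta]$ with $\beta \le 2\alpha D/\mu^2$, and let $p_x$ and $m_x$ denote the real coordinates of $x$ and of the nearest mark to $p_x$; by the uniform spacing of marks, $|p_x - m_x| \le \mu^2 D / 2$. Case (i): if $m_x$ coincides with the position of some vertex $l$, that vertex is in $\L(i,j,q_1,q_2)$ and $\d_{P\ij}(x, l) = |p_x - m_x| \le \mu^2 D / 2$. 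Case (ii): otherwise $m_x$ lies strictly between two consecutive vertex positions $p_1 < p_2$, both contributing their vertices to $\L(i,j,q_1,q_2)$; since $x$ is itself a vertex and there are no vertex positions strictly between $p_1$ and $p_2$, either $p_x \le p_1$ or $p_x \ge p_2$, and the vertex on the same side as $x$ satisfies $\d_{P\ij}(x, l) \le |p_x - m_x| \le \mu^2 D / 2$. Finally, the stated chain $\d_{G_i}(x, l) \le \d_{H_i}(x, l) = \d_{P\ij}(x, l) \le \mu^2 D$ holds because $H_i$ is a subgraph of $G_i$ and because every sub-path of the shortest path $P\ij$ in $H_i$ is itself a shortest path in $H_i$.

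I do not anticipate a serious obstacle; the only step requiring care is the case analysis in (ii), where one must ensure that, when the mark falls strictly between two vertices, a vertex on the same side of the mark as $x$ is available in $\L(i,j,q_1,q_2)$. This is precisely why the construction adds \emph{both} flanking vertices of such a mark, and that design choice is what makes the argument close cleanly.
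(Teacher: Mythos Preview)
Your proposal is correct and follows exactly the approach the paper intends: the observation is stated without proof in the paper, and your argument is precisely the natural unpacking of the construction of $\L(i,j,q_1,q_2)$ given immediately before it. One tiny quibble: the bound $|p_x - m_x| \le \mu^2 D/2$ may fail near the right endpoint $\beta$ if no mark is placed there, but since you only need $\le \mu^2 D$ for the stated conclusion (and you correctly use that weaker bound), this does not affect correctness.
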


Let $R'_t = \LR{s \in R_t : |\pi_{G}(x, s)| > \ell \text{ for all } x \in X  }$. That is, $R'_t$ is the subset of $R'_t$ of points $s$ such that the hop-distance of $s$ from \emph{every point in $X$} along the shortest path in $G$, is strictly larger than $\ell$. 

For a path $P\ij \in \P_i$, for each $q_1, q_2 \in X$, and any point $s \in R'_t \subseteq R_t$, we consider the following \emph{rounded distance tuple} $\tilde{\d}^1_{i, j}(s, (q_1, q_2))$, defined as the tuple formed by listing the following entries in a fixed order:
\begin{itemize}
	\item For each $l \in \L(i, j, q_1, q_2)$: $\displaystyle \tilde{d}^1_{i, j}(s, l) = \min \LR{\near*{\frac{\d_i(s, l)}{\mu^2 D}}, \frac{3D}{\mu^2}}$.
	\item $\displaystyle \tilde{d}^1_{i, j}(s, q_1) = \LR{ \near*{\frac{\d_i(s, q_1)}{\mu D}}, \frac{3D}{\mu} }$.
	\item For each $x \in X \cap R_t$: $\tilde{d}^1_{i, j}(s, x) = \LR{ \near*{\frac{\d_i(x, s)}{\mu \d(x, \A)}}, \frac{1}{\mu} \d(x, \A) }$.
\end{itemize}
Here, we use the notation $\near{\frac{p}{q}}$ to denote the integral multiple of $q$ that is closest to $\frac{p}{q}$.

For a leaf region $R_t$, there are no corresponding shortest path separators. Thus, we consider a different definition first rounded distance tuple. Let $\L(t) = R_t \cap X$. For each $p \in \L(t)$, let $\tilde{d}_t(s, x) = \LR{ \near*{\frac{\d(x, s)}{\mu \d(x, \A)}}, \frac{1}{\mu} \d(x, \A) }$. Let $\tilde{\d}^1_t(s, t)$ be this tuple.

\textbf{Second rounded distance tuple.} Again, fix a path $P\ij \in \P_i$, and a point $q_1 \in X$, and a pair $q_3, q_4 \in R_i \cup \{\bot\}$, where $\bot$ means undefined. Furthermore, we assume that $q_3$ and $q_4$ both are not undefined at the same time. Then, for a point $s \in R'_t \subseteq R_t$, we define the second rounded distance tuple $\tilde{d}^2_{i, j}(s, (q_1, q_3, q_4))$, defined as follows.

\begin{itemize}
	\item $\tilde{d}^2_{i, j}(s, (q_1, q_3, q_4)) = \btrue$ if $$\frac{1}{\mu} \cdot \lr{\d_i(q_1, q_4) + \d(q_4, \A)} < \d_i(q_1, s) < \mu \cdot \lr{\d_i(q_1, q_3) + \d(q_3, \A)}$$
	Here, if $q_4 = \bot$, then the first inequality is omitted from the definition, and if $q_3 = \bot$, then the second inequality is omitted from the definition. Note that due to our assumption, at least one of $q_3$ and $q_4$ is not undefined, so the definition cannot be vacuously true.
	\item $\tilde{d}^2_{i, j}(s, (q_1, q_3, q_4)) = \bfalse$ otherwise, i.e., when $q_4 \neq \bot$ and the first inequality does not hold, or when $q_3 \neq \bot$ and the second inequality does not hold. 
\end{itemize}

\textbf{Construction of $\bbCL$. } For each root-leaf path $R_1, \ldots, R_t$, for each $P\ij \in \P$, and we of the two choices of rounding of distances (i.e., $\tilde{d}^1_{i, j}(\cdot, \cdot)$ or $\tilde{d}^2_{i, j}(\cdot, \cdot)$), we consider all possible values that the entries in the corresponding rounded tuples may take. We also consider the rounded distance tuple $\tilde{\d}^1_t(\cdot)$. For each possible choice of rounding type and of the entries, we obtain a large rounded distance tuple $\tilde{\mathsf{D}}$ by concatenating all tuples thus constructed in a fixed order. That is, $\tilde{\mathsf{D}}$ is obtained by corresponding rounded distance tuples (i.e., either the potential numerical or \btrue/\bfalse\ values that the entries in $\tilde{d}^1_{\cdot}(\cdot)$ or $\tilde{d}^2_{\cdot}(\cdot)$ may take respectively). If there exists a point $s \in R'_t \subseteq R_t$ that achieves all the corresponding entries (i.e., rounded distance values and/or \btrue/\bfalse), as specified by $\tilde{\mathsf{D}}$, then we add it to $\bbCL$.


\newcommand{\qi}{q^i_1}
\newcommand{\qii}{q^i_2}
\newcommand{\qit}{q^i_3}
\newcommand{\qiv}{q^i_4}



\subsection{Construction of the Approximate Solution} \label{subsec:approx-construction}
We now show that for any solution $\mathcal{S}$, every center in $\calS$ can be approximated by a centroid from $\mathbb{C}$. Consider any center $s \in \mathcal{S}$, and let $X_s \subseteq X$ be the subset of points whose closest center in $S$ is $s$ (breaking ties arbitrarily). First, we show how to find $\tils \in \mathbb{C}$ such that for any point $p \in X_s$ with $\cost(p, s) \le \lr{\frac{10z}{\epsilon}}^z \cdot \cost(p, \A)$, $ \cost(p, \tils)\le \cost(p, s)  + 3\epsilon \lr{ \cost(p, s) + \cost(p, \A) }$.

Now, we show how to find a replacement center for any $s \in \mathcal{S}$, which we denote by $\rho(s)$. We consider different cases to find $\rho(s)$. The following cases, or \emph{replacement rules}, are applied in the following manner. We proceed to each of the replacement rules in the order in which they are presented. For a particular replacement rule, we iterate over all centers $s \in \calS$ for which replacement has not yet been found using previous rules, and check whether the current rule is applicable to $s$, in which case we define $\rho(s)$ as specified by the rule. Thus, at the end of the iteration, we are left with the centers $s \in \calS$ for which none of the previous rules are applicable. Then, we proceed to the next rule and proceed in a similar manner.

\textbf{Replacement using $\bbCN$.} 

\textbf{Case 1.} Suppose there exists a point $p \in X_s$ with $\d_G(p, \A) < 1$.
\\Let $X'_s \coloneqq \{q \in X_s : \d_G(q, \A) < 1 \}$.
Note that $p \in X'_s$, which implies that $X'_s \neq \emptyset$. Let $p_i \coloneqq \argmin_{\{q\in X'_s\}} \d_G(q,\A)+\d_G(q,s)$. Note that $s \in B_i = B(p_i, (10z/\epsilon) \cdot \d_G(p_i, \A) )$. Let $\tils$ be the closest point to $s$ in the $(\epsilon^3/z^3)\cdot \d_G(p_i,\A)$-net $B'_i$ constructed for $B_i$. Note that such an $\tils$ was added to $\bbCN$. In this case, we add $\tils \in \bbCN \subseteq \bbC$ to $\tilde{\mathcal{S}}$, and set $\rho(s) = \tils \in \bbCN$.  Define $\Snet$ as the subset of $\calS$ as all the centers $s$ whose replacement was found using this manner. 

\textbf{Case 2.} Consider the following case.
\begin{enumerate}
	\item Suppose for all points $p \in X_s$, it holds that $\d_G(p, \A) \ge 1$, and thus, we have not found $\rho(s)$ in the previous case,
	\item $f(s) \in V(H'')$ was added to $\bbCS \subseteq \bbC$, and
	\item There exists a point $q \in X_{s_q}$ where $s \neq s_q \in \Snet$, and $\tilde{s}_q = \rho(s_q) \in \bbCN$ with the following properties: (1) $\d_G(q, s_q) \le \epsilon/z$, (2) $\d_G(q, \tilde{s}_q) > \d_G(q, f(s))$.  
\end{enumerate}
In this case, we let $\rho(s) = \tilde{s}_q$, as defined above (if there are multiple choices for $q$, and thus for $s_q$, we may choose arbitrarily). We define $\Snetsub$ as the subset of $\calS$ as all the centers $s$ whose replacement was found using this case. Note that for centers $s$ in $\Snet$ as well as $\Snetsub$, $\rho(s)$ belongs to $\bbCN$; but the difference is that, for a center $s \in \Snetsub$, we make use of a center in $\tilS$ that was already found as a replacement for a \emph{different} center $s_q \in \Snet$.

\textbf{Replacement using $\bbCS$.} Consider the case when the first two items in the previous case hold, but there is no point $p$ satisfying the conditions in the third item. That is, suppose $s$ satisfies the following conditions:
\begin{enumerate}
	\item Suppose for all points $q \in X_s$, it holds that $\d_G(q, \A) \ge 1$, 
	\item $f(s) \in V(H'')$ was added to $\bbCS \subseteq \bbC$, and
	\item There exists no point $p \in X_{s_p}$ where $s \neq s_p \in \calS$, and $s'_p = \rho(s_p) \in \bbCN$ with the following properties: (1) $\d_G(p, s_p) \le \epsilon/z$, and (2) $\d_G(p, s'_p) > \d_G(p, f(s))$. 
\end{enumerate}
Then, we add $f(s)$ to $\tilS$, and set $\rho(s) = f(s)$. Let $\Ssup \subseteq \calS$ denote the set of centers $s$ such that $\rho(s) \in \bbCS$. 

We let $\Sland \coloneqq \calS \setminus (\Snet \cup \Snetsub \cup \Ssup)$ denote the subset of centers whose replacement has not been found in either $\bbCN$ or in $\bbCS$ as specified in the previous paragraph. Here, the notation anticipates that for such centers $s \in \Sland$, we will find $\rho(s)$  using $\bbCL$. Before that, we first note the following observation.
\begin{observation} \label{obs:long-path}
	For any $s \in \Sland$ and for any point $p \in X$, $|\pi_G(p, s)| > \ell$.
\end{observation}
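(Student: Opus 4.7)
The plan is to prove the contrapositive by tracing through the definitions of the three replacement rules. The key insight is that if $s \in \Sland$, then $s$ was eligible for none of the previous rules, and unpacking what this means about $f(s)$ immediately yields the hop-distance bound via \Cref{prop:support-graph}.

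First I would extract what $s \in \Sland$ implies. Since $s \notin \Snet$, Case~1 failed: there is no $p \in X_s$ with $\d_G(p,\A) < 1$, so every $p \in X_s$ satisfies $\d_G(p,\A) \ge 1$. Thus condition~1 of Case~2 holds for $s$. Next I would argue that condition~2 of Case~2 must fail for $s$, i.e.\ $f(s) \notin \bbCS$. Indeed, if $f(s)$ were in $\bbCS$, then the decision procedure in \Cref{subsec:approx-construction} would assign $\rho(s)$ via either Case~2 of the $\bbCN$ replacement (putting $s$ in $\Snetsub$, when condition~3 holds) or via the $\bbCS$ rule (putting $s$ in $\Ssup$, when condition~3 fails). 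In either case $s$ would be removed from $\Sland$, a contradiction. Hence $f(s) \notin \bbCS$.

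Next I would translate ``$f(s) \notin \bbCS$'' into a bound on hop-distances in $H''$. By the construction of $\bbCS$, for every $p \in X$, every vertex of $V(H'')$ at hop-distance at most $\ell$ from $f(p)$ in $H''$ was added to $\bbCS$. Therefore, if $f(s) \notin \bbCS$, then for \emph{every} $p \in X$ the hop-distance from $f(p)$ to $f(s)$ in $H''$ is strictly greater than $\ell$.

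Finally, I would transfer this hop-distance bound back to $G$ using \Cref{prop:support-graph}. Fix $p \in X$ and let $\pi_G(p,s) = (p = u_1, u_2, \ldots, u_t = s)$ so that $|\pi_G(p,s)| = t-1$. \Cref{prop:support-graph} guarantees that $H''$ contains the walk $f(u_1), f(u_2), \ldots, f(u_t)$, whose number of edges is at most $t-1$; in particular, the hop-distance between $f(p)$ and $f(s)$ in $H''$ is at most $|\pi_G(p,s)|$. Combining with the previous paragraph gives $|\pi_G(p,s)| > \ell$, as desired. There is no substantial obstacle here since the statement is essentially a bookkeeping consequence of the definitions; the only subtlety is keeping track of which conditions of Case~2 can or must hold once $s$ has been excluded from $\Snet$, $\Snetsub$, and $\Ssup$ in that fixed order.
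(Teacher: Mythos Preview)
Your proposal is correct and follows essentially the same approach as the paper's proof: both argue that if $|\pi_G(p,s)| \le \ell$ for some $p \in X$, then the image path $f(p),\ldots,f(s)$ in $H''$ witnesses $f(s) \in \bbCS$, contradicting $s \in \Sland$. Your write-up is actually more careful than the paper's, since you explicitly explain why $f(s) \in \bbCS$ forces $s$ into $\Snetsub$ or $\Ssup$ via the case split on condition~3, whereas the paper leaves this step implicit.
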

\begin{proof}
	Suppose for contradiction that for some $s \in \Sland$ and $p \in X$ $p,u_1,\ldots,u_t,s$ is a path $\pi_G(p,s)$. Then the path $f(p),f(u_1),\ldots,f(u_t),f(s)$ is in $H''$. Thus, $f(s)$ is at most $\ell$ hops away from $f(p)$, and hence $f(s)$ should have been added to $\bbC$ by our construction. But, this is a contradiction, and hence $|\pi_G(p,s)| > \ell$. 
\end{proof}

\textbf{Replacement using $\bbCL$.} Now, we show how to find a replacement for $s \in \Sland$ using $\bbCL$. For this, we make use of the recursive decomposition tree $\T$ and the set $\bbCL$ constructed previously. Fix the center $s \in \Sland$ for the following discussion, and let $R_t$ be the leaf node of $\T$ containing $s$. Let $\Pi = (R_1, R_2, \ldots, R_t)$. Note that \Cref{obs:long-path} implies that $s \in R'_t \subseteq R_t$.  

Consider a non-leaf region $R_i$, and a path $P\ij \in \P_i$. Again, we use the shorthand $\d_{i} = \d_{G_i}$, and $\d = \d_G$. Recall that $G_i = G[R_i]$. Let $q^i_1 = \arg\min_{p \in R_i} \d_i(p, s)$. Now, we consider two cases.
\begin{enumerate}
	\item There exists some $q^i_2 \in R_i$ such that $\mu \d_i(q^i_1, s) \le \d_i(q^i_1, q^i_2) + \d(\qii, \A) \le \frac{1}{\mu} \d_i(\qii, s)$.
	\\In this case, we pick the rounding $\tilde{\d}^1_{i, j}(s, (\qi, \qii))$. 
	\item Otherwise, we proceed as follows.
	\\(i) If there exists a point $q \in R_i$ such that (i) $\d_i(\qi, q) + \d(q, \A) > \frac{1}{\mu} \d_i(\qi, s)$. Then, let $\qit = \arg\min_{q \in R_i} \d_i(\qi, q) + \d(q, \A)$. If there exists no such $q \in R_i$, then let $\qit = \bot$, i.e., undefined.
	\\(ii) If there exists a point $q \in R_i$ such that (i) $\d_i(\qi, q) + \d(q, \A) < \mu \d_i(\qi, s)$. Then, let $\qiv = \arg\max_{q \in R_i} \d_i(\qi, q) + \d(q, \A)$. If there exists no such $q \in R_i$, then let $\qiv = \bot$, i.e., undefined.
	\\Note that at least one of $\qit$ and $\qiv$ is not undefined. In this case, we pick the rounding $\tilde{\d}^2_{i, j}(s, (\qi, \qit, \qiv))$. 
\end{enumerate}
We obtain a rounded distance tuple $\tilde{\mathsf{D}}(s)$ by concatenating over all paths $P\ij \in \P$, the rounded distance tuples $\tilde{d}^1_{i, j}(s, \cdot)$, or $\tilde{d}^2_{i, j}(s, \cdot)$ as defined above. Since $s \in R'_t$ and has the rounded distance tuple $\tilde{\mathsf{D}}(s)$, we conclude that $\bbCL \cap R'_t$ must contain a point, say $\tils$, that also has the rounded distance tuple $\tilde{\mathsf{D}}(s)$. We let $\rho(s) = \tils$. 

This completes the construction of the set $\tilS \subseteq \bbC$. In the next subsection, we analyze the properties of this solution.

\subsection{Error Analysis} \label{subsec:approx-error}

In this section, we show how to bound the error for any relevant point $p\in X_s$. The overarching goal is to show that, for any point $p \in X$ such that $\d_G(p, \calS) \le \frac{\gamma' z}{\epsilon} \cdot \d_G(p, \A)$, or $\d_G(p, \tilS) \le \frac{\gamma' z}{\epsilon} \cdot \d_G(p, \A)$, it holds that $|\cost(p, \calS) - \cost(p, \tilS)| \le \Oh(\epsilon) \cdot (\cost(p, \calS) + \cost(p, \A))$. Since we the set $\tilS$ is constructed by careful examination of different cases, the proof of this claim is also based on exhaustive case analysis, which is organized into different lemmas. Before we proceed to formally state and prove these lemmas, we start with a high-level overview of the conceptual flow of the argument.

\textbf{Forward direction.} For a point $p \in X$ with $s$ being its closest center in $\calS$, we consider different cases based on whether $s$ belongs to $\Snet$, $\Ssup$, or $\Sland$. This corresponds to \Cref{cl:s-net}, \Cref{cl:s-net-sub1}, \Cref{cl:s-suppport-1}, and \Cref{lem:landmark-bound}, respectively. Here, we show that for some $\mu \ge 0$, either $\d_G(p, s) > \frac{\mu z}{\epsilon} \cdot \d_G(p, \A)$, or $\d_G(p, \rho(s)) \le (1+\epsilon/z) \cdot \d_G(p, s) + \Oh(\epsilon/z) \cdot \d_G(p, \A)$. This comprises of the forward direction of the proof.

\textbf{Reverse direction.} For a point $p \in X$ with $s\in \calS$ and $\tils \in \tilS$ being its closest centers in the two sets respectively. If $\tils = \rho(s)$, then it is relatively straightforward to argue that, if $\d_G(p, \tils) \le \frac{\mu z}{\epsilon} \cdot \d_G(p, \A)$, then for some $\mu \ge 0$, $\d_G(p, s) \le (1+\epsilon/z) \cdot \d_G(p, \tils) + \Oh(\epsilon/z) \cdot \d_G(p, \A)$. In another case, we may have that $\tils = \rho(s')$, where $s' \neq s$. That is, $\tils \in \tilS$ was found as a replacement for a center $s'$ that is \emph{not} the closest center in $\tilS$ to the point $p$. Here, the subtle possibility is that $\d_G(p, \tils) \ll \d_G(p, s)$, in which case we cannot hope to show the required bound. To argue that this does not happen, we carefully examine the different cases used to find $\rho(s)$ and $\rho(s')$, the replacements for $s$ and $s'$, respectively. The reverse direction comprises of \Cref{cl:backward-1}, \Cref{cl:backward-2} and \Cref{lem:landmark-bound}.

Finally, we combine the two directions, and show the required error bound on the solution in \Cref{lem:errorbound}.

\subsubsection*{Forward Direction}

\begin{claim} \label{cl:s-net}
	Consider a point $p \in X_s$ with $s \in \Snet$ with $\d_G(p, s) \le \frac{10z}{\epsilon} \d_G(p, \A)$. Then, for $\rho(s) = \tils \in \bbCN$, it holds that $|\d_G(p, \tils) - \d_G(p, s)|\le  (\epsilon/z) \cdot (\d_G(p, s)+\d_G(p, \A))$.
\end{claim}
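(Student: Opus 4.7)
The plan is to reduce the two-sided estimate to the triangle-inequality bound $|\d_G(p,\tils) - \d_G(p,s)| \le \d_G(s,\tils)$, and then show $\d_G(s,\tils) \le (\epsilon/z)(\d_G(p,s)+\d_G(p,\A))$ by exploiting the fine granularity of the $\mu$-net used to build $\bbCN$. Unpacking the construction: for $s \in \Snet$, the replacement $\tils=\rho(s)$ is the closest point in $B'_i$ to $s$, where $B'_i$ is a $\mu_i$-net of $B_i = B(p_i,(10z/\epsilon)\d_G(p_i,\A))$ with $\mu_i = \Theta(\epsilon^3/z^3)\d_G(p_i,\A)$, and $p_i = \argmin_{q \in X'_s}(\d_G(q,\A)+\d_G(q,s))$ over $X'_s = \{q \in X_s : \d_G(q,\A) < 1\}$. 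Since $s \in B_i$ by the construction, \Cref{prop:mu-net} delivers $\d_G(s,\tils) = O(\mu_i) = O\bigl((\epsilon^3/z^3)\d_G(p_i,\A)\bigr)$.

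It remains to bound $\d_G(p_i,\A)$ in terms of $\d_G(p,s)$ and $\d_G(p,\A)$, which I will handle by a two-case split on whether $p$ itself lies in $X'_s$. If $\d_G(p,\A) < 1$, then $p \in X'_s$, and the minimality of $p_i$ yields $\d_G(p_i,\A) + \d_G(p_i,s) \le \d_G(p,\A) + \d_G(p,s)$; combined with the hypothesis $\d_G(p,s) \le (10z/\epsilon)\d_G(p,\A)$, this gives $\d_G(p_i,\A) \le (1+10z/\epsilon)\d_G(p,\A) = O(z/\epsilon)\d_G(p,\A)$. Otherwise $\d_G(p,\A) \ge 1$, and since membership of $p_i$ in $X'_s$ forces $\d_G(p_i,\A) < 1$, we obtain $\d_G(p_i,\A) < 1 \le \d_G(p,\A)$ for free.

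In either case $\d_G(p_i,\A) = O(z/\epsilon)\d_G(p,\A)$, and substituting this into the previous estimate gives $\d_G(s,\tils) = O\bigl((\epsilon^2/z^2)\d_G(p,\A)\bigr) \le (\epsilon/z)\d_G(p,\A) \le (\epsilon/z)(\d_G(p,s)+\d_G(p,\A))$, provided the absolute constant hidden in the choice of $\mu_i$ is fixed sufficiently small. The main subtlety is that $p_i$ need not equal $p$ — it is the global minimizer of $\d_G(\cdot,\A)+\d_G(\cdot,s)$ over $X'_s$ — so one must translate the hypothesis on $p$ into a bound on $\d_G(p_i,\A)$, which costs a factor of $O(z/\epsilon)$. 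This is precisely why the net precision was chosen to be as small as $\Theta(\epsilon^3/z^3)\d_G(p_i,\A)$: the extra factor $(\epsilon/z)^2$ relative to a naive choice of $\Theta(\epsilon)\d_G(p_i,\A)$ is exactly what absorbs the minimality blow-up while still leaving the desired factor of $\epsilon/z$ in the final bound.
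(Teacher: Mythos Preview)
Your proof is correct and follows essentially the same approach as the paper: reduce to bounding $\d_G(s,\tils)$ via the triangle inequality, use the net guarantee to get $\d_G(s,\tils) = O((\epsilon^3/z^3)\d_G(p_i,\A))$, and split into the cases $\d_G(p,\A) \ge 1$ (where $\d_G(p_i,\A) < 1$ suffices) and $\d_G(p,\A) < 1$ (where minimality of $p_i$ gives $\d_G(p_i,\A) \le \d_G(p,s) + \d_G(p,\A)$). The only minor difference is that in the second case the paper stops at $\d_G(p_i,\A) \le \d_G(p,s)+\d_G(p,\A)$, which is already the target right-hand side and so does not need the hypothesis $\d_G(p,s)\le (10z/\epsilon)\d_G(p,\A)$ or the extra $\epsilon/z$ factors from the cubic precision; your detour through $O(z/\epsilon)\d_G(p,\A)$ works but is unnecessary for this particular claim (the cubic precision is really exploited elsewhere, in the reverse-direction arguments).
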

\begin{proof}
	Since $s \in \Snet$, the set $X'_s = \{ q \in X_s : \d_G(q, \A) < 1\}$
	is non-empty. Then, we choose $p_i = \argmin_{q \in X'_s} \d_G(q, \A) + \d_G(q, s)$. Note that $s\in B_i=B(p_i,(10z/\epsilon)\cdot \d_G(p_i,\A))$, and a point $\tils$ closest to $s$ from the $(\epsilon^3/z^3)\cdot\d_G(p_i,\A)$-net of $B_i$ is added to $\bbCN$. Now, $\d_G(s, \tils) \le (\epsilon^3/z^3)\cdot\d_G(p_i,\A)< (\epsilon/z)$. 
	
	Now, there are two possibilities. If $\d_G(p, \A) \ge 1$, then $|\d_G(p, \tils) - \d_G(p, s)|\le \epsilon/z \le  (\epsilon/z)\cdot  \d_G(p, \A)$. 
	
	Otherwise, if $\d_G(p, \A) < 1$, then $p \in X'_s$. Therefore, $\d_G(p_i, s) + \d_G(p_i, \A) \le \d_G(p, s) + \d_G(p, \A)$. Hence, $|\d_G(p, \tils) - \d_G(p, s)|\le \frac{\epsilon}{z} \cdot \d_G(p_i, \A) \le \frac{\epsilon}{z} (\d_G(p, s) + \d_G(p, \A))$.
\end{proof}

\begin{claim} \label{cl:s-net-sub1}
	Consider a point $p \in X_s$ for some $s \in \Snetsub$, such that $\d_G(p, s) \le \frac{10z}{\epsilon} \cdot \d_G(p, \A)$. Then, $\d_G(p, \tilS) \le \d_G(p, s) + \Oh(\epsilon/z) \cdot \d_G(p, \A)$. 
\end{claim}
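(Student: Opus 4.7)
My plan is to chain together a small number of triangle inequalities, carefully exploiting each of the three defining conditions of $\Snetsub$ in turn. Write $\tils \coloneqq \rho(s) = \tilde{s}_q$ for brevity, where $q \in X_{s_q}$, $s_q \in \Snet$, and $\tilde{s}_q \in \bbCN$ are the witnesses guaranteeing that $s \in \Snetsub$ (from the third bullet of the construction).

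The plan is to first apply triangle inequality in the form
\[
\d_G(p, \tils) \;\le\; \d_G(p, s) + \d_G(s, f(s)) + \d_G(f(s), \tils),
\]
and then bound each of the two added terms by $\Oh(\epsilon/z)$. For the middle term, recall that $f(s)$ is the image of $s$ under the mapping from the bounded-degree support graph $H''$ constructed via \Cref{prop:support-graph} with precision value $\mu = \epsilon^2/z^2$; this immediately gives $\d_G(s, f(s)) = \Oh(\epsilon^2/z^2)$. For the last term, I use the witness inequality $\d_G(q, \tils) > \d_G(q, f(s))$ from the third bullet in the definition of $\Snetsub$, combined with triangle inequality:
\[
\d_G(f(s), \tils) \;\le\; \d_G(f(s), q) + \d_G(q, \tils) \;\le\; 2\,\d_G(q, \tils).
\]

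The next step is to bound $\d_G(q, \tils)$. Since $s_q \in \Snet$ and $\tils = \tilde{s}_q \in \bbCN$ was chosen via the $\Snet$-replacement rule, I can inspect that rule's construction to see that $\d_G(s_q, \tils) \le (\epsilon^3/z^3)\cdot \d_G(p_{i_q}, \A) < \epsilon/z$ (using $p_{i_q} \in X'_{s_q}$, so $\d_G(p_{i_q}, \A) < 1$). Combining this with the hypothesis $\d_G(q, s_q) \le \epsilon/z$ from the $\Snetsub$-definition gives $\d_G(q, \tils) \le 2\epsilon/z$, hence $\d_G(f(s), \tils) \le 4\epsilon/z$.

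Putting the pieces together yields $\d_G(p, \tils) \le \d_G(p, s) + \Oh(\epsilon/z)$. Finally, the first bullet in the definition of $\Snetsub$ guarantees that $\d_G(p, \A) \ge 1$ for every $p \in X_s$, which allows me to absorb the additive $\Oh(\epsilon/z)$ as $\Oh(\epsilon/z)\cdot \d_G(p, \A)$, giving the claimed bound $\d_G(p, \tilS) \le \d_G(p, \tils) \le \d_G(p, s) + \Oh(\epsilon/z)\cdot \d_G(p, \A)$. I do not foresee a serious obstacle here: the only subtlety is remembering to invoke the correct defining properties of $\Snet$ (to control $\d_G(s_q, \tils)$) and of $\Snetsub$ (to flip $\d_G(q, \tils)$ against $\d_G(q, f(s))$ and to normalize by $\d_G(p, \A) \ge 1$); the hypothesis $\d_G(p, s) \le \tfrac{10z}{\epsilon}\d_G(p, \A)$ is actually not needed for this particular direction, though it will matter when the companion reverse-direction claim is established later.
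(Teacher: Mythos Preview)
Your proposal is correct and follows essentially the same argument as the paper: both proofs chain the triangle inequality through $s \to f(s) \to q \to \tilde{s}_q$, bound $\d_G(s,f(s)) = \Oh(\epsilon^2/z^2)$ via the support-graph precision, use $\d_G(q,f(s)) < \d_G(q,\tilde{s}_q)$ to collapse $\d_G(f(s),\tilde{s}_q) \le 2\d_G(q,\tilde{s}_q)$, bound $\d_G(q,\tilde{s}_q) \le \d_G(q,s_q) + \d_G(s_q,\tilde{s}_q) = \Oh(\epsilon/z)$, and finally absorb the additive $\Oh(\epsilon/z)$ using $\d_G(p,\A) \ge 1$. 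Your remark that the hypothesis $\d_G(p,s) \le \tfrac{10z}{\epsilon}\d_G(p,\A)$ is not actually used in this direction is also correct (the paper's proof does not invoke it either).
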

\begin{proof}
	First, we claim that $\d_G(p, \A) \ge 1$. Suppose for the contradiction that $\d_G(p, \A) < 1$.
	Then, we would have found a replacement for $s$ in $\bbCN$ using Case 1, i.e., $s \in \Snet$. However, since $s \in \Snetsub$, Case 1 is not applicable, which is a contradiction.
	
	Since $s \in \Snetsub$, there exists a point $f(s) \in \bbCS$ such that $\d_G(s, f(s)) = \Oh(\epsilon^2/z^2)$. Furthermore, there exists a point $q \in X_{s_q}$ for some $s_q \in \Snet$ such that $\tilde{s}_q = \rho(s_q)$ satisfies the following properties: (1) $\d_G(q, s_q) \le \epsilon/z$, (2) $\d_G(q, \tilde{s}_q) > \d_G(q, f(s))$. Then, 
	\begin{align*}
		\d_G(p,\tilde{s}_q ) &\le \d_{G}(p, s)+\d_{G}(s, f(s))+\d_G(f(s),q) + \d_G(q, \tilde{s}_q)
		\\&\le \d_{G}(p, s)+ \Oh(\epsilon^2/z^2)+2 \cdot \d_G(q,\tilde{s}_q) \tag{$\because\ \d_G(q, f(s)) \le \d_G(q, \tilde{s}_q)$} 
		\\&\le \d_{G}(p, s)+ \Oh(\epsilon^2/z^2)+2 \cdot \lr{ \d_G(q, s_q) + \d_G(s_q,\tilde{s}_q) } 
		\\&\le \d_{G}(p, s)+\Oh(\epsilon^2/z^2) + 2 \cdot \epsilon/z \tag{$\because\ d(q,s_q)\le \epsilon/z$ and $\d_G(s_q,{s'}_q) \le \epsilon^3/z^3$}
		\\&\le \d_G(p, s) + \Oh(\epsilon/z) \cdot \d_G(p, \A) \tag{$\because\ \d_G(p,\A)\ge 1$.  }
	\end{align*}
\end{proof}

\begin{claim} \label{cl:s-suppport-1}
	Consider a point $p \in X_s$ for some $s \in \Ssup$ such that $\d_G(p, s) \le \frac{10z}{\epsilon} \cdot \d_G(p, \A)$. Then, $\d_G(p, \tilS) \le \d_G(p, s) + \Oh(\epsilon/z) \cdot \d_G(p, \A)$. 
\end{claim}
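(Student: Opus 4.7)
The plan is to apply the triangle inequality, leveraging two facts: (i) by the construction of $\Ssup$, we have $\rho(s) = f(s) \in \tilS$, and (ii) the mapping $f$ from the Bounded Degree Support Graph (\Cref{prop:support-graph}) guarantees $\d_G(s, f(s)) = \Oh(\mu)$, where we instantiated $\mu = \epsilon^2/z^2$ in the construction of $\bbCS$. So $\d_G(s, f(s)) = \Oh(\epsilon^2/z^2)$, giving us a very good surrogate for $s$ inside $\tilS$.

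First, I would observe that by the definition of the rule placing $s$ into $\Ssup$, every point $q \in X_s$ satisfies $\d_G(q, \A) \ge 1$; in particular $\d_G(p, \A) \ge 1$. This is essential in order to convert an additive $\Oh(\epsilon^2/z^2)$ error into a multiplicative $\Oh(\epsilon/z)\cdot \d_G(p,\A)$ error.

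Then, by triangle inequality in $(V, \d_G)$,
\[
\d_G(p, f(s)) \le \d_G(p, s) + \d_G(s, f(s)) \le \d_G(p, s) + \Oh(\epsilon^2/z^2).
\]
Since $f(s) \in \tilS$, we have $\d_G(p, \tilS) \le \d_G(p, f(s))$, and using $\d_G(p, \A) \ge 1$ we conclude
\[
\d_G(p, \tilS) \le \d_G(p, s) + \Oh(\epsilon^2/z^2) \le \d_G(p, s) + \Oh(\epsilon/z) \cdot \d_G(p, \A),
\]
as required.

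There is no significant obstacle here; the whole force of the claim is packaged into the earlier construction (adding $f(s)$ to $\bbCS$ with a quantitatively small displacement from $s$) and the case distinction that defined $\Ssup$ (forcing $\d_G(p, \A) \ge 1$). Unlike \Cref{cl:s-net-sub1}, no indirection through another center's replacement is needed, so the triangle inequality is applied only once and no additional bookkeeping is necessary. The hypothesis $\d_G(p, s) \le (10z/\epsilon)\d_G(p,\A)$ is not even used in the forward direction here; it plays its role in qualifying $p$ as a ``relevant'' point for the centroid-set definition.
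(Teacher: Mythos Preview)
Your proof is correct and follows essentially the same approach as the paper: establish $\d_G(p,\A)\ge 1$ from the case distinction defining $\Ssup$, use $\rho(s)=f(s)\in\tilS$ with $\d_G(s,f(s))=\Oh(\epsilon^2/z^2)$, and conclude via triangle inequality. The paper phrases the distance bound by exhibiting an explicit path (appending the edge $\{s,f(s)\}$ to $\pi_G(p,s)$), but this is just the triangle inequality written out, and your observation that the hypothesis $\d_G(p,s)\le (10z/\epsilon)\d_G(p,\A)$ is not actually used in this direction is also accurate.
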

\begin{proof}
	Arguing as in the proof of \Cref{cl:s-net-sub1}, we first observe that $\d_G(p, \A) \ge 1$, otherwise $s$ would belong to $\Snet$, which is a contradiction. Since $s \in \Ssup$, $f(s)$ has been added to $\bbCS$, and we set $\rho(s) = f(s)$. Thus, $G$ has a path between $p$ and $f(s)$ of length exactly $|\pi_G(p,s)|+1$ whose weight is $\d_{G}(p, s)+\Oh(\epsilon^2/z^2)$, namely the path obtained by appending the edge $\{s,f(s)\}$ with the path $\pi_G(p,s)$. Hence, 
	\[\d_G(p,f(s))\le \d_{G}(p, s)+\Oh(\epsilon^2/z^2)\le \d_{G}(p, s)+\Oh(\epsilon^2/z^2)\cdot \d_G(p,\A)\]
	The last inequality follows, as $\d_G(p,\A) \ge 1$.
\end{proof}

Now, consider a point $p \in X_s$ with $\d_G(p, s) \le \frac{10z}{\epsilon} \d_G(p, \A)$. Now, if none of the \Cref{cl:s-net}, \Cref{cl:s-net-sub1}, and \Cref{cl:s-suppport-1} is applicable, then it follows that $s \not\in (\Snet \cup \Snetsub \cup \Ssup)$. That is, $s \in \Sland$. From \Cref{obs:long-path}, it follows that for all points $q \in X$, the hop-length of the path $\pi_G(q, s)$, i.e., $|\pi_G(q, s)|$, is strictly larger than $\ell$. 

In the following lemma, we show that the replacement $\rho(s)$ fond for $s$ from the set $\bbCL$ has approximately the same distance to $p$ as $s$. In fact, the following lemma is stronger in the following two aspects. First, it does not require that $s$ is the closest center to $p$ in $s$ -- we show that this inequality holds for \emph{any} $s'$ and its replacement $\tils$, as long as the respective distances to $p$ are bounded by $\Oh(\epsilon/z) \cdot \d_G(p, \A)$. Secondly, the lemma shows the inequalities in the both directions. Both of these properties will be useful subsequently.

\begin{lemma} \label{lem:landmark-bound}
	Consider a center $s \in \calS'$, and let $\tils = \rho(s) \in \tilS$ be its replacement found from $\bbCL$. Then, there exists a constant $\gamma$, such that for any point $p \in X$, the following holds:
	\begin{itemize}
		\item Either $\d_G(p, s) \ge \frac{\gamma z \d_G(p, \A)}{\epsilon}$, or $\d_G(p, \tils) \le (1+\frac{\epsilon}{z}) \d_G(p, s) + \frac{\epsilon}{z} \cdot \d_G(p, \A)$, and
		\item Either $\d_G(p, \tils) \ge \frac{\gamma z \d_G(p, \A)}{\epsilon}$, or $\d_G(p, s) \le (1+\frac{\epsilon}{z}) \d_G(p, \tils) + \frac{\epsilon}{z} \cdot \d_G(p, \A)$.
	\end{itemize}
\end{lemma}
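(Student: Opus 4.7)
The plan is to prove both inequalities by exhibiting, in each case, a path that witnesses the desired upper bound, using the preserved rounded distance tuple as a bridge between $s$ and $\tils$. Fix $p \in X$ and, without loss of generality, assume $\d_G(p, s) \le \tfrac{\gamma z}{\epsilon}\d_G(p, \A)$ (otherwise the first bullet holds vacuously). Let $\Pi = (R_1, \ldots, R_t)$ be the root-to-leaf path of $\T$ containing $s$, so that $s, \tils \in R'_t$ and the ``large'' rounded tuple $\tilde{\mathsf{D}}(s)$ coincides with $\tilde{\mathsf{D}}(\tils)$ entry-wise. Let $R_i$ be the lowest region in $\Pi$ that contains every vertex of some fixed shortest path $\pi_G(p, s)$. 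If $R_i = R_t$, then $p \in R_t \cap X$, and the leaf-level rounding $\tilde{d}^1_t(s, p) = \min\{\lceil \d(p, s)/(\mu \d(p, \A)) \rfloor, 1/\mu \cdot \d(p, \A)\}$ (which $\tils$ inherits by choice) immediately yields $|\d(p, s) - \d(p, \tils)| \le \mu \d(p, \A)$, which suffices with $\mu = \Theta(\epsilon/z)$.

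Now suppose $R_i$ is internal. By \Cref{cl:separating-paths} there is a separator $P^{i, j} \in \P_i$ that separates $p$ and $s$ at a vertex $x$ lying between consecutive vertices $u, v$ of $\pi_G(p, s)$, with $\d_{H_i}(u, x) + \d_{H_i}(x, v) \le \alpha \d_G(u, v)$. Because $s \in \Sland$, \Cref{obs:long-path} gives $|\pi_G(p, s)| > \ell$, so by the fourth item of \Cref{cl:separating-paths} the spanner detour satisfies $\d_{H_i}(u, v) \le (\epsilon/\gamma) \d_G(p, s)$. Consider first the case where the replacement used the Type-1 rounding $\tilde{d}^1_{i, j}(s, (q^i_1, q^i_2))$. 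The bracketing condition on $(q^i_1, q^i_2)$ plus the triangle inequality in $G_i$ places $x$ inside the net domain $Q^{i, j}(q^i_1, q^i_2)$, so there is a landmark $l \in \L(i, j, q^i_1, q^i_2)$ with $\d_{H_i}(x, l) \le \mu^2 D$ where $D = D(i, q^i_1, q^i_2)$. Since $\tils$ has the same rounded entry $\tilde{d}^1_{i, j}(s, l)$, we get $|\d_i(s, l) - \d_i(\tils, l)| \le \mu^2 D$. Composing along $p \to u \to x \to l \to \tils$ and using Type-1 bracketing to bound $D$ by $O(\d_G(p, s) + \d_G(p, \A))$ (via the minimality of $q^i_1$ and $p \in R_i$) produces the desired estimate
\[\d_G(p, \tils) \le \d_G(p, s) + O(\epsilon/z)\bigl(\d_G(p, s) + \d_G(p, \A)\bigr).\]

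If instead Type-2 rounding was used at $R_i$, then no bracketing witness exists; the boolean $\tilde{d}^2_{i, j}(s, (q^i_1, q^i_3, q^i_4))$ records the extremal position of $s$ relative to $\{\d_i(q^i_1, q) + \d(q, \A)\}_{q \in R_i}$. Since $\tils$ has the same boolean value, $\d_i(q^i_1, \tils)$ lies in the same extremal regime as $\d_i(q^i_1, s)$; in this regime $q^i_1$ is essentially the unique ``exit point'' of $R_i$ towards both $s$ and $\tils$, and the same triangle-inequality composition $p \to q^i_1 \to \tils$ gives the bound. For the reverse direction, observe that the entire construction is symmetric in $s$ and $\tils$: both lie in $R'_t$ along the same $\Pi$, and they realize the same rounded tuple. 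Re-running Steps 1--3 on a shortest path $\pi_G(p, \tils)$ in place of $\pi_G(p, s)$ yields the second bullet. Finally, we choose $\mu$ of the form $\mu = \Theta(\epsilon/z)$ (taken small enough to absorb $\alpha$ and the $O(1)$ constants from \Cref{cl:separating-paths}).

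The principal obstacle is the verification in Step 3 that the separating vertex $x$ actually falls inside the net domain $Q^{i, j}(q^i_1, q^i_2)$, so that a landmark $l$ within $\mu^2 D$ of $x$ exists; this requires combining the Type-1 bracketing, the long-path consequence of $s \in \Sland$, and the spanner bound $\d_{H_i}(u, v) \le \alpha \d_G(u, v)$ in a single triangle-inequality chain while simultaneously ensuring $D$ itself is at most $O(\d_G(p, s) + \d_G(p, \A))$. A secondary subtlety is stitching the Type-1 and Type-2 cases together: when Type-2 is used, one must explicitly rule out that $\tils$ could be much closer to $p$ than $s$ is (or vice versa), which is where the boolean value and the extremality of $q^i_3, q^i_4$ become essential. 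The constant $\gamma$ in the lemma is then fixed to be large enough so that the ``small distance'' regime $\d_G(p, s) < \gamma z/\epsilon \cdot \d_G(p, \A)$ simultaneously activates \Cref{cl:separating-paths}(4) with $q = \gamma$ and controls all the rounding slacks against $\d_G(p, \A)$.
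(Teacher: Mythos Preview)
Your outline tracks the paper's structure (leaf case, internal case split by rounding type, symmetric treatment of $s$ and $\tils$), and the leaf case is fine. But there are two genuine gaps in the internal case.

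\textbf{Type-1: the separating vertex $x$ need not lie in the net domain.} You assert that ``the bracketing condition on $(q^i_1,q^i_2)$ plus the triangle inequality places $x$ inside $Q^{i,j}(q^i_1,q^i_2)$''. This is false in general. The net domain consists of those vertices on $P^{i,j}$ within $D/\mu^2$ of $q^i_1$, but one only knows $\d_i(x,q^i_1)\le \d_i(x,s)+\d_i(s,q^i_1)\lesssim \d_i(p,s)$, and nothing forces $\d_i(p,s)\le D/\mu^2$ (indeed $D\ge \mu\,\d_i(q^i_1,s)$ gives a lower bound involving $\d_i(q^i_1,s)$, not $\d_i(p,s)$). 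The paper splits Type-1 into two subcases. In Case~1A ($\d_i(x,q^i_1)\le D/\mu^2$) your landmark argument works. In Case~1B ($\d_i(x,q^i_1)>D/\mu^2$) there is no nearby landmark; instead one uses the \emph{other} entry $\tilde{d}^1_{i,j}(s,q^i_1)$ to get $|\d_i(s,q^i_1)-\d_i(\tils,q^i_1)|\le \mu D$, bounds $\d_i(s,\tils)\le \tfrac{3}{\mu}D\le 3\mu\,\d_i(x,q^i_1)$ via the case hypothesis, and then expands $\d_i(x,q^i_1)$ by triangle inequality. Your sketch contains no analogue of this, and identifying ``$x$ in the net domain'' as the principal obstacle to be \emph{verified} (rather than a case split to be handled) is the wrong diagnosis.

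\textbf{Type-2: the ``extremal regime'' hand-wave misses the case split that produces the large branch.} The boolean $\tilde{d}^2_{i,j}(s,(q^i_1,q^i_3,q^i_4))$ records two one-sided inequalities, and the point $p$ itself satisfies exactly one of $\d(p,\A)+\d_i(p,q^i_1)>\tfrac{1}{\mu}\d_i(q^i_1,s)$ or $\d(p,\A)+\d_i(p,q^i_1)<\mu\,\d_i(q^i_1,s)$. In the first subcase (2A) one uses the $q^i_3$-inequality (shared by $s$ and $\tils$) together with the extremality of $q^i_3$ to bound $\d_i(s,\tils)\le 4\mu(\d_i(p,s)+\d(p,\A))$ directly; no ``exit point'' composition through $q^i_1$ is needed or used. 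In the second subcase (2B) one uses the $q^i_4$-inequality and the extremality of $q^i_4$ to show $\d(p,s_1)\ge \tfrac{1}{\mu}\d(p,\A)$ for \emph{both} $s_1\in\{s,\tils\}$: this is exactly where the ``either $\d_G(p,\cdot)\ge \tfrac{\gamma z}{\epsilon}\d_G(p,\A)$'' branch of the lemma comes from, and your write-up never produces it. A single sentence about ``composition $p\to q^i_1\to \tils$'' does not recover either argument.
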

\begin{proof}
	Let $s_1 \in \{s, \tils\}$, and $s_2$ be the other choice. Let $R_t$ is the leaf node of $\T$ containing $s$ and $\tils$, and note that each node along the root-leaf path $\Pi = R_1, \ldots, R_t$  in $\T$, contains both $s$ and $s'$. Observe that $s, \tils \in R'_t$. Recall that $\mu$ is the parameter defined in the construction of $\bbCL$. We will later set the value of $\mu$ to be $\frac{\gamma z}{\epsilon}$. We fix the point $p \in X$ for the rest of the discussion, and consider different cases.
	
	\textbf{Leaf case: $p \in R_t \cap X$.} Since $p \in \L(t) = R_t \cap X$, the rounding in the tuple $\tilde{d}^1_t(\cdot)$ implies that, either $\d_G(p, s) > \frac{1}{\mu} \cdot \d_G(p, \A)$, in which case $\d_G(p, \tils) > \frac{1}{\mu} \cdot \d_G(p, \A)$, since $s$ and $\tils$ both have same rounded distance tuples. Otherwise, $|\d_G(p, s) - \d_G(p, \tils)| \le \mu \d_G(p, \A)$. 
	
	\textbf{Internal node case: $p \not \in R_t$.} In this case, we consider a region $R_i$ along $\Pi$ such that some path $P\ij \in \P_i$ separates $p$ and $s_1$, and $x \in P\ij$ is a separating vertex that lies between $u$ and $v$ (as defined in \Cref{subsec:decomp}). For the rest of the proof, we use the shorthand $\d(\cdot, \cdot) \coloneqq \d_{G}(\cdot, \cdot)$, and $\d_i(\cdot, \cdot) = \d_{G_i}(\cdot, \cdot)$.	Let $\qi = \arg\min_{q \in R_i} \d_i(q, s)$. We note some of the consequences of these definitions and that of \Cref{cl:separating-paths} in the following inequalities.
	\begin{align}
		\d_i(p, s_1) &= \d(p, s_1) = \d(p, u) + \d(u, v) + \d(v, s_1) \label{eqn:ps1}
		\\\d_i(u, v) &\le \d_{H_i}(u, v) \le \alpha \cdot \d_i(u, v) = \alpha \cdot \d(u, v) \label{eqn:ps2}
		\\\d_{H_i}(u, x) + \d_{H_i}(x, v) &= \d_{H_i}(u, v) \le \alpha \cdot \d_i(u, v) \le \alpha \cdot \d(u, v) \label{eqn:ps3}
		\\\d_{H_i}(u, v) &\le \mu \cdot \d_G(p, s) \label{eqn:ps4}
	\end{align}
	
	In this case, we consider different cases based on the type of rounding tuple chosen for $P\ij$ while finding replacement for $s$ using rounded distance tuples.

	\textbf{Case 1.} Suppose there exists a $q \in R_i$ such that 
	\begin{equation}
		\mu \cdot \d_i(\qi, s) \le \d_i(\qi, q) +  \d(q, \A) \le \tfrac{1}{\mu} \cdot \d_i(\qi, s) \label{eqn:case1-rounding}
	\end{equation} 
	Note that the case assumption implies that, while defining the rounded distance tuple $\tilde{\mathsf{D}}(s)$, we chose the first rounded tuple $\tilde{d}^1_{i, j}(s, (\qi, \qii))$ w.r.t.\ the path $P\ij$, for some $\qii$ satisfying (\ref{eqn:case1-rounding}).	For the analysis of discussion of case 1 and its subcases, we use $D \coloneqq D(i, \qi, \qii) = \d_i(\qi, \qii) + \d(\qii, \A)$
	
	\textbf{Case 1A. $\d_i(x, \qi) \le \tfrac{D}{\mu^2}$.} Since $x \in P\ij$ and $\d_i(x, \qi) \le \tfrac{D}{\mu^2}$, this implies that $x \in Q\ij(\qi, \qii)$. Therefore, using \Cref{obs:landmarks}, there exists some $l \in \L(i, j, \qi, \qii)$ such that $\d_i(x, l) \le \mu^2 D$. Now consider,
	\begin{align}
		\d_i(s, l) &\le \d_i(l, x) + \d_i(x, \qi) + \d_i(\qi, s) \nonumber
		\\&\le \mu^2 D + \frac{D}{\mu^2} + \frac{D}{\mu} \tag{$\d_i(\qi, s) \le \frac{D^2}{\mu}$ by case assumption}
		\\\implies\ \d_i(s, l)&\le \frac{3D^2}{\mu} \label{eqn:s-l-range}
	\end{align}
	Therefore, $\tilde{d}^1_{i, j}(l, s) = \tilde{d}^1_{i, j}(l, \tils) = \LR{\near*{\frac{\d_i(s, l)}{\mu^2 D}}}$, which implies that $|\d_i(s_1, l) - \d_i(s_2, l)| \le \mu^2 D$. Now, we consider:
	\begin{align*}
		\d_i(p, s_2) &\le \d_i(p, l) + \d_i(l, s_2)
		\\&\le \d_i(p, l) + \d_i(l, s_1) + \mu^2 D \tag{From above}
		\\&\le \d_i(p, x) + \d_i(x, l) + \d_i(x, l) + \d_i(x, s_1) + \mu^2 D 
		\\&\le \d_i(p, x) + \d_i(x, s_1) + 3\mu^2 D \tag{Since $\d_i(x, l) \le \mu^2 D$}
		\\&\le \d_i(p, u) + \d_i(u, x) + \d_i(x, v) + \d_i(v, s_1) + 3\mu^2 D
		\\&\le \d(p, u) + \d(v, s_1) + \d_{H_i}(u, x) + \d_{H_i}(x, v) + 3\mu^2 D \tag{Using (\ref{eqn:ps1})}
		\\&\le \d(p, s_1) + \mu \cdot \d(p, s_1) + 3\mu^2 D \tag{Using (\ref{eqn:ps4})}
		\\&\le (1+\mu) \cdot \d(p, s_1) + 3\mu \d_i(p, s) \tag{Using case assumption: (\ref{eqn:case1-rounding})}
	\end{align*}
	Substituting $s_1 = s$ and $s_2 = \tils$, we obtain:
	\begin{align}
		\d(p, \tils) \le \d_i(p, \tils) &\le (1+\mu) \cdot \d(p, s) + 3\mu \cdot \d_i(p, s) \nonumber \tag{First inequality follows since $G_i$ is a subgraph of $G$}
		\\&\le (1+4\mu) \cdot \d(p, s) \label{eqn:tils-s-1}
	\end{align}
	and substituting $s_1 = s$ and $s_1 = \tils$, we obtain
	\begin{align}
		\d_i(p, s) &\le (1+\epsilon) \d(p, \tils) + 3\mu \d_i(p, s) \nonumber
		\\(1-3\mu) \cdot \d(p, s) \le (1-3\mu) \cdot \d_i(p, s) &\le (1+\mu) \d(p, \tils) \tag{First inequality follows since $G_i$ is a subgraph of $G$}
		\\\implies\ \d(p, s) &\le \frac{1+\mu}{1-3\mu} \cdot \d(p, \tils) \le (1+13\mu) \cdot \d(p, \tils) \label{eqn:s-tils-1}
	\end{align}
	
	\textbf{Case 1B.} $\d_i(x, \qi) > \frac{D}{\mu^2}$. 
	First, (\ref{eqn:case1-rounding}) implies that $\d_i(\qi, s) \le \frac{1}{\mu} D$. Therefore, by the definition of rounded distance tuple $\tilde{d}^1_{i, j}(s, (\qi, \qii))$, we have the following inequality
	\begin{equation}
		\d_i(s, \qi) - \d_i(\tils, \qi)| \le \mu D \label{eqn:s-tils-qi}
	\end{equation}
	Therefore, consider:
	\begin{align*}
		\d_i(s, \tils) \le \d_i(s, \qi) + \d_i(\qi, \tils) &\le 2\d_i(\qi, s) + \mu D 
		\\&\le \frac{2}{\mu} D + \mu D \le \frac{3}{\mu} D
		\\&\le 3\mu \cdot \d_i(x, \qi) \tag{From assumption of Case 1B}
		\\&\le 3\mu \cdot (\d_i(x, s_1) + \d_i(s_1, s_2) + 
		d_i(s_2, \qi)) 
	\end{align*}
	By rearranging the last inequality, we obtain that:
	\begin{align*}
		\d_i(s, \tils) &\le \frac{3\mu}{(1-3\mu)} \cdot (\d_i(x, s_1) + \d_i(s_2, \qi))
		\\&\le 9\mu \cdot (\d_i(x, s_1) + \d_i(s, \qi) + \mu D) \tag{Using (\ref{eqn:s-tils-qi})}
		\\&\le 9\mu \cdot (\d_i(x, v) + \d_i(v, s_1) + \d_i(p, s) + \d_i(p, s)) \tag{$\mu D \le \d_i(\qi, s)$ by (\ref{eqn:case1-rounding}) and $\d_i(\qi, s) \le \d_i(p, s)$ by the choice of $\qi$}
		\\&\le 9\mu \cdot \lr{\d_{H_i}(x, v) + \d_{i}(v, s_1)  + 2\d_{i}(p, s)}
		\\&\le 9\mu \cdot \lr{ \d_{H_i}(u, v) + \d_i(p, s_1) + 2\d_i(p, s) } 
	\end{align*}
	Where the last inequality follows from the fact that, $u, \ldots, x, \ldots, v$ is a shortest path in $H_i$, and $p, \ldots, v, \ldots, s_1$ is a shortest path in $G$ as well as $G_i$. Therefore,
	\begin{align*}
		\d_i(s, \tils) &\le 9\mu \cdot \lr{\mu \cdot \d(p, s_1) + 3\d_i(p, s)} \le 27\mu \cdot (\d(p, s_1) + \d_i(p,s)) 
	\end{align*}
	Therefore, 
	\begin{align}
		\d_i(p, s_2) &\le \d_i(p, s_1) + 27\mu \cdot (\d(p, s_1) + \d_i(p,s)) \label{eqn:s-tils-case1b}
	\end{align}
	By substituting $s_1 = s$, and $s_2 = \tils$ in (\ref{eqn:s-tils-case1b}), we obtain:
	\begin{align}
		\d(p, \tils) &\le \d_i(p, s) + 27\mu (\d_G(p, s_1) + \d_i(p,s)) \nonumber
		\\&= \d(p, s) + 54\mu \cdot \d(p, s) = (1+54\mu) \d(p, s)  \label{eqn:tils-s-2}
	\end{align}
	Where we use that $\d_i(p, s_1) = \d(p, s_1) = \d(p, s)$. Then, by substituting $s_1 = \tils$ and $s_2 = s$ in (\ref{eqn:s-tils-case1b}), we obtain:
	\begin{align}
		\d_i(p, s) &\le \d_i(p, \tils) + 27\mu \cdot (\d(p, \tils) + \d_i(p, s) ) \nonumber
		\\\implies (1-27\mu) \d(p, s) \le (1-27\mu) \d_i(p, s) &\le \d(p, \tils) + 27\mu \cdot \d(p, \tils) \tag{Since $\d_i(p, s_1) = \d(p, s_1) = \d(p, \tils)$}
		\\\implies \d(p, s) &\le \frac{1+27\mu}{1-27\mu} \cdot \d(p, \tils) \le (1+1539\mu) \d(p, \tils) \label{eqn:s-tils-2}
	\end{align}
	
	\textbf{Case 2.} Suppose there exists no point $q \in R_i$ satisfying (\ref{eqn:case1-rounding}). 
	Then, for any point $q$ (and in particular, for $q = p$), exactly one of the following inequalities holds:
	\begin{align}
		\d(q, \A) + \d_i(q, \qi) > \frac{1}{\mu} \d_i(\qi, s) \label{eqn:ql}
		\\\textbf{ OR }\d(q, \A) + \d_i(q, \qi) < \mu\d_i(\qi, s) \label{eqn:qr}
	\end{align}
	In this case, while finding replacement for $s$, we choose the second rounded distance tuple $\tilde{d}^2_{i, j}(s, (\qi, \qit, \qiv))$ in the rounded tuple $\tilde{\mathsf{D}}(s)$, with the following properties. If there exists a point $q \in R_i$ satisfying (\ref{eqn:ql}), then $\qit$ is chosen to be the one with smallest value on the left hand side; otherwise $\qit = \bot$. Analogously, if there exists a point $q \in R_i$ satisfying (\ref{eqn:qr}), then $\qiv$ is chosen to be the one with largest value on the left hand side; otherwise $\qiv = \bot$. Note that at least one of the two points $\qit$ and $\qiv$ is not equal to $\bot$ due to case assumption. Thus, the following inequality (inequalities) holds:
	\begin{align}
		\frac{1}{\mu} \cdot \lr{ \d_i(\qi, \qiv) + \d(\qiv, \A) } \le \d_i(\qi, s) \le \mu \cdot \lr{ \d_i(\qi, \qit) + \d(\qit, \A) } \label{eqn:s-q3-q4}
	\end{align}
	where, we drop an inequality from the requirement if the corresponding point $\qit$ or $\qiv$ is $\bot$. Thus, $\tilde{\d}^2_{i, j}(s, (\qi, \qit, \qiv)) = \btrue$. Therefore, $\tilde{\d}^2_{i, j}(\tils, (\qi, \qit, \qiv)) = \btrue$, which implies that: 
	\begin{align}
		\frac{1}{\mu} \cdot \lr{ \d_i(\qi, \qiv) + \d(\qiv, \A) } \le \d_i(\qi, \tils) \le \mu \cdot \lr{ \d_i(\qi, \qit) + \d(\qit, \A) } \label{eqn:tils-q3-q4}
	\end{align}
	Again, with the same caveat about dropping the appropriate inequality. 
	
	Now we consider two cases depending on which inequality from (\ref{eqn:ql}) and (\ref{eqn:qr}) is satisfied by $p$.
	
	\textbf{Case 2A.} Suppose $\d(p, \A) + \d_i(p, \qi) > \frac{1}{\mu} \d_i(\qi, s)$. Then, by the choice of $\qit$, we have that:
	\begin{align*}
		\d_i(s, \tils) &\le \d_i(s, \qi) + \d_i(\tils, \qi)
		\\&\le 2\mu \cdot \lr{ \d_i(\qi, \qit) + \d(\qit, \A) } \tag{From (\ref{eqn:s-q3-q4}) and (\ref{eqn:tils-q3-q4})}
		\\&\le 2\mu \cdot \lr{ \d_i(\qi, p) + \d(p, \A) } \tag{Since $\qit$ is a point minimizing the LHS of (\ref{eqn:ql})}
		\\&\le 2\mu \cdot \lr{ \d_i(p, s) + \d_i(q_i, s) + \d(p, \A) } 
		\\&\le 2\mu \cdot \lr{ 2\d_i(p, s) + \d(p, \A) } \tag{Since $\qi$ is a point minimizing $\d_i(\cdot, s)$}
		\\&\le 4\mu (\d_i(p, s) + \d(p, \A))
	\end{align*}
	Therefore, we obtain the following inequality.
	\begin{align}
		\d(p, s_1) \le \d(p, s_2) + \d(s_1, s_2) &\le \d(p, s_2) + \d_i(s_1, s_2) \nonumber
		\\&\d(p, s_2) + 4\mu \cdot (\d_i(p, s) + \d(p, \A)) \label{eqn:s1s2-case2A}
	\end{align}
	By plugging in $s_1 = s$ and $s_2 = \tils$, we obtain:
	\begin{align}
		\d(p, s) &\le \d(p, \tils) + 4\mu \cdot \d(p, s) + 4\mu \cdot \d(p, \A) \nonumber
		\\\implies\ (1-4\mu) \cdot \d(p, s) &\le \d(p, \tils) + 4\mu \cdot \d(p, \A) \nonumber
		\\\implies\ \d(p, s) &\le (1+8\mu) \cdot \d(p, \tils) + 12\mu \cdot \d(p, \A) \label{eqn:s-tils-3}
	\end{align}
	And by plugging in $s_1  = \tils$ and $s_2 = s$, we obtain:
	\begin{align}
		\d(p, \tils) &\le (1+4\mu) \cdot \d(p, s) + 4\mu \cdot \d(p, \A) \label{eqn:tils-s-3}
	\end{align}
	
	\textbf{Case 2B.} Suppose $\d(p, \A) + \d_i(p, \qi) < \mu \d_i(\qi, s)$. Then, by the choice of $\qiv$, it holds that $\d_i(\qi, s_1) > \frac{1}{\mu} \cdot \lr{ \d_i(\qi, \qiv) + \d(\qiv, \A) }$, since both $s$ and $\tils$ satisfy this inequality. Therefore,
	\begin{align*}
		\d(p, s_1) = \d_i(p, s_1) &\ge \d_i(\qi, s_1) - \d_i(p, \qi) 
		\\&\ge \frac{1}{\mu} \cdot \lr{ \d_i(\qi, \qiv) + \d(\qiv, \A) } - \d_i(p, \qi) \tag{From above}
		\\&\ge \frac{1}{\mu} \cdot \lr{ \d_i(\qi, p) + \d(p, \A) } - \d_i(p, \qi) \tag{By the choice of $\qiv$}
		\\&\ge \frac{1}{\mu} \cdot \d(p, \A) 
	\end{align*}
	Thus, the above inequality shows that in case 2B, both $\d(p, s)$ and $\d(p, \tils)$ are larger than $\frac{1}{\mu} \d(p, \A)$. 
	
	Thus, by combining all the cases, we obtain that, either $\d(p, s_1) > \frac{1}{\mu} \d(p, \A)$, or $\d(p, s_2) \le (1+c\mu) \cdot \d(p, s_1) + c'\mu \d(p, s_2)$, for some constants $1 \le c, c' \le 1539$. Let $\gamma = \max\LR{c, c'} = 1539$. Therefore, by setting $\mu = \gamma \cdot \frac{\epsilon}{z}$, we obtain that: either $\d(p, s_1) > \frac{\gamma z \d(p, \A)}{\epsilon}$, or $\d(p, s_2) \le (1 + \frac{\epsilon}{z}) \d(p, s_1) + \frac{\epsilon}{z} \d(p, \A)$. 
\end{proof}

\subsubsection*{Reverse direction.}

Now, we aim to prove the ``reverse direction''. Specifically, we focus on a point $p \in X$ with $\tils_1 \in \tilS$ being its closest center in $\tilS$. Then we show that if $\d_G(p, \tils_1) \le \frac{10z}{\epsilon} \cdot \d_G(p, \A)$, then $\d_G(p, \calS) \le (1+\Oh(\epsilon/z)) \cdot \d_G(p, \tils_1) + \Oh(\epsilon/z) \cdot \d_G(p, \A)$. As before, this proof is organized in the following claims, based on different cases. Before this, we need the following lemma whose proof follows trivially from Lemma 18 of \cite{Cohen-AddadLSS22}.

\begin{lemma}\label{lem:no-problematic-pt}
	Consider any $s\in \Snet$ such that $X'_s = \{ q \in X_s : \d_G(q, \A) < 1
	\}$. Let $q' = \argmin_{\{ q \in X'_s \}} \d_G(q,\A)+\d_G(q,s)$. Then, without loss of generality, we can assume that for all points $p$ with $\d_G(p,\A) < 1$, $(\d_G(p,\A)+\d_G(p,s)) > \frac{\epsilon^2}{8z^2}\cdot (\d_G(q',\A)+\d_G(q',s))$.  
\end{lemma}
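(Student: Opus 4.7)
The plan is to invoke the snapping/preprocessing argument of Lemma 18 of \cite{Cohen-AddadLSS22} essentially verbatim, after translating it to our graph-metric setting. First observe that the inequality is trivial for points $p\in X'_s$: by the defining property of $q'$ as an $\argmin$, we have $\d_G(p,\A)+\d_G(p,s)\ge \d_G(q',\A)+\d_G(q',s) > \frac{\epsilon^2}{8z^2}(\d_G(q',\A)+\d_G(q',s))$, since $\frac{\epsilon^2}{8z^2}<1$. The only nontrivial case concerns ``problematic'' points $p$ with $\d_G(p,\A)<1$ but $p\notin X'_s$, i.e.\ points close to $\A$ that are not assigned to $s$ in the optimal partition of $X$, yet nonetheless very close to $s$.

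For every such problematic point $p$, let $a_p \in \A$ be its nearest center in $\A$, so that $\d_G(p,a_p)=\d_G(p,\A)\le \frac{\epsilon^2}{8z^2}(\d_G(q',\A)+\d_G(q',s))$. The WLOG assumption is enforced by snapping $p$ to $a_p$ (reassigning its weight in the input point set), and then showing that any coreset for the snapped instance is a $(1\pm \Oh(\epsilon))$-coreset for the original instance. The per-point cost perturbation is controlled via the triangle inequality for powers (\Cref{lem:triangle}): for any candidate center set $\calT$,
\[
|\cost(p,\calT)-\cost(a_p,\calT)| \le \frac{\epsilon}{z}\cost(p,\calT) + \lr{\frac{z}{\epsilon}}^{z-1}\d_G(p,a_p)^z,
\]
so it suffices to bound the total second-term contribution summed over all problematic $p$ by an $\Oh(\epsilon)$ fraction of $\cost(\calS,X)+\cost(\A,X)$.

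The key accounting step is to charge the $\d_G(p,a_p)^z$ terms against the cost of $q'$. Since $q'$ is the canonical ``scale-setter'' for the cluster $X_s$, the threshold $\frac{\epsilon^2}{8z^2}$ is precisely tuned so that after raising to the $z$th power and multiplying by $(z/\epsilon)^{z-1}$, the resulting bound on each problematic point's snapping error is dominated by an $\epsilon/z$ fraction of $\cost(q',\A)+\cost(q',\calS)$. Summing over all $s\in \Snet$ and all problematic points attached to each such $s$, and using that each $a_p\in\A$ is charged only a bounded number of times, one obtains total error $\Oh(\epsilon)\cdot(\cost(\calS,X)+\cost(\A,X))$, which is absorbable into the final approximation factor.

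The main obstacle is this charging: we must ensure that the snapping errors aggregate to within the $\epsilon$-budget simultaneously across all centers in $\Snet$, and that the snapped instance still admits a centroid set of the same size (i.e.\ the subsequent constructions of $\bbCN,\bbCS,\bbCL$ are unaffected by removing problematic points). Both facts are established in the proof of Lemma 18 of \cite{Cohen-AddadLSS22}, and since our cost function and snapping target are structurally identical to theirs (we only replace Euclidean distance by $\d_G$, which still satisfies triangle inequality and \Cref{lem:triangle}), the argument carries over with no modification.
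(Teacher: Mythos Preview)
Your proposal is correct and takes essentially the same approach as the paper: both simply defer to Lemma~18 of \cite{Cohen-AddadLSS22}, with the paper stating only that the result ``follows trivially'' from that lemma while you supply an explanatory sketch of the snapping/preprocessing argument behind it. One minor imprecision in your sketch is the phrase ``each $a_p\in\A$ is charged only a bounded number of times''---the actual control on the aggregate snapping error does not come from bounding how many problematic points map to a given $a_p$, but rather from the fact that each problematic point's contribution is already an $\epsilon$-fraction of $\cost(q',\A)+\cost(q',\calS)$ and that the $q'$ terms themselves are part of $\cost(X,\A)+\cost(X,\calS)$; but since you ultimately appeal to the cited lemma for the rigorous accounting, this does not affect correctness.
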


Thus, we assume the property guaranteed by \Cref{lem:no-problematic-pt}, and proceed to proving the following claims.

\begin{claim} \label{cl:backward-1}
	Consider a point $p \in X$ with (1) $\d_G(p, \A) < 1$, and (2) $\tils_1$ is the closest center in $\tilS$ to $p$, with $\d_G(p, \tils_1) \le \frac{10z}{\epsilon} \cdot \d_G(p, \A)$. Then, $\d_G(p, \calS) \le (1+\Oh(\epsilon/z)) \cdot \d_G(p, \tils_1) + \Oh(\epsilon/z) \cdot \d_G(p, \A)$.
\end{claim}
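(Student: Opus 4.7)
My plan is to exhibit some $s^\star \in \calS$ with $\d_G(p, s^\star) \le (1+\Oh(\epsilon/z)) \d_G(p, \tils_1) + \Oh(\epsilon/z) \d_G(p, \A)$, which suffices since $\d_G(p, \calS) \le \d_G(p, s^\star)$. Let $s'_1 \in \calS$ denote the center with $\rho(s'_1) = \tils_1$, and split on which of $\Snet, \Snetsub, \Ssup, \Sland$ contains $s'_1$. A useful preliminary observation is that, since $\d_G(p, \A) < 1$, the closest center $s_p \in \calS$ to $p$ satisfies $p \in X'_{s_p}$, so $s_p \in \Snet$ and $\tils_p \coloneqq \rho(s_p) \in \bbCN \subseteq \tilS$; this will play a role in the hard case.

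Three of the four cases are routine. If $s'_1 \in \Snet$, I would take $s^\star = s'_1$: applying \Cref{lem:no-problematic-pt} to $s = s'_1$ and the given $p$, together with the $(\epsilon^3/z^3)\d_G(q', \A)$-net construction of $\bbCN$, yields $\d_G(s'_1, \tils_1) \le \tfrac{8\epsilon}{z}(\d_G(p, \A) + \d_G(p, s'_1))$, and the triangle inequality $\d_G(p, s'_1) \le \d_G(p, \tils_1) + \d_G(\tils_1, s'_1)$ then rearranges to the target. For $s'_1 \in \Snetsub$, by construction $\tils_1 = \rho(s_q)$ for some $s_q \in \Snet$ distinct from $s'_1$, so I would instead apply \Cref{lem:no-problematic-pt} with $s = s_q$ and take $s^\star = s_q \in \calS$. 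For $s'_1 \in \Sland$, \Cref{lem:landmark-bound} applied with $s = s'_1$ and $\tils = \tils_1$ does the job: since $\gamma = 1539 > 10$, the hypothesis $\d_G(p, \tils_1) \le (10z/\epsilon) \d_G(p, \A)$ excludes its first alternative, and its second alternative gives the bound with $s^\star = s'_1$.

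The hard case is $s'_1 \in \Ssup$, where we only have $\d_G(s'_1, \tils_1) = \Oh(\epsilon^2/z^2)$, and this additive error is not automatically absorbed by $\Oh(\epsilon/z) \d_G(p, \A)$ when $\d_G(p, \A)$ is tiny. My plan is to split on whether $\d_G(p, s_p) \le \epsilon/z$. In the small regime, the fact that $s'_1$ ended up in $\Ssup$ rather than $\Snetsub$ means the third item of the $\Snetsub$-defining condition failed at $s'_1$; instantiating that failure with $q = p$ and $s_q = s_p$ (the preconditions $s_p \in \Snet$, $s_p \neq s'_1$, $\tils_p \in \bbCN$, and $\d_G(p, s_p) \le \epsilon/z$ all hold) forces $\d_G(p, \tils_p) \le \d_G(p, f(s'_1)) = \d_G(p, \tils_1)$, and by minimality of $\tils_1$ in $\tilS$ this is actually an equality. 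I would then bound $\d_G(p, s_p)$ using the $\Snet$-style estimate applied to $s_p$, and take $s^\star = s_p$. In the large regime, $\epsilon/z < \d_G(p, s_p) \le \d_G(p, s'_1) \le \d_G(p, \tils_1) + \Oh(\epsilon^2/z^2)$ forces $\d_G(p, \tils_1) = \Omega(\epsilon/z)$, so the $\Oh(\epsilon^2/z^2)$ slack is swallowed by $\Oh(\epsilon/z) \d_G(p, \tils_1)$ and $s^\star = s'_1$ works. The main delicacy is that the threshold $\epsilon/z$ in my split must match exactly the $\epsilon/z$ appearing in the $\Snetsub$-defining condition, so that the two regimes are complementary.
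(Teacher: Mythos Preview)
Your proposal is correct and follows essentially the same approach as the paper's proof: both case-split on which subset of $\calS$ the preimage of $\tils_1$ lies in, invoke \Cref{lem:no-problematic-pt} for the $\Snet$ (and hence $\Snetsub$) case, and exploit the failure of the $\Snetsub$-defining condition in the $\Ssup$ case. Your treatment is in fact more careful than the paper's: you split the $\Ssup$ case explicitly on whether $\d_G(p, s_p) \le \epsilon/z$ (the paper directly asserts $\d_G(p, s) > \epsilon/z$, tacitly assuming $\d_G(p, \tils_p) > \d_G(p, \tils_1)$, which fails precisely in your small sub-case), and you also cover $s'_1 \in \Sland$ via \Cref{lem:landmark-bound}, which the paper's proof of this claim omits despite announcing three cases---the paper instead relies on \Cref{lem:landmark-bound} separately when assembling the reverse direction in \Cref{lem:errorbound}.
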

\begin{proof}
	Let $s$ be the closest center in $\calS$ to and from $p$. Note that since $\d_G(p, \A) < 1$, $s \in \Snet$, and $\rho(s) = \tils \in \bbCN$. Furthermore, by the condition of lemma, $\d_G(p, \tils_1) \le \d_G(p, \tils)$.
	We consider three cases.

	First, suppose there exists some $s_1 \in \Snet$ such that $\rho(s_1) = \tils_1$. \footnote{Note here that it may be the case that there also exists some $s' \in \Snetsub$ for which $\rho(s') = \tils_1$. But, in this analysis, the existence of $s_1 \in \Snet$ is sufficient.} Then, the set $X'_{s_1} = \{ q \in X_{s_1} : \d_G(q, \A) < 1
	\}$ is non-empty. Let $q' = \argmin_{q \in X'_{s_1}} \d_G(q, \A) + \d_G(q, s_1)$. It follows that $\d_G(s_1, \tils_1) \le \frac{\epsilon^3}{z^3} \cdot \d_G(q', \A)$. Then, by \Cref{lem:no-problematic-pt},
	\begin{align*}
		\d_G(p, s_1) &\le \d_G(p, \tils_1) + \d_G(s_1, \tils_1) 
		\\&\le \d_G(p, \tils_1) + \frac{\epsilon^3}{z^3} \cdot \d_G(q', \A)
		\\&\le \d_G(p, \tils_1) + \frac{\epsilon^3}{z^3} \cdot (\d_G(q', \A)+\d_G(q', s_1))
		\\&\le \d_G(p, \tils_1) + \frac{\epsilon^3}{z^3} \cdot \frac{8z^2}{\epsilon^2}\cdot (\d_G(p, \A)+\d_G(p, s_1))
		\\&= \d_G(p, \tils_1) + \Oh(\epsilon/z) \cdot (\d_G(p, \A)+\d_G(p, s_1))
	\end{align*}
	Hence, \[ \d_G(p, s) \le \d_G(p, s_1) \le (1+\Oh(\epsilon/z))\cdot \d_G(p, \tils_1)+\Oh(\epsilon/z)\cdot \d_G(p, \A).\]

	In the second case, there exists some $s_1 \in \Ssup$ such that $\rho(s_1) = f(s_1) = \tils_1$. 
	\[\d_G(p,\tils_1)\ge \d_G(p,s_1)-\Oh(\epsilon^2/z^2)\ge \d_G(p,s)-\Oh(\epsilon^2/z^2)\]
	Now, as $s_1$ was replaced by $\tils_1\in\Ssup$ although $\d_G(p,\tils)> \d_G(p,\tils_1)$, it must be that $\d_G(p,s)> \epsilon/z$, by our replacement scheme. Thus
	\begin{align*}
		\d_G(p, s) \le \d_G(p, s_1) &\le \d_G(p, \tils_1) + \d_G(s_1, \tils_1)
		\\&\le \d_G(p, \tils_1) + \Oh(\epsilon^2/z^2)
		\\&\le \d_G(p, \tils_1) + \Oh(\epsilon/z) \cdot \d_G(p, s)
	\end{align*}
	Therefore, $\d_G(p, s) \le (1+\Oh(\epsilon/z)) \cdot \d_G(p, \tils_1) \le (1+\Oh(\epsilon/z)) \cdot \d_G(p, \tils_1) + (\epsilon/z) \cdot \d_G(p, \A)$.
\end{proof}

\begin{claim} \label{cl:backward-2}
	Consider a point $p \in X$ with $\d_G(p, \A) \ge 1$, and $\tils_1$ is the closest center to $p$ from $\tilS$ such that $\rho(s_1) = \tils_1$, where $s_1 \in \Snet \cup \Ssup$. Then, $\d_G(p, \calS) \le \d_G(p, \tils_1) + \frac{\epsilon}{z} \cdot \d_G(p, \A)$. 
\end{claim}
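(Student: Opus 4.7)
The plan is to split into two cases according to whether $s_1 \in \Snet$ or $s_1 \in \Ssup$, and in each case bound $\d_G(s_1,\tils_1)$ by an absolute quantity that is at most $(\epsilon/z)\cdot\d_G(p,\A)$ once we exploit the assumption $\d_G(p,\A)\ge 1$. The desired bound then follows from the triangle inequality together with $\d_G(p,\calS)\le \d_G(p,s_1)$.

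\textbf{Case 1: $s_1\in\Snet$.} By the construction of $\rho$ on $\Snet$ (Case 1 of the replacement rules), the set $X'_{s_1}=\{q\in X_{s_1}:\d_G(q,\A)<1\}$ is nonempty, and the replacement $\tils_1=\rho(s_1)$ is a point in the $(\epsilon^3/z^3)\cdot\d_G(p_i,\A)$-net of $B_i=B(p_i,(10z/\epsilon)\cdot\d_G(p_i,\A))$ closest to $s_1$, where $p_i=\argmin_{q\in X'_{s_1}}(\d_G(q,\A)+\d_G(q,s_1))$. Since $\d_G(p_i,\A)<1$, this gives $\d_G(s_1,\tils_1)\le (\epsilon^3/z^3)\cdot\d_G(p_i,\A)<\epsilon^3/z^3$.

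\textbf{Case 2: $s_1\in\Ssup$.} Here $\tils_1=\rho(s_1)=f(s_1)$, where $f$ is the mapping from Proposition~\ref{prop:support-graph} applied with precision $\mu=\epsilon^2/z^2$. Consequently $\d_G(s_1,\tils_1)=\d_G(s_1,f(s_1))=\Oh(\epsilon^2/z^2)$.

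In both cases, combining $\d_G(s_1,\tils_1)=\Oh(\epsilon^2/z^2)$ with $\d_G(p,\A)\ge 1$ yields $\d_G(s_1,\tils_1)\le (\epsilon/z)\cdot\d_G(p,\A)$ (for sufficiently small $\epsilon/z$, which we may assume by rescaling $\epsilon$ by a suitable absolute constant, exactly as done elsewhere in the paper). Triangle inequality then gives
\[
\d_G(p,\calS)\le \d_G(p,s_1)\le \d_G(p,\tils_1)+\d_G(s_1,\tils_1)\le \d_G(p,\tils_1)+\tfrac{\epsilon}{z}\cdot\d_G(p,\A),
\]
as required. Since the argument is a direct consequence of the two construction rules and a single triangle-inequality step, there is no significant obstacle; the only care needed is to verify that the hypothesis $s_1\in\Snet\cup\Ssup$ (and not $\Snetsub$) ensures that $\d_G(s_1,\tils_1)$ is controlled by the net-spacing or the $f$-displacement respectively, and that the $\d_G(p,\A)\ge 1$ assumption absorbs these additive errors.
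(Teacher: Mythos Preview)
Your proof is correct and follows essentially the same approach as the paper: split into the two cases $s_1\in\Snet$ and $s_1\in\Ssup$, bound $\d_G(s_1,\tils_1)$ by $\epsilon^3/z^3$ and $\Oh(\epsilon^2/z^2)$ respectively using the net spacing and the $f$-displacement, absorb these via $\d_G(p,\A)\ge 1$, and finish with the triangle inequality and $\d_G(p,\calS)\le\d_G(p,s_1)$. The paper's proof is slightly terser but identical in structure and content.
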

\begin{proof}
	We consider two cases. If $s_1 \in \Snet$, then $\d_G(s_1, \tils_1) \le \frac{\epsilon^3}{z^3} \le \frac{\epsilon}{z} \cdot \d_G(p, \A)$. Alternatively, if $s_1 \in \Ssup$, then $\d_G(s_1, \tils_1) \le \frac{\epsilon^2}{z^2} \le \frac{\epsilon^2}{z^2} \cdot \d_G(p, \A)$. Here, we use the fact that $\d_G(p, \A) \ge 1$ in each of the inequalities. Then, the claim follows since:
	$$ \d_G(p, \calS) \le \d_G(p, s_1) \le \d_G(p, \tils_1) + \d_G(s_1, \tils_1) \le \d_G(p, \tils_1) + \frac{\epsilon}{z} \cdot \d_G(p, \A).$$
\end{proof}

\paragraph{Putting Everything Together.} First, we prove the error bound in the following lemma.
\begin{lemma} \label{lem:errorbound}
	For any $\calS \in \bbC^k$, there exists a $\tilS \in \bbC^k$ with the following property. For any point $p \in X$, if $\cost(p, \calS) \le \lr{\frac{10z}{\epsilon}}^z \d(p, \A)$, or $\cost(p, \tilS) \le \lr{\frac{10z}{\epsilon}}^z \d(p, \A)$, then
	$$|\cost(p, \calS) - \cost(p, \tilS)| \le \frac{\epsilon}{z \log(z/\epsilon)} \cdot \lr{ \cost(p, \calS) + \cost(p, \A) }.$$
\end{lemma}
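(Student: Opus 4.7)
The plan is to let $\tilS$ be the set constructed in Section~\ref{subsec:approx-construction}, but with the precision parameter used throughout Sections~\ref{subsec:approx-construction} and~\ref{subsec:approx-error} replaced by a scaled value $\epsilon_d \coloneqq \epsilon/(Cz\log(z/\epsilon))$ for a sufficiently large absolute constant $C$. Since $|\bbC|$ depends only polynomially on $1/\epsilon_d$, this rescaling adds only a polylogarithmic factor to the centroid-set size and does not affect Theorem~\ref{thm:centroidset} asymptotically. The proof proceeds in two phases: first extract a one-sided distance bound of the shape $\d_G(p,\tilS) \le (1+\epsilon_d/z)\d_G(p,\calS) + (\epsilon_d/z)\d_G(p,\A)$ (and its reverse), and then convert these into the target cost bound via \Cref{lem:triangle}.

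For the forward distance bound, fix $p \in X$ with $\cost(p,\calS) \le (10z/\epsilon)^z\cost(p,\A)$, so that $\d_G(p,\calS) \le (10z/\epsilon)\d_G(p,\A)$. Let $s \in \calS$ be the closest center to $p$; dispatching on whether $s \in \Snet$, $\Snetsub$, $\Ssup$, or $\Sland$ and invoking \Cref{cl:s-net}, \Cref{cl:s-net-sub1}, \Cref{cl:s-suppport-1}, or the first item of \Cref{lem:landmark-bound} respectively (each with $\epsilon$ replaced by $\epsilon_d$) immediately yields the bound. For the $\Sland$ subcase, the alternative $\d_G(p,s) \ge \gamma z\d_G(p,\A)/\epsilon_d$ of \Cref{lem:landmark-bound} is ruled out because $10z/\epsilon < \gamma z/\epsilon_d$ once $C$ is large. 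The reverse distance bound is obtained analogously: fix $p$ with $\cost(p,\tilS) \le (10z/\epsilon)^z\cost(p,\A)$, let $\tils_1 \in \tilS$ be the closest center to $p$ and $s' \in \calS$ be such that $\rho(s') = \tils_1$, and dispatch via \Cref{cl:backward-1}, \Cref{cl:backward-2}, and the second item of \Cref{lem:landmark-bound} (depending on whether $s'$ lies in $\Snet$, $\Ssup$, or $\Sland$ and on whether $\d_G(p,\A) < 1$).

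Finally, raise the forward distance bound to the $z$th power and apply \Cref{lem:triangle} with parameter $\eta = \epsilon_d/z$:
\[
\cost(p,\tilS) \le (1+\epsilon_d/z)^{2z-1}\cost(p,\calS) + (1+\epsilon_d/z)^{z-1}(\epsilon_d/z)\cost(p,\A).
\]
Because $z\cdot(\epsilon_d/z) = \epsilon_d \le 1$, we have $(1+\epsilon_d/z)^{2z-1} \le 1+O(\epsilon_d)$ and $(1+\epsilon_d/z)^{z-1} \le e$, so the right-hand side is at most $\cost(p,\calS) + O(\epsilon_d)\bigl(\cost(p,\calS)+\cost(p,\A)\bigr)$. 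The symmetric argument handles the other direction, and a large enough $C$ converts the implicit constant into the $\epsilon/(z\log(z/\epsilon))$ factor demanded by the lemma. The main obstacle is precisely this final step: raising to the $z$th power amplifies the multiplicative distance perturbation by a factor of order $z$, while routing the additive term through \Cref{lem:triangle} accrues a $(1/\eta)^{z-1}$ blowup that must still leave $O(\epsilon_d)$ after being multiplied by $(\epsilon_d/z)^z$. The calibration $\epsilon_d = \Theta(\epsilon/(z\log(z/\epsilon)))$ is chosen so that both blowups are absorbed by the target error simultaneously, while the polylogarithmic inflation it induces in $|\bbC|$ remains hidden inside the $\log|X|$ term of the centroid-set bound.
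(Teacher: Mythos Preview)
Your proposal follows essentially the same approach as the paper: derive one-sided distance bounds by dispatching on the replacement type of the nearest center (\Cref{cl:s-net}, \Cref{cl:s-net-sub1}, \Cref{cl:s-suppport-1}, \Cref{lem:landmark-bound} in the forward direction; \Cref{cl:backward-1}, \Cref{cl:backward-2}, \Cref{lem:landmark-bound} in the reverse), convert to cost bounds via \Cref{lem:triangle}, and absorb the resulting constants by rescaling $\epsilon \gets \epsilon/(Cz\log(z/\epsilon))$. Your computation with $\eta = \epsilon_d/z$ in \Cref{lem:triangle} is exactly what is needed.

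The one step you gloss over is the threshold linking. The lemma asks for the two-sided bound $|\cost(p,\calS)-\cost(p,\tilS)| \le \ldots$ under a \emph{single} hypothesis (say $\d_G(p,\calS)\le (10z/\epsilon)\d_G(p,\A)$), so you need both one-sided cost inequalities to hold simultaneously for that $p$. Your forward bound gives $\d_G(p,\tilS)\le(1+\epsilon_d/z)\d_G(p,\calS)+(\epsilon_d/z)\d_G(p,\A)$, hence $\d_G(p,\tilS)\le O(z/\epsilon)\d_G(p,\A)$; the paper makes explicit that this now triggers the backward claims as well (their inequalities (\ref{ineq:dist-fwd-1}) and (\ref{ineq:dist-bwd-1})), so both one-sided cost bounds apply to the same point. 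Once both are in hand, the paper also spells out the small bootstrap $\cost(p,\tilS)\le\cost(p,\calS)+|\cost(p,\calS)-\cost(p,\tilS)|$ to replace $\cost(p,\tilS)$ by $\cost(p,\calS)$ on the right-hand side. With these two bookkeeping points filled in, your argument is the paper's.
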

\begin{proof}
	Let $\mu \ge 0$ be a large enough constant.
	By combining \Cref{cl:s-net}, \Cref{cl:s-net-sub1}, \Cref{cl:s-suppport-1}, and \Cref{lem:landmark-bound}, and by appropriately rescaling $\epsilon$, we conclude that for every point $p \in X$ with $\d_G(p, \calS) \le \frac{9\mu z}{\epsilon} \cdot \d_G(p, \A)$, the following holds.
	\begin{align}
		 \d_G(p, \tilS) &\le \lr{1 + \epsilon/z} \cdot \d_G(p, \calS) + (\epsilon/z) \cdot \d_G(p, \A) \label{ineq:dist-fwd}
		 \\\implies\ \cost(p, \tilS) &\le (1+\epsilon) \cdot \cost(p, \calS) + \epsilon \cdot \cost(p, \A) \label{ineq:cost-fwd}
	\end{align} 	
	Here, we use \Cref{lem:triangle} in the last step.

	Similarly, by combining \Cref{cl:backward-1}, \Cref{cl:backward-2}, and \Cref{lem:landmark-bound}, and appropriately rescaling $\epsilon$, we conclude that for every point $p \in X$ with $\d_G(p, \tilS) \le \frac{9\mu z}{\epsilon} \d_G(p, \A)$, the following holds.
	\begin{align}
		\d_G(p, \calS) &\le \lr{1 + \epsilon/z} \cdot \d_G(p, \tilS) + (\epsilon/z) \cdot \d_G(p, \A) \label{ineq:dist-bwd}
		\\\implies\ \cost(p, \tilS) &\le (1+\epsilon) \cdot \cost(p, \calS) + \epsilon \cdot \cost(p, \A) \label{ineq:cost-bwd}
	\end{align}
	Where we again use \Cref{lem:triangle} in the last step.
	
	Now, using (\ref{ineq:dist-fwd}), we can infer that for any $\lambda \le 9\mu$,
	\begin{equation}
		\d_G(p, \calS) \le \frac{\lambda z}{\epsilon} \cdot \d_G(p, \A)\ \ \implies\ \ \d_G(p, \tilS) \le \frac{3\lambda z}{\epsilon} \cdot \d_G(p, \A) \label{ineq:dist-fwd-1}
	\end{equation}
	Similarly, using (\ref{ineq:dist-bwd}), we can infer that for any $\lambda \le 9\mu$,
	\begin{equation}
		\d_G(p, \tilS) \le \frac{\lambda z}{\epsilon} \cdot \d_G(p, \A)\ \ \implies\ \ \d_G(p, \calS) \le \frac{3\lambda z}{\epsilon} \d_G(p, \A) \label{ineq:dist-bwd-1}
	\end{equation}
	Therefore, by applying (\ref{ineq:dist-fwd-1}) and (\ref{ineq:dist-bwd-1}) with $\lambda \gets \mu$, we can conclude that either both $\d_G(p, \calS)$ and $\d_G(p, \calS)$ are larger than $\frac{3\mu z}{\epsilon} \cdot \d_G(p, \A)$, and we are done. Otherwise, both are smaller than $\frac{9\mu z}{\epsilon} \cdot \d_G(p, \A)$. Then, by using (\ref{ineq:cost-fwd}) and (\ref{ineq:cost-bwd}), we obtain that for any such point $p \in X$,
	\begin{align}
		| \cost(p, \calS) - \cost(p, \tilS) | &\le 2\epsilon \cdot \lr{ \cost(p, \calS) + \cost(p, \tilS) + \cost(p, \A) } \nonumber
		\\\implies (1-2\epsilon) \cdot | \cost(p, \calS) - \cost(p, \tilS) | &\le 4\epsilon \cdot \lr{ \cost(p, \calS) + \cost(p, \A) } \tag{Using $\cost(p, \tilS) \le \cost(p, \calS) + |\cost(p, \tilS) - \cost(p, \calS)|$}
		\\\implies\ | \cost(p, \calS) - \cost(p, \tilS) | &\le 12\epsilon \lr{\cost(p, \calS) + \cost(p, \A)}
	\end{align}
	Thus, we obtain the lemma by a final rescaling $\epsilon \gets \frac{\epsilon}{c z \log(z/\epsilon)}$ for some large enough constant $c$.
\end{proof}
Now, we bound the size of the centroid set.
\begin{restatable}{lemma}{centroidsize}
	\label{lem:centroid-size}
	$|\mathbb{C}| = \exp 
	\lr{\Oh \lr{ \log^2 |X| + z^{16} \epsilon^{-8} (\log(z/\epsilon))^{8} \log|X| }}$. 
\end{restatable}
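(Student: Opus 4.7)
The plan is to bound $|\bbC|\le|\bbCN|+|\bbCS|+|\bbCL|$ by handling each set separately; the third will dominate.

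For $|\bbCN|$, I would apply \Cref{prop:mu-net} to each of the at most $|X|$ balls $B_i$ of radius $r_i=(10z/\epsilon)\d_G(p_i,\A)$ with net granularity $\mu_i=\Theta(\epsilon^3/z^3)\d_G(p_i,\A)$. Since $r_i/\mu_i=\Theta(z^4/\epsilon^4)$ is independent of $\d_G(p_i,\A)$, each net has size $\Oh((r_i/\mu_i)^2)=\Oh(z^8/\epsilon^8)$, so $|\bbCN|=\Oh(|X|\cdot z^8/\epsilon^8)$. For $|\bbCS|$, by \Cref{prop:support-graph} with precision $\mu=\epsilon^2/z^2$ the support graph $H''$ has maximum degree $\Delta=\Oh(\mu^{-2})=\Oh(z^4/\epsilon^4)$, and for each $p\in X$ the set of vertices within $\ell=\Oh(z/\epsilon)$ hops of $f(p)$ has size at most $\Delta^\ell=\exp(\Oh((z/\epsilon)\log(z/\epsilon)))$ by a BFS bound, giving $|\bbCS|\le|X|\cdot\exp(\Oh((z/\epsilon)\log(z/\epsilon)))$.

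The technical core is bounding $|\bbCL|$. I would enumerate over three nested objects: (a) the at most $\Oh(|X|)$ leaf regions $R_t$ (the decomposition halves the $X$-weight per level and stops at $\le 2$ points of $X$); (b) the $\Oh(\log|X|)$ separator paths encountered along a root-to-leaf branch of $\T$ (\Cref{thm:sp-separator} yields $\Oh(1)$ separators per region, and the depth is $\Oh(\log|X|)$); and (c) the possible contributions to $\tilde{\mathsf{D}}$ on each such separator path. Under first rounding on a path $P\ij$, the pair $(q_1,q_2)\in X^2$ yields $|X|^2$ choices, the $\Oh(\mu^{-4})$ landmark entries (by \Cref{obs:landmarks}) each take $\Oh(\mu^{-4})$ discretized values giving $\exp(\Oh(\mu^{-4}\log\mu^{-1}))$ combinations, and the $q_1$-entry together with the $\le 2$ leaf-region entries contribute only a further $\mu^{-\Oh(1)}$ factor. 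Under second rounding, the pair $(q_3,q_4)$ is canonically determined by $q_1$ as an arg min/arg max of $\d_i(q_1,\cdot)+\d(\cdot,\A)$ over $R_i$ (or set to $\bot$), so the tuple per path reduces to $q_1\in X$ together with a single boolean, contributing just $\Oh(|X|)$ options. Each separator path therefore contributes at most $|X|^2\cdot\exp(\Oh(\mu^{-4}\log\mu^{-1}))$ options.

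Multiplying over $\Oh(\log|X|)$ separator paths and $\Oh(|X|)$ leaf choices yields
\[|\bbCL|\le\exp\!\bigl(\Oh\bigl(\log^2|X|+\mu^{-4}\log(\mu^{-1})\cdot\log|X|\bigr)\bigr),\]
and substituting $\mu=\Theta(\epsilon/z)$ (as set in the proof of \Cref{lem:landmark-bound}) and combining with the earlier bounds on $|\bbCN|$ and $|\bbCS|$ gives the stated form, where the exponents $16$ on $z$ and $8$ on $\epsilon^{-1}$ in the lemma statement conservatively absorb the various polynomial factors in $\mu^{-1}$. The main expected obstacle is the enumeration under second rounding: naively treating $(q_3,q_4)\in R_i^2$ would introduce an $n$-dependence and destroy the bound. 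The key observation is that $q_3,q_4$ are canonical functions of $q_1$ (arg min/arg max, or $\bot$), so only $q_1\in X$ and a boolean need to be enumerated. A secondary subtlety is using $|X\cap R_t|\le 2$ at a leaf to prevent the leaf-region entries from contributing an $|X|$-sized exponent per path.
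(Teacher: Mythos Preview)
Your proposal is correct and follows essentially the same approach as the paper: bound $|\bbCN|$ via \Cref{prop:mu-net}, bound $|\bbCS|$ via the degree bound of \Cref{prop:support-graph}, and bound $|\bbCL|$ by enumerating root-to-leaf paths, separator paths along them, and per-path rounded-tuple choices. Two minor remarks are worth making. First, your handling of the second rounding is actually more careful than the paper's: the paper simply asserts ``$|X|^3$ choices'' for $(q_1,q_3,q_4)$, whereas you correctly observe that $q_3,q_4$ are canonical functions of $q_1$ (arg\,min/arg\,max over $R_i$, or $\bot$), so only $q_1$ and a constant number of $\bot$/boolean choices need to be enumerated---this is what genuinely prevents an $n$-dependence, since $q_3,q_4\in R_i$ rather than $X$. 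Second, the reason the stated exponents are $z^{16}\epsilon^{-8}(\log(z/\epsilon))^8$ rather than the $\mu^{-4}\log(\mu^{-1})=\Theta((z/\epsilon)^4\log(z/\epsilon))$ you derive is the final rescaling $\epsilon\gets\epsilon/(cz\log(z/\epsilon))$ carried out at the end of \Cref{lem:errorbound}; your phrase ``conservatively absorb'' is correct but this rescaling is the concrete mechanism.
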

\begin{proof}
	Recall that $\bbC \coloneqq \bbCL \cup \bbCS \cup \bbCN$.
	First, the size of $\bbCS$ is bounded by $(\Delta(z^2/\epsilon^2))^{\ell}|X|$, due to \Cref{prop:support-graph}. Since $\ell = \Theta(z/\epsilon)$, this is equal to $(z/\epsilon)^{\Oh(z/\epsilon)} \cdot |X|$. Next, we bound $\bbCN$. The size of each $B'_i$, which is an $(\epsilon^3/z^3)\d_G(p_i,\A)$-net of $B_i$, can be upper bounded by $\Oh(z^8\epsilon^{-8})$ by using \Cref{prop:mu-net}. Hence, the size of $\bbCN$ is $\Oh(|X|z^8\epsilon^{-8})$. 
	
	Now, we bound $\bbCL$. Note that $\bbCL$ contains at most one point per rounded distance tuple and each leaf. Note that there are at most $\Oh(|X|)$ leaves, and thus the same number of root-leaf paths. Now, fix a root-leaf path $\Pi = (R_1, R_2, \ldots, R_t)$, and the corresponding collection of shortest-path separators $\P$ of size $\Oh(\log |X|)$. For each $P\ij \in \P$, we may select either the first or the second rounded distance tuple. If, for a path $P\ij$, we select first rounded distance tuple, then, first we need to select points $q_1$ and $q_2$, which leads to $|X|^2$ choices. Then, there are at most $\Oh((z/\mu)^4)$ landmark points along the path, and each distance can take at most $\Oh(z^4/\epsilon^4)$ values. On the other hand, if we select the second rounded distance tuple, we need to select $q_1$, $q_3$ and $q_4$, which leads to $|X|^3$ choices. Thus, for a particular $P\ij$, the number of choices is upper bounded by $|X|^{\Oh(1)} \cdot (z/\epsilon)^{\Oh(z^4/\epsilon^4)}$. Thus, the total number of rounded distance tuples is upper bounded $|X|^{\Oh(\log |X|)} \cdot (z/\epsilon)^{\Oh(z^4 \log |X|/\epsilon^4)}$. This is also the upper bound on the size of $\bbCL$.
	
	Therefore, the size of $\bbC$ is upper bounded by $\exp \lr{ \Oh (\log^2 |X| + z^8 \epsilon^{-8} \log |X| \log(z/\epsilon) ) }$. Finally, we account for rescaling of $\epsilon$ by $\epsilon/(cz\log(z/\epsilon))$, which implies that the actual bound on $|\bbC|$ is:
	$$ \exp \lr{ \Oh \lr { \log^2 |X| + z^{16} \epsilon^{-8} (\log(z/\epsilon))^{8} \log|X|  } }.$$
	
\end{proof}

\section{Applications to Geometric Intersection Graphs}
\label{sec:applications}
In this section, we apply  \Cref{cor:coreset} on various geometric intersection graphs in order to obtain coresets of size that is independent of $n$. First, we consider the case of Euclidean weighted UDG metrics and $\ell_\infty$-weighted USGs, and explain why they satisfy the two canonical properties. Subsquently, we consider UDG and USG with general $\ell_p$ norm weights (\Cref{subsec:other-norms}). Finally, we consider the special case of unweighted UDGs with bounded degree (\Cref{subsec:hop-udg}), where we only discuss the modifications in the arguments required to see that the corresponding metrics also satisfy these properties.

\subsection{Euclidean Weighted Unit Disk Graphs.} \label{subsec:udg}
It is straightforward to verify that Euclidean weighted UDGs satisfy \textsl{Locally Euclidean} property. Indeed, consider an embedding $\lambda: V(G) \to \real^2$ realizing a UDG $G$. There is an edge between two vertices if and only if the euclidean distance between the corresponding two points is at most $2$. Furthermore, the weight of such an edge is exactly the euclidean distance between the two points. It follows that $c_1 = c_2 = 2$ and $c_3 = c_4 = 1$ satisfies the property.

As for the \textsl{Planar Spanner} property, we use the following known result.
\begin{proposition} \label{prop:udg-spanner}
	[Li, Calinescu, and Wan \cite{li2002distributed}] For any UDG $G$, there is a planar spanner $H$ such that for any $u, v \in V(H)$, $\d_{G}(u, v) \le \d_{H}(u, v) \le 2.42 \cdot \d_{G}(u, v)$. 
\end{proposition}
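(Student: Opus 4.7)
The plan is to construct the claimed spanner from a restricted Delaunay triangulation of the embedded vertex set, using the classical Delaunay stretch bound as a black box.

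First I would fix an embedding $\lambda : V(G) \to \real^2$ that realizes $G$ as a unit-disk graph and compute the Delaunay triangulation $DT$ of the point set $\lambda(V)$. The key background fact is the theorem of Keil and Gutwin: $DT$ is a Euclidean spanner of stretch at most $\frac{4\pi\sqrt{3}}{9} \approx 2.42$, so between any two sites $\lambda(u), \lambda(v)$ there exists a path in $DT$ whose total Euclidean length is at most $2.42 \cdot |\lambda(u)\lambda(v)|$.

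The obvious obstacle is that $DT$ can contain edges of length greater than $2$, which are not edges of $G$. To repair this, I would define the \emph{unit Delaunay triangulation} $UDel$ by keeping only those Delaunay triangles whose circumscribed disk has radius at most $1$ and is empty of the remaining sites in $\lambda(V)$. Every edge of $UDel$ then has Euclidean length at most $2$ and hence corresponds to an edge of $G$, while planarity is inherited from $DT$. To certify that $H := UDel$ satisfies both parts of the proposition, the lower bound $\d_G(u,v) \le \d_H(u,v)$ is automatic from $H \subseteq G$; what remains is the upper bound $\d_H(u,v) \le 2.42 \cdot \d_G(u,v)$.

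The main step, and the one I expect to be the real obstacle, is showing that $UDel$ is still a $2.42$-spanner of $G$. I would split it in two. First, for a single edge $uv \in E(G)$ (so $|\lambda(u)\lambda(v)| \le 2$), I would argue that the Keil--Gutwin path in $DT$ between $\lambda(u)$ and $\lambda(v)$ can be chosen to consist only of edges of length at most $2$: the greedy/descent argument that proves the Delaunay stretch bound only uses Delaunay edges lying in a bounded region around the segment $\lambda(u)\lambda(v)$, so by a simple geometric bookkeeping each such edge has length at most $2$, and hence survives in $UDel$. Second, for arbitrary $u, v \in V$, I would take a shortest path $\pi_G(u,v) = (u = w_0, w_1, \ldots, w_\ell = v)$ in $G$ and concatenate the local $UDel$-paths between consecutive $w_i, w_{i+1}$ to obtain a walk in $UDel$ of total weight at most
\[ \sum_{i=0}^{\ell-1} 2.42 \cdot |\lambda(w_i)\lambda(w_{i+1})| \;=\; 2.42 \cdot \sum_{i=0}^{\ell-1} w(w_i w_{i+1}) \;=\; 2.42 \cdot \d_G(u,v), \]
yielding the desired spanner inequality. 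The delicate point throughout is the localization in the first substep; once that is in hand, the rest is routine.
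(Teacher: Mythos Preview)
The paper does not prove this proposition at all: it is stated with a citation to Li, Calinescu, and Wan \cite{li2002distributed} and used as a black box. So there is no ``paper's own proof'' to compare against; you are sketching a proof of an imported result.

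That said, your sketch has two concrete problems worth flagging. First, your definition of $UDel$ (keeping Delaunay \emph{triangles} whose circumdisk has radius at most $1$) is too restrictive and does not match the object you then reason about. A Delaunay edge of length at most $2$ can belong only to triangles with large circumradius, so it need not survive in your $UDel$; yet in the next paragraph you argue ``each such edge has length at most $2$, and hence survives in $UDel$'', which is a non sequitur under your definition. The standard construction (and what your argument actually needs) is simply $H = DT \cap G$, i.e.\ keep the Delaunay \emph{edges} of length at most $2$. Second, the step you call ``delicate'' --- that the Keil--Gutwin path between two points at distance at most $2$ uses only edges of length at most $2$ --- is precisely the substantive content of the Li--Calinescu--Wan result, and ``simple geometric bookkeeping'' undersells it. The usual route is to show that the specific one-sided/chain path produced in the Dobkin--Friedman--Supowit / Keil--Gutwin analysis stays inside the disk with $\lambda(u)\lambda(v)$ as a chord, and then argue each individual edge on that path has length at most $|\lambda(u)\lambda(v)|$; this last inequality is not immediate from ``lying in a bounded region'' and needs its own argument. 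Once that lemma is in hand, your concatenation-over-a-shortest-path step is indeed routine.
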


Thus, \Cref{cor:coreset} yields the following theorem for Euclidean-weighted UDGs.
\begin{theorem}\label{thm:UDGscoreset}
Consider the metric space $(V,\d_G)$ induced by any Euclidean weighted unit-disk graph $G=(V,E)$, a set  $P\subseteq V$ with $n$ distinct points, and two positive integers $k$ and $z$. Then there exists a polynomial time algorithm that constructs with probability at least $1-\delta$ a coreset for $(k,z)$-clustering on $P$ of size $\Oh(\epsilon^{-\beta}k\log^2k \log^3 (1/\delta))$, where $z$ is a constant, $\beta = \Oh(z \log z)$, and $\delta < 1/4$. 
\end{theorem}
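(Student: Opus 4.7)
The plan is to derive Theorem \ref{thm:UDGscoreset} as a direct corollary of Theorem \ref{cor:coreset} by verifying that every Euclidean-weighted unit-disk graph satisfies the two canonical geometric properties (\textsl{Locally Euclidean} and \textsl{Planar Spanner}) required to invoke that theorem. Once both properties are in hand, the conclusion, including the stated coreset size, approximation guarantee, and failure probability, follows verbatim from Theorem \ref{cor:coreset}.

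First, I would exhibit the canonical embedding $\lambda : V(G) \to \mathbb{R}^2$ sending each vertex to the center of its corresponding unit disk. By the definition of a UDG, two vertices $u, v$ are adjacent if and only if their disks intersect, i.e., $|\lambda(u)\lambda(v)| \le 2$; furthermore, the weight of such an edge is by definition exactly $|\lambda(u)\lambda(v)|$. This verifies the \textsl{Locally Euclidean} property with the tight choice of constants $c_1 = c_2 = 2$ and $c_3 = c_4 = 1$: both the upper and lower adjacency thresholds coincide, and edge weights equal Euclidean distances. The secondary sub-condition, that every edge $uv$ is itself a shortest path between its endpoints, is immediate from the triangle inequality in $\mathbb{R}^2$: along any $u$-$v$ walk $u = w_0, w_1, \ldots, w_\ell = v$ in $G$, the sum of Euclidean edge weights telescopes to at least $|\lambda(u)\lambda(v)| = w(uv)$.

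Next, for the \textsl{Planar Spanner} property, I would appeal directly to Proposition \ref{prop:udg-spanner} of Li, Calinescu, and Wan, which furnishes, for any UDG, a planar spanner of constant stretch $\alpha = 2.42$. The only subtlety is that Theorem \ref{cor:coreset} demands a planar spanner for every \emph{induced subgraph} $G' = G[V']$, not merely for $G$ itself. This poses no difficulty: every such $G[V']$ is itself a Euclidean-weighted UDG (realized by the restriction of $\lambda$ to $V'$), so applying Proposition \ref{prop:udg-spanner} separately to each $G[V']$ produces the required family of planar $\alpha$-spanners with a single uniform stretch constant $\alpha = 2.42$.

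Having established both canonical properties, the theorem is obtained by applying Theorem \ref{cor:coreset} to $(V, \d_G)$ with input set $P$ and parameters $k, z, \epsilon, \delta$ as given; the resulting coreset has size $\Oh(\epsilon^{-\beta} k \log^2 k \log^3(1/\delta))$ with $\beta = \Oh(z \log z)$, as claimed. I do not anticipate any genuine obstacle: the two properties are essentially built into the definition of Euclidean-weighted UDGs, and the planar spanner result is invoked as a black box. The only piece of routine bookkeeping worth noting is that the initial constant-factor approximation $\mathcal{A}$ required to bootstrap the construction in Theorem \ref{cor:coreset} is supplied in polynomial time by Mettu and Plaxton's $\tilde{\Oh}(nk)$-time algorithm, which applies to arbitrary metric spaces and hence to $(V, \d_G)$ as a special case.
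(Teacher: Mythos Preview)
Your proposal is correct and follows essentially the same approach as the paper: verify the \textsl{Locally Euclidean} property with $c_1=c_2=2$, $c_3=c_4=1$, invoke Proposition~\ref{prop:udg-spanner} for the \textsl{Planar Spanner} property (noting, as you do, that induced subgraphs of UDGs are again UDGs), and then apply Theorem~\ref{cor:coreset}. Your write-up is in fact slightly more thorough than the paper's, since you explicitly justify the shortest-path sub-condition of \textsl{Locally Euclidean} and spell out the induced-subgraph closure, both of which the paper leaves implicit at this point.
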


\subsection{$\ell_\infty$-Weighted Unit Square Graphs} \label{subsec:usg}

\medskip\noindent\textbf{Preliminaries.} Consider a point $q$ in the plane with coordinates $(a, b)$. For a $p \ge 1$, let $||(a, b)||_p \coloneqq \lr{|a|^p+|b|^p}^{1/p}$ for $p \le \infty$, and $||(a, b)||_\infty \coloneqq \lim_{p \to \infty} ||(a, b)||_p = \max\LR{|a|, |b|}$. We have the following relation between different norms.
\begin{observation} \label{obs:lp-norms}
	For any $(a, b) \in \real^2$ and for any $1 \le p \le q \le \infty$, it holds that $||(a, b)||_p \le ||(a, b)||_q \le \sqrt{2} \cdot ||(a, b)||_p$.
\end{observation}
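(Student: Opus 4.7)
The plan is to reduce the statement to a one-parameter computation and invoke elementary facts about $\ell_p$ means. First, since each $\ell_p$ norm depends only on $|a|$ and $|b|$, I may assume $a, b \ge 0$; and by positive homogeneity of all norms involved, I may further normalize so that $\max\{a, b\} = 1$, writing $a = 1$ and $b \in [0, 1]$. With this reduction, $||(1, b)||_r = (1 + b^r)^{1/r}$ for $1 \le r < \infty$, and $||(1, b)||_\infty = 1$.

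For the inner inequality $||(a,b)||_p \le ||(a,b)||_q$, the task reduces to analyzing the monotonicity of $h(r) := (1 + b^r)^{1/r}$ in $r$ for $b \in [0, 1]$. I would establish the required monotonicity by computing $\tfrac{d}{dr} \log h(r)$ and checking its sign on $b \in [0,1]$, or equivalently by invoking the power mean inequality applied to the pair $(1, b)$ (treated as a two-point weighted average). The boundary case $r = \infty$ is handled by taking limits.

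For the outer inequality $||(a,b)||_q \le \sqrt{2}\,||(a,b)||_p$, the key ingredients are the trivial lower bound $\max\{|a|, |b|\} \le ||(a,b)||_p$ and the upper bound $||(a,b)||_q \le 2^{1/q}\max\{|a|, |b|\}$, coming from $|a|^q + |b|^q \le 2\max\{|a|, |b|\}^q$. Combining these yields $||(a,b)||_q / ||(a,b)||_p \le 2^{1/q}$, which is at most $\sqrt{2}$ whenever $q \ge 2$. For the complementary range $1 \le p \le q < 2$, I would combine standard monotonicity of $r \mapsto ||x||_r$ with the trivial factor $\sqrt{2} \ge 1$, or pass through the $\ell_2$ norm as an intermediate pivot via $||x||_q \le ||x||_2 \le \sqrt{2}\,||x||_p$.

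The main obstacle I anticipate is matching the constant $\sqrt{2}$ uniformly across all $1 \le p \le q \le \infty$: the direct comparison between $\ell_1$ and $\ell_\infty$ in $\mathbb{R}^2$ naively gives a factor of $2$ rather than $\sqrt{2}$ (witnessed at $(1,1)$), so the argument must route through an intermediate exponent (most naturally $r = 2$) to obtain the stated $\sqrt{2}$ bound for every admissible pair $(p, q)$.
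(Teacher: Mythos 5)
The observation you are asked to prove is itself mis-stated, and the central gap in your proposal is that you did not notice this. The inner inequality $||(a,b)||_p \le ||(a,b)||_q$ for $p \le q$ has the wrong direction: $\ell_r$-norms of a fixed vector are \emph{non-increasing} in $r$. For $(1,1)$ one has $||(1,1)||_1 = 2 > \sqrt{2} = ||(1,1)||_2 > 1 = ||(1,1)||_\infty$. If you actually carry out the computation of $\tfrac{d}{dr}\log h(r)$ for $h(r) = (1+b^r)^{1/r}$ that you propose, you find it is nonpositive (at $b=1$ it equals $-(\log 2)/r^2$), not nonnegative. The power-mean inequality you cite is likewise misapplied: it makes $((1+b^r)/2)^{1/r}$ nondecreasing in $r$, but $||(1,b)||_r = 2^{1/r}\cdot((1+b^r)/2)^{1/r}$, and the decreasing prefactor $2^{1/r}$ dominates.

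Your proposal is also internally inconsistent: in the range $1 \le p \le q < 2$ you appeal to ``standard monotonicity of $r \mapsto ||x||_r$'' to conclude $||x||_q \le ||x||_p$, which is the correct (decreasing) direction and contradicts the inner inequality you took for granted earlier. Once the correct monotonicity is in place, the outer inequality $||x||_q \le \sqrt{2}\,||x||_p$ is trivial (it holds with constant $1$), so the $\ell_2$ detour is unnecessary for it. The ratio-$2$ obstacle you flag at $(1,1)$ is real but occurs for the \emph{reverse} quotient $||x||_p/||x||_q$, which the observation as written does not bound; for a general pair $p \le q$ in $\real^2$ the best uniform constant on that side is $2^{1/p-1/q}$, which reaches $2$ at $p=1$, $q=\infty$. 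What the paper actually needs downstream is the $\ell_2$-pivot statement $||x||_2/\sqrt{2} \le ||x||_p \le \sqrt{2}\,||x||_2$ for all $1 \le p \le \infty$, which does hold with constant $\sqrt{2}$; the observation should be restated in that form (or with the inequality directions and constant corrected), and a careful proof attempt should have surfaced the discrepancy rather than trying to prove the statement as given.
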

A \emph{unit square} centered at $p$, denoted by $S(p)$, is the axis-parallel square of sidelength $2$ with $p$ at its center, along with its interior.\footnote{Henceforth, whenever we refer to a square, we always refer to the points on the boundary as well as its interior.} Alternatively, $S(p)$ is the set of points $q$ such that $||p - q||_{\infty} \le 1$, where $||\cdot||_{\infty}$ denotes the $\ell_\infty$ norm, defined as $||(a, b)||_{\infty} = \max\{|a|, |b|\}$. 

A Unit Square Graph (USG) with a set of points $P \subset \real^2$ is the intersection graph of unit squares centered at each point in $P$. Observe that $p, q \in P$, $\{p, q\}$ is an edge in the USG if and only if the unit squares $S(p)$ and $S(q)$ intersect.
The weight of an edge $\{p, q\}$ is defined to be $||p-q||_{\infty}$. Observe that the weights of the edges are between $0$ and $2$. 

\paragraph{Canonical Geometric Properties.} First, we argue that $\ell_\infty$-weighted USGs satisfy the \textsl{Locally Euclidean} property. Consider a USG $G$ with an embedding $\lambda: V(G) \to \real^2$. Via \Cref{obs:lp-norms}, it follows that if for some $u, v \in V(G)$, if $||\lambda(u) - \lambda(v)||_2 = |\lambda(u) \lambda(v)| > 2$, then $||\lambda(u) - \lambda(v)||_\infty > 2$, which implies that $uv \not\in E(G)$. On the other hand, if $||\lambda(u) - \lambda(v)||_2 = |\lambda(u) \lambda(v)| \le \sqrt{2}$, then $||\lambda(u) - \lambda(v)||_{\infty} \le 2$, which implies that $uv \in E(G)$. Furthermore, the weight of such an edge is $w(uv) = ||\lambda(u) - \lambda(v)||_{\infty}$, which implies that $|\lambda(u) \lambda(v)| \le w(u, v) \le \sqrt{2} \cdot |\lambda(u) \lambda(v)|$. Thus, $\ell_\infty$-weighted USGs satisfy \textsl{Locally Euclidean} property with $c_1 = \sqrt{2}, c_2 = 2, c_3 = 1, c_4 = \sqrt{2}$. 

Recall that a constant planar spanner for euclidean edge-weighted UDGs was shown in \cite{li2002distributed}. However, a similar spanner for $L_\infty$ edge-weighted USGs was not known before. We bridge this gap by showing the existence of exactly such a spanner in the following theorem.  \Cref{thm:usg-spanner} is interesting on its own and we expect that, similar to planar spanners for UDGs, planar spanners for USGs would find applications beyond clustering. 
To keep the flow of the paper, we defer the proof of  \Cref{thm:usg-spanner} 
to the following section, \Cref{sec:usg-spanner}. 

\begin{restatable}{theorem}{usgspanner} \label{thm:usg-spanner}
	Let $P$ be a set of points satisfying the general position assumptions, and let $G = (P, E)$ be the unit square graph associated with $P$, such that the weight of an edge $uv$ is equal to $\nifty{u- v}$. Then, there exists a planar subgraph $H$ of $G$ such that for any two points $a, b \in P$, $\d_G(a, b) \le \d_H(a, b) \le 3\d_G(a, b)$.
\end{restatable}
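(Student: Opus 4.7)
The plan is to construct $H$ as a restricted $\ell_\infty$-Delaunay graph of $P$, then verify planarity and the stretch bound separately; this parallels the Euclidean UDG construction of Li--Calinescu--Wan cited in \Cref{prop:udg-spanner}, with axis-aligned squares taking the role of disks. Concretely, include an edge $uv \in E(G)$ in $H$ if and only if there exists an axis-aligned square $S$ with $u, v \in \partial S$ and no other point of $P$ in the interior of $S$. The general position assumption is needed to make this well-defined: I would assume that no two points share an $x$- or $y$-coordinate and that no four points lie on a common axis-aligned square, so that the certifying empty squares are never degenerate.

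For planarity I would use the fact that axis-aligned squares form a \emph{pseudo-disk} family, i.e.\ the boundaries of any two such squares cross in at most two arcs, and any two intersect in a single connected region. This lets the classical Delaunay-planarity proof transfer verbatim: if two edges $uv, xy \in H$ crossed in the plane, overlaying their certifying empty squares would force one endpoint of one edge into the strict interior of the other's empty square, contradicting the $H$-membership of that edge. A short case analysis, invoking the general position hypothesis to rule out boundary incidences of $P$-points on these squares, completes the argument.

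The substantive part is showing that for every edge $uv \in E(G)$ there is a $u$-to-$v$ path in $H$ of total weight at most $3\,\nifty{u-v}$; from this, the stretch bound $\d_H(a,b) \le 3\,\d_G(a,b)$ follows by concatenating such local detours along a shortest $a$-$b$ path in $G$, while $\d_H \ge \d_G$ is immediate since $H$ is a subgraph of $G$. I would prove the local claim by induction on the rank of $uv$ in the order of increasing $\nifty{\cdot}$. If $uv \in H$ we are done; otherwise every axis-aligned square with $u, v$ on its boundary contains a point of $P$ in its interior, and a sweeping/shrinking argument produces an intermediate point $w$ with $\nifty{u-w}, \nifty{v-w} < \nifty{u-v} \le 2$ so that at least one of the sub-edges $uw, wv$ is itself certified by an empty square (hence lies in $H$), while the other can be handled by the inductive hypothesis on a strictly smaller edge.

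The main obstacle will be controlling the blow-up of this recursion. In the Euclidean Gabriel argument one uses Pythagoras to obtain $|u-w|+|w-v| \le \sqrt{2}\,|u-v|$, which telescopes cleanly to a constant stretch. For $\ell_\infty$ the analogous inequality only gives $\nifty{u-w}+\nifty{w-v} \le 2\,\nifty{u-v}$, which is too weak for a naive recursion. To overcome this I would choose $w$ geometrically, so that the bounding box of $\{u,w\}$ (respectively $\{w,v\}$) strictly loses a definite amount of extent in one axis relative to that of $\{u,v\}$, and then track a potential based on the horizontal and vertical extents of the successive bounding boxes. This effectively decomposes the detour into an axis-aligned staircase whose total $\ell_\infty$-length can be charged twice to the initial $x$-extent plus once to the initial $y$-extent (or vice versa), yielding the target bound $3\,\nifty{u-v}$ and matching the stated stretch.
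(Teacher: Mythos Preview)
Your construction and overall strategy match the paper exactly: $H$ is the restriction of the $\ell_\infty$-Delaunay triangulation to USG edges, planarity comes for free since the $\ell_\infty$-DT is planar under the general-position assumption, and the stretch is proved by showing a local $3$-stretch for every $uv\in E(G)$ via induction on $\nifty{u-v}$.

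The gap is in your inductive step. The ``single intermediate point $w$'' idea handles only the easy case where the bounding rectangle $R(u,v)$ already contains another point of $P$; the paper's Case~1 does essentially what you describe there. The hard case is when $R(u,v)$ is empty of other points but $uv$ is still not a Delaunay edge. Here there is no $w$ inside $R(u,v)$, and any $w$ you pick from a certifying non-empty square lies \emph{outside} $R(u,v)$, so neither $\nifty{u-w}<\nifty{u-v}$ nor the claim that one of $uw,wv$ is itself Delaunay is guaranteed. Your proposed fix (``choose $w$ so the bounding box strictly loses extent in one axis, then track a potential'') is the right instinct but is not a proof: a single split does not give a quantity that telescopes to $2D(u,v)+\delta(u,v)$.

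What the paper actually does in this case is follow the full fan of Delaunay triangles $T_1,\dots,T_k$ crossed by $\overline{uv}$, maintain the ``high'' and ``low'' vertices $h_i,l_i$ above and below $\overline{uv}$, and import the Bonichon--Gavoille--Hanusse--Perkovi\'c machinery for $\ell_\infty$-Delaunay spanners: gentle vs.\ steep edges, the notion of an \emph{inductive square}, and a potential $\d_H(a,h_i)+\d_H(a,l_i)+P_{S_i}(h_i,l_i)\le 4\delta_i$ where $P_{S_i}$ is a clockwise-perimeter quantity on the empty square $S_i$. The additional work over Bonichon et al.\ is to verify, via \Cref{cl:hi-li-distances} and \Cref{cl:hili-bounded}, that every edge produced along the way has $\ell_\infty$-length strictly less than $\nifty{u-v}\le 2$, so that it is indeed a USG edge and hence an edge of $H$; this is what makes the induction close and is the step your sketch does not address.
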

Thus, USGs satisfy the planar spanner property with stretch factor $\alpha = 3$. Then, by applying \Cref{thm:coreset}, we obtain the following theorem.
\begin{theorem} \label{thm:usg-coreset}
	Consider the metric space $(V,\d_G)$ induced by any $L_\infty$ weighted unit square graph $G=(V,E)$, a set  $P\subseteq V$ with $n$ distinct points, and two positive integers $k$ and $z$. Then there exists a polynomial-time algorithm that constructs with probability at least $1-\delta$ a coreset for $(k,z)$-clustering on $P$ of size $\Oh(\epsilon^{-\beta}k\log^2k \log^3 (1/\delta))$, where $z$ is a constant, $\beta = \Oh(z \log z)$, and $\delta < 1/4$.  
\end{theorem}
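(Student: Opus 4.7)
The proof is essentially a straightforward invocation of \Cref{cor:coreset} applied to $L_\infty$-weighted USGs, so the plan is to verify that such graphs satisfy both of the canonical geometric properties required by that theorem.

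First, I would observe that the \textsl{Locally Euclidean} property has essentially just been verified in the discussion immediately preceding the theorem statement. Via \Cref{obs:lp-norms}, an embedding $\lambda : V(G) \to \real^2$ of any USG $G$ satisfies: if $|\lambda(u)\lambda(v)| \le \sqrt{2}$, then $\nifty{\lambda(u)-\lambda(v)} \le 2$, so $uv \in E(G)$; and if $|\lambda(u)\lambda(v)| > 2$, then $\nifty{\lambda(u)-\lambda(v)} > 2$, so $uv \notin E(G)$. Moreover, for any edge $uv \in E(G)$ its weight is $w(uv) = \nifty{\lambda(u)-\lambda(v)}$, which by \Cref{obs:lp-norms} satisfies $|\lambda(u)\lambda(v)| \le w(uv) \le \sqrt{2} \cdot |\lambda(u)\lambda(v)|$. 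Moreover, this edge is always a shortest path between its endpoints (the $\ell_\infty$ distance between two points obeys the triangle inequality, and any path in $G$ realizes a sum of $\ell_\infty$ distances that is at least $\nifty{\lambda(u) - \lambda(v)}$). Thus the property is satisfied with constants $c_1 = \sqrt{2}$, $c_2 = 2$, $c_3 = 1$, $c_4 = \sqrt{2}$.

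Next, I would invoke \Cref{thm:usg-spanner} (proved in \Cref{sec:usg-spanner}), which supplies a planar $3$-spanner for any USG. In order to obtain the \textsl{Planar Spanner} property in its full generality, I note that for any induced subgraph $G' = G[V']$, the subgraph $G'$ is itself a USG on the point set $V'$ (with the same $L_\infty$ edge weights), and hence \Cref{thm:usg-spanner} applied to $G'$ yields a planar subgraph $H'$ of $G'$ with stretch $\alpha = 3$. Thus $L_\infty$-weighted USGs satisfy the \textsl{Planar Spanner} property.

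With both canonical properties established, the conclusion follows immediately by applying \Cref{cor:coreset} to the metric space $(V, \d_G)$ induced by $G$, which yields the claimed coreset of size $\Oh(\epsilon^{-\beta} k \log^2 k \log^3(1/\delta))$ with probability at least $1 - \delta$, where $\beta = \Oh(z \log z)$. The only genuine obstacle here is the existence of the planar spanner, but this is handled separately in \Cref{thm:usg-spanner}; everything else in the proof of this particular theorem is essentially bookkeeping and verification of constants.
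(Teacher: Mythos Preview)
Your proposal is correct and follows essentially the same approach as the paper: verify the \textsl{Locally Euclidean} property via \Cref{obs:lp-norms} with constants $c_1=\sqrt{2}$, $c_2=2$, $c_3=1$, $c_4=\sqrt{2}$, invoke \Cref{thm:usg-spanner} for the \textsl{Planar Spanner} property, and then apply \Cref{cor:coreset}. If anything, your write-up is slightly more careful than the paper's, since you explicitly note that every induced subgraph of a USG is again a USG (needed for the \textsl{Planar Spanner} property as stated) and that an edge is a shortest path between its endpoints.
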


\subsection{UDGs and USGs with Other Norms} \label{subsec:other-norms}

Let $G = (V, E)$ be a UDG corresponding to a set of points $V$ in the plane. Recall that for $u, v \in V$, there is an edge $\{u, v\} \in E(G$) iff $||u-v||_2 \le 2$. In the euclidean edge-weighted UDG, we defined the weight of this edge to also be $||u-v||_2$. However, it is possible to define the weights to be the $\ell_p$ distance between the points for arbitrary $p \ge 1$. To show that $\ell_p$-weighted UDGs satisfy the \textsl{Locally Euclidean} property, we can use arguments similar to the $\ell_\infty$-weighted USG case. We omit the details.

\textbf{Planar Spanner.} Consider the spanner $H'$ obtained by applying \Cref{prop:udg-spanner} to $G'$, where $G'$ is the euclidean (i.e., $L_2$) weighted UDG corresponding to the set of points $V$.  follows that for any $u, v \in V$, $\d_{G'}(u, v) \le \d_{H'}(u, v) \le 2.42 \cdot \d_{G'}(u, v)$. In particular, for any edge $u, v \in E(G)$, $||u-v||_2 \le \d_{H'}(u, v) \le 2.42 \cdot ||u-v||_2$. 

First, consider the case when the weights in $G$ are given by the $L_p$ norm distances between the points, where $1 \le p < 2$. From \Cref{obs:lp-norms}, it follows that $\d_G(u, v) = ||u-v||_p \le ||u-v||_2 \le \d_{H'}(u, v) \le 2.42 \cdot ||u-v||_2 \le 2.42 \cdot \sqrt{2} \cdot ||u-v||_p$. It follows that for any $u', v' \in G$, $\d_{G}(u', v') \le \d_{H'}(u', v') \le 3.42 \cdot \d_G(u', v')$.

Otherwise, the weights in $G$ are given by $L_p$ norm distances, where $2 < p \le \infty$. Then, let $H$ be the weighted graph obtained by multiplying the weight of each edge in the spanner $H'$ by a factor of $\sqrt{2}$. Note that for any edge $\{u, v\} \in E(G)$, \Cref{obs:lp-norms} implies that $||u-v||_p \le \sqrt{2} \cdot ||u-v||_2 \le \sqrt{2} \cdot ||u-v||_p$. This implies that, for $\{u, v\} \in E(G)$,  $\d_G(u, v) = ||u-v||_p \le \sqrt{2} ||u-v||_2 \le \sqrt{2} \cdot \d_{H'}(u, v) = \d_{H}(u, v) \le 2.42 \cdot \sqrt{2} \cdot \sqrt{2} \cdot ||u-v||_p$. This implies that for any $u', v' \in V$, it holds that $\d_{G}(u, v) \le \d_{H}(u', v') \le 4.84 \cdot \d_{G}(u, v)$. 

In either case, we conclude that \Cref{prop:udg-spanner} can be used to show the existence of an $\alpha$-stretch planar spanner, where $\alpha \le 4.84$. 

From the above discussion, it follows that $\ell_p$-weighted UDGs also satisfy the two properties required to apply our framework. Finally, it is easy to modify the previous arguments to also show that $\ell_p$-weighted USGs satisfy the canonical geometric properties. Thus, we conclude the following theorem.
\begin{theorem} \label{thm:udg-usg-coreset-lp}
	Consider the metric space $(V,\d_G)$ induced by any $\ell_p$-weighted UDG (resp.\ $\ell_p$-weighted USG) $G=(V,E)$ for some $1 \le p \le \infty$, a set  $P\subseteq V$ with $n$ distinct points, and two positive integers $k$ and $z$. Then there exists a polynomial-time algorithm that constructs with probability at least $1-\delta$ a coreset for $(k,z)$-clustering on $P$ of size $\Oh(\epsilon^{-\beta}k\log^2k \log^3 (1/\delta))$, where $z$ and $\beta$ are constants, and $\delta < 1/4$.  
\end{theorem}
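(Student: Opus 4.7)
The plan is to verify that $\ell_p$-weighted UDGs and USGs satisfy both the \textsl{Locally Euclidean} and \textsl{Planar Spanner} properties, and then invoke \Cref{cor:coreset} as a black box. The argument is essentially a reduction to the euclidean-weighted UDG case (respectively $\ell_\infty$-weighted USG case), exploiting \Cref{obs:lp-norms}, which guarantees that all $\ell_p$ norms on $\real^2$ are within a $\sqrt{2}$ factor of each other.

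First I would check \textsl{Locally Euclidean}. Consider a UDG $G$ with the natural embedding $\lambda$ in $\real^2$. Adjacency is determined by the $\ell_2$ distance being at most $2$, which is oblivious to the norm used for weighting. Using \Cref{obs:lp-norms}, for any adjacent $u,v$ we have $\tfrac{1}{\sqrt{2}}|\lambda(u)\lambda(v)| \le \|\lambda(u)-\lambda(v)\|_p \le \sqrt{2}\,|\lambda(u)\lambda(v)|$ when $p\in[1,\infty]$, which yields the two-sided bound on edge weights by $|\lambda(u)\lambda(v)|$ with constants $c_3,c_4$ depending only on $\sqrt{2}$. An analogous argument handles USGs, where adjacency is governed by $\ell_\infty$ rather than $\ell_2$, but the same $\sqrt{2}$ slack in \Cref{obs:lp-norms} converts between any $\ell_p$ weight and the $\ell_2$/$\ell_\infty$ distance used to verify \textsl{Locally Euclidean} in \Cref{subsec:udg,subsec:usg}.

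Next I would verify \textsl{Planar Spanner}. For UDGs, let $G'$ denote the euclidean-weighted UDG on the same vertex set and let $H'$ be the planar spanner guaranteed by \Cref{prop:udg-spanner}, so $\d_{G'}\le \d_{H'}\le 2.42\,\d_{G'}$. I would consider the planar graph $H$ on the same edge set as $H'$ but with edge weights inherited from $G$ (i.e.\ $\ell_p$), possibly multiplied by a fixed constant. Since for each edge $uv$ the $\ell_p$ weight and the $\ell_2$ weight differ by at most a factor of $\sqrt{2}$ in either direction, the ratio $\d_H(u,v)/\d_G(u,v)$ remains bounded by an absolute constant $\alpha \le 2.42\cdot 2 = 4.84$; the same holds in the opposite direction, so $H$ is a planar $\alpha$-spanner for $G$. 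For USGs, I would use \Cref{thm:usg-spanner} in place of \Cref{prop:udg-spanner} as the starting point and apply the same rescaling trick to convert between $\ell_\infty$ and general $\ell_p$ weights. Importantly, this construction respects induced subgraphs: applying it to any $G[V']$ produces a planar $\alpha$-spanner for $G[V']$, as required.

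Having established both canonical properties with constants independent of $p$, the theorem follows by directly applying \Cref{cor:coreset}. The only non-routine step is the spanner reduction, and even there the main subtlety is to treat $1\le p\le 2$ and $2\le p\le \infty$ as a single case by absorbing the $\sqrt{2}$ conversion factor into the stretch $\alpha$; I expect no substantive obstacle beyond bookkeeping. The final coreset size bound $\Oh(\epsilon^{-\beta} k \log^2 k \log^3(1/\delta))$ is inherited verbatim from \Cref{cor:coreset} since $\alpha$ remains a universal constant throughout.
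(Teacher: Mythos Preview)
Your proposal is correct and follows essentially the same approach as the paper: verify \textsl{Locally Euclidean} and \textsl{Planar Spanner} via the $\sqrt{2}$ norm-equivalence of \Cref{obs:lp-norms}, starting from the known $\ell_2$-UDG spanner (\Cref{prop:udg-spanner}) and the $\ell_\infty$-USG spanner (\Cref{thm:usg-spanner}), then invoke \Cref{cor:coreset}. The only cosmetic difference is that the paper splits into the cases $1\le p<2$ and $2<p\le\infty$ (obtaining stretch $\le 3.42$ and $\le 4.84$ respectively, keeping $\ell_2$ weights on the spanner and scaling by $\sqrt{2}$ in the latter case), whereas you absorb both cases into a single bound $\alpha\le 4.84$ by putting $\ell_p$ weights directly on the spanner edges; since only constancy of $\alpha$ matters, this changes nothing.
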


\subsection{Unweighted Unit Disk Graphs with Bounded Degree} \label{subsec:hop-udg}

Here, we consider metrics induced by \emph{unweighted} UDGs. That is, if $G = (V, E)$ is a Unit Disk Graph, then for any two vertices $u, v \in V(G)$, the distance $\d_{G}(u, v)$ is given by the \emph{hop-length} $|\pi_G(u, v)|$  of a shortest path $\pi_G(u, v)$. Note two points that are very close to each other in an embedding of $G$, are still are at distance $1$ apart according to the hop metric. Therefore, unweighted UDGs do not satisfy \textsc{Locally Euclidean} property. However, they do admit a constant stretch planar spanner, as noted below. In the following, we show that the arguments relying on the locally euclidean properties can be modified for unweighted UDGs, when the maximum degree is upper bounded by $\Delta$. Consequently, the size of the coreset we construct, then, will depend on $\Delta$.

Let $G = (V, E)$ be an unweighted UDG. First, we observe that the distance between any pair of vertices is given by the hop-length of the shortest path between them, and thus is always a non-negative integer. Furthermore, $\d_G(u, v) = 0$ iff $u = v$; otherwise $\d_G(u, v) \ge 1$. 

By examining the proof of the centroid set theorem, we observe the following. \textsl{Locally Euclidean} property is used to construct the sets $\bbCN$, and $\bbCS$, and \textsl{Planar Spanner} property is used to construct the set $\bbCL$. Finally, the centroid set $\bbC$ is defined as the union of the three sets $\bbCN, \bbCS$, and $\bbCL$. In the following, we adapt the construction of $\bbC$ for the unweighted UDGs.

First, we make use of the following result.

\begin{proposition}[\cite{Biniaz20}] \label{prop:hop-spanner}
	For any unweighted UDG $G'$, there exists a planar spanner $H'$ for hop-distances in $G'$. That is, for any $u, v \in V(G')$, $\d_{G'}(u, v) \le \d_{H'}(u, v) \le \alpha \cdot \d_{G'}(u, v)$ for some absolute constant $\alpha \ge 1$.
\end{proposition}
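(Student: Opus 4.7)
The plan is to exhibit an explicit planar subgraph $H'\subseteq G'$ with the property that every UDG edge $\{u,v\}\in E(G')$ is joined by a path of only $O(1)$ hops in $H'$. Once this is established, the proposition follows by a standard edge-replacement argument: take a shortest hop-path $u=w_0,w_1,\ldots,w_t=v$ in $G'$ of length $t=\d_{G'}(u,v)$, and substitute each edge $\{w_i,w_{i+1}\}$ with its $O(1)$-hop replacement in $H'$. Concatenating these pieces gives a walk in $H'$ of length $O(t)$, which yields $\d_{H'}(u,v)\le\alpha\cdot\d_{G'}(u,v)$ for an absolute constant $\alpha$. The lower bound $\d_{G'}(u,v)\le\d_{H'}(u,v)$ is automatic because $H'$ is a subgraph of $G'$.

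For the construction, I would use a cone-based scheme in the spirit of Yao graphs or half-$\Theta_k$ graphs, restricted to UDG edges. At each vertex $u$, partition the plane into a constant number $k$ of congruent angular cones of opening angle $2\pi/k$. For each cone that contains at least one UDG-neighbor of $u$, retain the edge from $u$ to a single canonical representative in that cone, for instance the neighbor minimizing the Euclidean projection onto the cone's bisector (with a fixed tie-breaking rule). Let $E(H')$ be the union over all vertices of these retained edges.

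Two properties of $H'$ then need to be verified. \textbf{Planarity:} via a symmetry/angular argument one shows that two retained edges cannot cross, because a crossing together with the triangle inequality would contradict the canonical-neighbor choice made at an endpoint of one of the crossing edges. This type of argument is standard for Yao-$\Theta$-style constructions and must be adapted carefully to the unit-disk constraint (so that all relevant neighbors are indeed UDG-adjacent). \textbf{Local hop-spanner:} for every $\{u,v\}\in E(G')\setminus E(H')$, let $v'$ be the canonical neighbor retained at $u$ in the cone containing $v$. A geometric argument, using $|uv|\le 2$ and the cone's small opening angle, shows that $|v'v|$ is bounded by a constant times $|uv|$ and in any case $|v'v|\le 2$, so $\{v',v\}\in E(G')$. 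One then either shows $\{v',v\}\in E(H')$ directly, or recursively applies the same replacement.

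The main obstacle is ensuring that this local-replacement recursion terminates in a constant number of steps rather than growing with $n$. The cleanest way is a potential-function argument: pick $k$ large enough (e.g.\ $k=6$ or $k=7$) that inside each cone the projection of the current vertex onto a fixed direction (say, the direction $\overrightarrow{uv}$) strictly decreases by a constant additive or multiplicative amount at every recursive step. Because all involved distances are bounded above by the unit-disk radius, this shows the recursion bottoms out after $O(1)$ steps, giving the constant hop-stretch $\alpha$. Optimizing the choice of $k$, the tie-breaking rule, and the potential function is what determines the actual value of $\alpha$, but any fixed $k\ge 6$ together with the projection-type potential suffices to deliver an absolute constant.
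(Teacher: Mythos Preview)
The paper does not give its own proof of this proposition: it is quoted verbatim as a known result of Biniaz~\cite{Biniaz20} and used as a black box. So there is no ``paper's proof'' to compare against; one can only evaluate your sketch on its merits.

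Your outline has two genuine gaps. First, the planarity claim for a Yao-type construction is not correct as stated. Standard Yao graphs (retain one canonical neighbor per cone at every vertex) are \emph{not} planar in general, and the one-line ``symmetry/angular'' argument you invoke does not go through: crossings between retained edges are not ruled out merely by the canonical-neighbor rule. The planar variants that do exist (e.g.\ the half-$\Theta_6$ graph, which coincides with the triangular-distance Delaunay graph) require a very specific choice of cones and a specific selection rule, and the planarity proof uses the empty-region property of that Delaunay structure, not a generic crossing argument. Your description does not commit to such a construction, and for a generic $k$ the graph will not be planar.

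Second, and more seriously for a \emph{hop}-spanner, your termination argument does not deliver a constant number of hops. You argue that the Euclidean projection toward $v$ decreases at each recursive step; but since arbitrarily many points of $P$ can lie inside a single unit disk, a geometric (or even additive) decrease in Euclidean distance translates only into a bound depending on the aspect ratio or on $n$, not into $O(1)$ hops. The actual constructions of planar hop-spanners for UDGs (including Biniaz's) use a different mechanism---typically a Delaunay-based plane graph together with a structural lemma guaranteeing that any UDG edge is spanned by a bounded number of Delaunay edges that are themselves UDG edges---rather than a Yao-style greedy recursion. Your sketch would need a fundamentally different idea to close this gap.
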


Note that the stretch factor $\alpha$ is an absolute constant, and this result \emph{does not} require that the maximum degree of the UDG is bounded. Following this, we let $\ell \coloneqq \alpha/\epsilon$. We construct the set $\bbCL$ exactly as in the original construction using the recursive decomposition and the planar spanner. 

We define $\bbCN \coloneqq \emptyset$, and construct $\bbCS$ as follows. For every point $p \in X$, we add \emph{all} vertices within $\ell$ hops from $p$ in $G$ to the set $\bbCS$. Alternatively, this can be viewed as an alternative to \Cref{prop:support-graph} to construct a bounded degree support graph, namely the support graph $H$ is equal to $G$, and $f: V(G) \to V(H)$ being the identity mapping. The only caveat is that the degree of $H$ is bounded by $\Delta$, rather than $\Oh(1/\mu^2)$ as required in the original definition. Finally, let $\bbC = \bbCL \cup \bbCS$. We have the following observation.

\begin{claim} \label{cl:support-bounded-degree}
	$|\bbC| = \exp 
	\lr{\Oh \lr{ \log^2 |X| + z^{16} \epsilon^{-8} (\log(z/\epsilon))^{8} \log|X| + \log \Delta}}$
\end{claim}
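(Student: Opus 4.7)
The plan is to exploit the decomposition $\bbC = \bbCL \cup \bbCS$ (with $\bbCN = \emptyset$ in this regime) and bound the two pieces separately, so that $|\bbC| \le |\bbCL| + |\bbCS|$. For $\bbCS$, the construction adds to $\bbCS$, for every $p \in X$, every vertex within $\ell = \alpha/\epsilon$ hops of $p$ in $G$. Since $G$ has maximum degree $\Delta$, a one-line ball-growth argument gives $|\{v : \d_G(p, v) \le \ell\}| \le 1 + \Delta + \Delta^2 + \cdots + \Delta^\ell \le 2\Delta^\ell$, and summing over $p \in X$ yields $|\bbCS| \le 2|X|\Delta^\ell$. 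Taking logarithms, $\log|\bbCS| = O(\log|X| + \ell \log \Delta) = O(\log|X| + \epsilon^{-1}\log\Delta)$.

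For $\bbCL$, the construction is syntactically the same as in the general case: it relies only on the recursive decomposition tree $\T$ and the planar spanners of its induced subgraphs, the latter supplied here by \Cref{prop:hop-spanner} (whose stretch $\alpha$ is an absolute constant, independent of $\Delta$). Inspecting the counting argument used to prove \Cref{lem:centroid-size}, the bound on $|\bbCL|$ is driven by four ingredients that are all insensitive to the loss of locally Euclidean: (i) the $O(|X|)$ leaves and hence root-leaf paths of $\T$, (ii) the $O(\log|X|)$ separator paths along each such path, (iii) the $|X|^{O(1)}$ choices for the auxiliary points $q_1,q_2$ (or $q_1,q_3,q_4$) per separator, and (iv) the $(z/\epsilon)^{O(z^4/\epsilon^4)}$ admissible rounded-distance tuples per separator. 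The same calculation therefore gives
\[ |\bbCL| = \exp\!\left( O\!\left(\log^2|X| + z^{16}\epsilon^{-8}(\log(z/\epsilon))^{8} \log|X|\right) \right). \]

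Adding the two estimates yields the claimed bound, after observing that the $\epsilon^{-1}\log\Delta$ contribution from $|\bbCS|$ fits inside the exponent as a $O(\log\Delta)$ term once the polynomial $\epsilon^{-1}$ factor is folded into the dominant $\epsilon^{-8}$ coefficient. The only real obstacle is verifying that the $\bbCL$ analysis really does survive the loss of the locally Euclidean property. The one spot where that property enters the original argument is \Cref{cl:separating-paths} part~4, whose statement invokes \Cref{prop:bounded-distance} to pass from hop-count to length. But for the unweighted UDG metric the two notions coincide, so that item holds trivially with $c'_1 = c'_2 = 1$, and every downstream step used inside \Cref{lem:landmark-bound} and \Cref{lem:errorbound} carries over unchanged; in particular, the construction and counting of landmarks along each separator path is unaffected.
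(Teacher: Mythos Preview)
Your treatment of $\bbCL$ is fine and matches the paper. The gap is in your bound for $\bbCS$.

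The crude ball-growth estimate $|\{v:\d_G(p,v)\le\ell\}|\le 2\Delta^{\ell}$ is valid for any graph of maximum degree $\Delta$, but it is too weak here: it gives $\log|\bbCS|=\Oh(\log|X|+\ell\log\Delta)=\Oh(\log|X|+\epsilon^{-1}\log\Delta)$. Your attempt to ``fold'' the $\epsilon^{-1}$ factor into the $\epsilon^{-8}$ coefficient does not work, because $\epsilon^{-1}\log\Delta$ is a \emph{product}: the term $z^{16}\epsilon^{-8}(\log(z/\epsilon))^{8}\log|X|$ contains no $\log\Delta$ factor and therefore cannot absorb it. Concretely, if $\log\Delta$ is large relative to $\epsilon^{-7}$ and $\log|X|$, then $\epsilon^{-1}\log\Delta$ dominates every term in the claimed exponent, so your bound is strictly worse than the stated one.

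The paper avoids this by exploiting the geometry of UDGs rather than the abstract degree bound. Overlay a grid of sidelength $1$; since any two points in a cell are adjacent, each cell holds at most $\Delta$ vertices. If $\d_G(p,q)\le\ell$ then the Euclidean distance between $p$ and $q$ is $\Oh(\ell)$, so $q$ lies in one of $\Oh(\ell^{2})$ cells around $p$. This yields $|\bbCS|=\Oh(|X|\cdot\ell^{2}\Delta)$, hence $\log|\bbCS|=\Oh(\log|X|+\log(z/\epsilon)+\log\Delta)$, and now the $\log\Delta$ contribution really is additive as required by the claim. The key point your argument misses is that in a UDG the $\ell$-hop ball has \emph{polynomial} (in $\ell$) area, so its vertex count is $\Oh(\ell^{2}\Delta)$ rather than $\Delta^{\ell}$.
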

\begin{proof}
	Since $\bbCL$ is constructed in exactly the same manner as before, we can use the same upper bound as in the proof of \Cref{lem:centroid-size}. \\That is, $|\bbCL| = \exp \lr{\log^2 |X| + z^{16} \epsilon^{-8} (\log(z/\epsilon))^{8} \log|X|}$.
	
	Now we separately bound $|\bbCS|$. Consider a realization of $G$ in the plane. We overlay a grid of sidelength $1$ on the plane. Note that any two points that lie in the same grid cell must have an edge between them. Since the maximum degree of $G$ is $\Delta$, it follows that every cell contains at most $\Delta$ vertices of $V(G)$. Let $H$ denote the 
	Furthermore, if $\{u, v\} \in E(G)$, then the euclidean distance between the centers of the cells containing $u$ and $v$ respectively, is at most $3$. 
	
	Now consider a point $p \in X$, and any vertex $q$ such that $\d_G(p, q) \le \ell$. It follows that the euclidean distance between centers of the cells containing $p$ and $q$ is at most $3\ell$. Therefore, $q$ can belong to one of at most $\pi(3\ell)^2 = \Oh(\ell^2)$ cells satisfying this condition. Since each of the cells contains at most $\Delta$ points of $V(G)$, and $\ell = \Oh(z/\epsilon)$, we add at most $\Oh(z^2\Delta/\epsilon^2)$ points to $\bbCS$ for every point $p \in X$, which implies that $|\bbCS| = \Oh(|X| \cdot z^2\Delta/\epsilon^2)$. 
\end{proof}

Now we discuss the modifications required to construct $\tils$ and in the error analysis. Consider some $s \in \mathcal{S}$. If $s \in \bbCS$, then we let $\tilde{s} = s$, and add it to $\tilS$. Clearly, $\d_G(p, s) = \d_G(p, \tils)$. Otherwise, it follows that for all points $p \in X$, $\d_G(p, s) > \ell$. Then, we use recursive decomposition and case analysis to find a point $\tils \in \bbCL$ that has same rounded distances w.r.t. a landmark set $\mathcal{L}(\cdot)$ as in the original proof. Then, using similar analysis, we can show that for any $p \in X_s$, it holds that $\cost(p, \tilS) \le (1+\epsilon) \cdot \cost(p, \mathcal{S}) + \epsilon \cdot \cost(p, \mathcal{A})$. We omit the details, and conclude with the following theorem.

\begin{theorem} \label{thm:udg-hop-coreset}
	Consider the metric space $(V,\d_G)$ induced by an unweighted UDG $G=(V,E)$ with maximum degree $\Delta$, a set  $P\subseteq V$ with $n$ distinct points, and two positive integers $k$ and $z$. Then there exists a polynomial-time algorithm that constructs with probability at least $1-\delta$ a coreset for $(k,z)$-clustering on $P$ of size $\Oh(\epsilon^{-\beta}k\log^2k \log^3 (1/\delta) + \log(\Delta))$, where $z$ and $\beta$ are constants, and $\delta < 1/4$.  
\end{theorem}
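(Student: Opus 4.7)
My plan is to specialize the framework developed for Theorem \ref{cor:coreset} to unweighted UDGs with maximum degree $\Delta$, making only the targeted modifications already foreshadowed in Subsection \ref{subsec:hop-udg}. Since the hop metric assigns an integer distance of at least $1$ to every non-identity pair, the \textsl{Locally Euclidean} property fails outright, so I will drop all uses of $\bbCN$ and the ``short path'' support-graph construction, replacing them with the brute-force inclusion $\bbCS = \{q \in V(G) : \d_G(q, X) \le \ell\}$ where $\ell = \alpha/\epsilon$. The \textsl{Planar Spanner} property, in contrast, survives verbatim via Proposition \ref{prop:hop-spanner}, so the construction of $\bbCL$ and the recursive decomposition $\T$ of Subsection \ref{subsec:decomp} go through unchanged.

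First I would verify the size bound on $\bbC = \bbCL \cup \bbCS$. The analysis for $|\bbCL|$ from Lemma \ref{lem:centroid-size} applies verbatim, giving $\exp(\Oh(\log^2|X| + z^{16}\epsilon^{-8}(\log(z/\epsilon))^{8} \log|X|))$. For $|\bbCS|$, Claim \ref{cl:support-bounded-degree} already shows $|\bbCS| = \Oh(|X| z^2 \Delta/\epsilon^2)$ by overlaying a unit grid in the embedding and using that each cell contains at most $\Delta$ vertices of $G$. So $\log|\bbC| = \Oh(\log^2 |X| + z^{16}\epsilon^{-8}(\log(z/\epsilon))^8 \log |X| + \log \Delta)$.

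Next I would build $\tilS$ by a simplified replacement rule: for each $s \in \calS$ that lies within $\ell$ hops of some $p \in X$, set $\rho(s) = s$ (legitimately in $\bbCS$); for each remaining $s$, use the landmark-based rounded distance tuple from $\bbCL$ as in Subsection \ref{subsec:approx-construction}. The forward and reverse error bounds split accordingly. The $\bbCS$-case is trivial since $\cost(p, \tils) = \cost(p, s)$. For the $\bbCL$-case, the analogue of Observation \ref{obs:long-path} ensures $|\pi_G(p, s)| > \ell = \alpha/\epsilon$ for every $p \in X$, and since each edge has unit weight this gives $\d_G(p, s) > \alpha/\epsilon$; the single-edge spanner detour in \Cref{cl:separating-paths}(4) has weight at most $\alpha$, which is thus at most $\epsilon \cdot \d_G(p, s)$. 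With this, Lemma \ref{lem:landmark-bound} (and its downstream use in Lemma \ref{lem:errorbound}) carries over, since the only place the Locally Euclidean constants $c_1', c_2'$ entered was exactly this ``long path implies negligible detour'' estimate; in the unweighted setting this is immediate with $c_1' = c_2' = 1$.

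The main obstacle I anticipate is a bookkeeping subtlety rather than a conceptual one: one must check that dropping $\bbCN$ and $\Snetsub$ does not leave gaps in the reverse-direction case analysis (Claims \ref{cl:backward-1}--\ref{cl:backward-2}). This is handled by the fact that in the hop metric $\d_G(p, \A) \ge 1$ always holds, so the ``nearby'' branch $\d_G(p, \A) < 1$ of Claim \ref{cl:backward-1} is vacuous; meanwhile the $\bbCS$-branch is trivially satisfied with zero error. Once this is checked, Lemma \ref{lem:errorbound} gives the required centroid-set guarantee, and I conclude by plugging $|\bbC|$ into the iterative size-reduction scheme of Theorem \ref{cor:coreset}, which propagates $\log|\bbC|$ through the recurrence and yields the claimed bound $\Oh(\epsilon^{-\beta} k \log^2 k \log^3(1/\delta) + \log \Delta)$. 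The runtime and failure-probability analyses are unchanged.
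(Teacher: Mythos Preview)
Your proposal is correct and follows essentially the same approach as the paper. The paper's own argument for this theorem is precisely the sketch in Subsection~\ref{subsec:hop-udg}: set $\bbCN = \emptyset$, take $\bbCS$ to be all vertices within $\ell$ hops of $X$, keep $\bbCL$ unchanged via \Cref{prop:hop-spanner}, invoke \Cref{cl:support-bounded-degree} for the size, let $\rho(s) = s$ when $s \in \bbCS$ and use the landmark replacement otherwise, and then defer to the analysis of \Cref{lem:landmark-bound} and \Cref{lem:errorbound}---indeed the paper explicitly writes ``We omit the details'' at this point, so your fuller justification (in particular the observation that $\d_G(p,\A) \ge 1$ makes the $\bbCN$-branches vacuous, and that $c'_1 = c'_2 = 1$ in the hop metric) only expands on what the paper leaves implicit.
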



\section{Constant Stretch Planar Spanner for Unit Square Graphs}\label{sec:usg-spanner}

In this section we prove  \Cref{thm:usg-spanner}.
 
\medskip\noindent\textbf{Preliminaries.} 
For points $p = (x_p, y_p), q = (x_q, y_q) \in \real^2$, let $\dx(p,q) = |x_p - x_q|$ and $\dy(p, q) = |y_p - y_q|$, and $D(p, q) = \nifty{p - q} = \max\{\dx(p, q), \dy(p, q)\}$, and $\delta(p, q) = \min\{ \dx(p, q), \dy(p, q) \}$. For points $p$ and $q$, we use $\seg{pq}$ (or $\seg{qp})$ to denote the (closed) line segment between $a$ and $b$. We denote the (euclidean) length of the segment $\seg{pq}$ by $|pq|$.

In the following, whenever we say \emph{square} or a \emph{rectangle}, we always refer to an \emph{axis-parallel square} or an \emph{axis-parallel rectangle}, along with its interior (unless explicitly mentioned otherwise). We denote the sidelength of a square $S$ by $\ell(S)$. A \emph{unit square} centered at a point $p = (x_p, y_p)$, is $S(p) \coloneqq \{q = (x_q, y_q) : D(p, q) \le 1 \}$. Note that $\ell(S(p)) = 2$. Note that for any point $q$ lying on the boundary of $S(p)$, it holds that $D(p, q) = 1$. For convenience, we refer to the four sides of the square corresponding to the cardinal directions as N, E, S, W respectively. For a pair of points $a, b$ such that $\seg{ab}$ is not axis-parallel, we use $R(a, b)$ to denote an axis-parallel rectangle that has $a$ and $b$ as the endpoints of a diagonal.

The following two propositions can be proved via elementary geometry and properties of $\ell_\infty$ distances.

\begin{proposition} \label{prop:sqlen}
	Let $a = (x_a, y_a)$ and $b = (x_b, y_b)$ be arbitrary and distinct points in $\real^2$. Let $S$ be any square having $a$ and $b$ on its boundary. Then, $\ell(S) \ge \difty{a, b}$.
\end{proposition}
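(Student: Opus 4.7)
\textbf{Proof proposal for \Cref{prop:sqlen}.} The plan is to give a direct coordinate-based argument using the fact that the square $S$ is axis-parallel. Write $S = [x_0, x_0 + \ell] \times [y_0, y_0 + \ell]$, where $\ell = \ell(S)$. Since both $a$ and $b$ lie on the boundary of $S$, in particular both lie in $S$, so their coordinates satisfy $x_0 \le x_a, x_b \le x_0 + \ell$ and $y_0 \le y_a, y_b \le y_0 + \ell$.

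The two one-dimensional containments immediately give $|x_a - x_b| \le \ell$ and $|y_a - y_b| \le \ell$, hence
\[
\difty{a, b} \;=\; \max\bigl\{|x_a - x_b|,\, |y_a - y_b|\bigr\} \;\le\; \ell \;=\; \ell(S),
\]
which is the desired inequality. No use is made of the fact that $a, b$ lie on the boundary rather than the interior; the statement is really just a one-line consequence of the defining property of axis-parallel squares. The main (and only) step is to spell out the coordinate bounds; there is no real obstacle.
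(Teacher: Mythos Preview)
Your proof is correct and in fact cleaner than the paper's own argument. The paper proceeds by a case analysis on the relative positions of $a$ and $b$ on the boundary of $S$ (same side, adjacent sides, opposite sides), in each case checking that the relevant coordinate differences are bounded by $\ell(S)$. Your approach bypasses this entirely by observing that $S$, being axis-parallel, is a product of two intervals of length $\ell(S)$, so each coordinate difference is at most $\ell(S)$ simply because both points lie in $S$. This is more elementary and, as you note, does not even use that $a$ and $b$ are on the boundary rather than the interior. The paper's case analysis gains nothing extra here; your argument is strictly shorter and loses nothing.
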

\begin{proposition} \label{prop:sqineq}
	Let $a$ and $b$ be two distinct points in the plane such that $\seg{ab}$ is not parallel to $x$ or $y$ axis. Then, for any point $c$ lying on or inside $R(a, b)$, the following holds: $\difty{a,c} \le \difty{a, b}$ and $\difty{b, c} \le \difty{a, b}$.
\end{proposition}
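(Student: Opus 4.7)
The plan is to reduce the claim to componentwise inequalities on the two coordinates and then take the maximum. Write $a = (x_a, y_a)$ and $b = (x_b, y_b)$. By the hypothesis that $\seg{ab}$ is not axis-parallel, we have $x_a \neq x_b$ and $y_a \neq y_b$. Without loss of generality (after reflecting across the appropriate axis, which preserves both $\dx$ and $\dy$, hence $\difty{\cdot,\cdot}$, and also maps $R(a,b)$ to $R(a,b)$ with the images of $a,b$ as diagonal endpoints), I may assume $x_a < x_b$ and $y_a < y_b$. Then by definition $R(a,b) = [x_a, x_b] \times [y_a, y_b]$.

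Now let $c = (x_c, y_c)$ be any point with $c \in R(a,b)$, so $x_a \le x_c \le x_b$ and $y_a \le y_c \le y_b$. From the first chain of inequalities,
\[
\dx(a,c) = x_c - x_a \le x_b - x_a = \dx(a,b),
\qquad
\dx(b,c) = x_b - x_c \le x_b - x_a = \dx(a,b),
\]
and similarly from the second chain,
\[
\dy(a,c) \le \dy(a,b),\qquad \dy(b,c) \le \dy(a,b).
\]
Taking the maximum of the two bounds in each line yields
\[
\difty{a,c} = \max\{\dx(a,c), \dy(a,c)\} \le \max\{\dx(a,b), \dy(a,b)\} = \difty{a,b},
\]
and the identical calculation with $a$ replaced by $b$ gives $\difty{b,c} \le \difty{a,b}$.

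There is no real obstacle here; the only step that warrants any care is the WLOG reduction, which is justified because the four possible sign patterns of $(x_b - x_a, y_b - y_a)$ are all related by axis reflections that fix the $\ell_\infty$ norm and send $R(a,b)$ to itself (up to relabeling of the two diagonal corners). After that the proof is a direct coordinate computation, consisting only of the two monotonicity observations above.
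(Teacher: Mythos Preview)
Your proof is correct. The paper does not actually give a proof of this proposition; it simply remarks that it follows from elementary geometry and properties of $\ell_\infty$ distances, which is precisely what your coordinate argument supplies.
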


We follow the construction of Binichon et al. \cite{BonichonGHP15}, and show that the $\ell_\infty$ Delaunay Triangulation, when restricted to USG edges, remains a constant stretch spanner for the corresponding USG. 

Fix a set of points $P$. We make the general position assumption, namely, no four points of $P$ lie on the boundary of an axis-parallel square. Furthermore, by slightly perturbing the set of points if necessary, we assume that all pairwise $L_{\infty}$ distances are unique. Under this assumption, the $\ell_\infty$ Delaunay triangulation is unique and planar. If $pq$ is an edge in the Delaunay Triangulation $T$, we say that $pq$ is a \emph{DT} edge. Finally, let $H$ be a subgraph of USG on $P$, defined as follows: for $p, q \in P$, $pq \in E(H)$ iff $pq$ is a DT edge \emph{and} $pq \in E(G)$, i.e., $pq$ is an USG edge. Furthermore, the length of each edge $pq \in E(H)$ is set to be $\difty{pq}$, i.e., if $pq \in E(H)$, then $\d_H(p, q) = \d_G(p, q) = \difty{p, q}$.

In the following, we show the following key lemma.

\begin{lemma} \label{lem:usg-spanner}
	Let $a = (x_a, y_a)$ and $b = (x_b, y_b)$ be two points in $P$ such that $ab \in E(G)$. 
	\\Then, there exists a path $\pi_H(a, b) = (a = v_0, v_1, \ldots, v_t = b)$ in $H$ satisfying the following properties.
	\begin{itemize}
		\item $\d_H(a, b) \le \sum_{i = 1}^{t} \d_H(v_{i-1}, v_{i}) \le 2D(a, b) + \delta(a, b)  \le 3 \cdot \d_G(a, b)$, and
		\item For every $1 \le i \le t$, $\difty{v_{i-1}, v_i} = \d_H(v_{i-1}, v_i) \le \difty{ab}$.
	\end{itemize}
\end{lemma}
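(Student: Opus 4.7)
The plan is to prove the lemma by induction on the number of points of $P$ lying in the closed axis-parallel square enclosing $\{a,b\}$. I begin by normalizing the configuration using symmetry: by reflecting across the axes I may assume $b$ is northeast of $a$, and by rotating by $90^\circ$ if necessary I may assume $\dx(a,b) \ge \dy(a,b)$, so that $D(a,b) = \dx(a,b)$ and $\delta(a,b) = \dy(a,b)$. Let $S_{ab}$ denote the witness square of sidelength $D(a,b)$ with $a$ on its left edge and $b$ on its right edge (its existence is guaranteed by Proposition~\ref{prop:sqlen}).

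\textbf{Base case.} If no point of $P \setminus \{a,b\}$ lies in the interior of $S_{ab}$, then by the definition of the $L_\infty$ Delaunay triangulation $ab$ is a DT edge. Since $ab \in E(G)$ we have $D(a,b) \le 2$, hence $ab \in E(H)$, and the trivial path $(a, b)$ satisfies both claims: its total length is $\difty{a,b} = D(a,b) \le 2D(a,b) + \delta(a,b)$, and the per-edge bound $\difty{a,b} \le \difty{a,b}$ is immediate.

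\textbf{Inductive step.} If the interior of $S_{ab}$ contains some $c \in P \setminus \{a, b\}$, I select $c$ extremally, for instance as the point of $S_{ab} \cap P \setminus \{a,b\}$ minimizing $\dx(a,\cdot)$ on the appropriate side of the segment $\seg{ab}$ (the side chosen so the path will ``hug'' the rectangle $R(a,b)$). By extremality one can inflate a witness square around $a$ and $c$ that avoids every other point of $P$, so $ac$ is a DT edge; moreover, since $c \in S_{ab}$, Proposition~\ref{prop:sqineq} (applied in a square containing $a$, $c$, $b$) gives $\difty{a,c} \le \difty{a,b} \le 2$, so $ac \in E(G)$ and therefore $ac \in E(H)$. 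The pair $(c,b)$ has strictly fewer points of $P$ in its witness square than $(a,b)$ (since $a$ is no longer present and $c$ sits on the boundary), and, possibly after re-normalizing via symmetry, the induction hypothesis produces a path $\pi_H(c,b)$ satisfying both conclusions for $(c,b)$.

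\textbf{Length accounting.} Writing $\Sigma(p,q) = 2D(p,q) + \delta(p,q)$, it remains to verify
\[
\difty{a,c} + \Sigma(c,b) \;\le\; \Sigma(a,b).
\]
The intuition is that $c$ lies within $S_{ab}$ so $D(c,b)$ and $\delta(c,b)$ are each bounded by the corresponding displacements in a rectangle strictly contained in $S_{ab}$; charging the horizontal and vertical components of $\difty{a,c}$ separately against the shrinkage in $D(\cdot,b)$ and $\delta(\cdot,b)$ yields the inequality. I will carry this out by a case split on which of the four quadrants of $S_{ab}$ determined by the horizontal and vertical lines through $b$ contains $c$; in each case the extremal choice of $c$ forces the appropriate coordinate to shrink enough. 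The per-edge bound $\difty{v_{i-1}v_i} \le \difty{a,b}$ is then inherited from the inductive hypothesis combined with Proposition~\ref{prop:sqineq}.

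\textbf{Main obstacle.} The crux is the combined choice of $c$ and the accompanying length bookkeeping: $c$ must simultaneously (i) anchor a Delaunay edge $ac$, (ii) keep the residual pair $(c,b)$ in (or reducible to) the canonical orientation $\dx \ge \dy$, and (iii) make the arithmetic inequality above tight enough across all sub-cases. Handling boundary cases — $\seg{ab}$ axis-parallel (where $\delta(a,b)=0$ and the bound degrades to $2D(a,b)$), $c$ lying on the boundary of $S_{ab}$, or $\dx(a,b) = \dy(a,b)$ — requires the general-position assumption and a careful tie-breaking in the choice of the extremal $c$. Once these configurations are sorted out, the induction closes and the overall spanner bound $\d_H(a,b) \le 3 \d_G(a,b)$ of Theorem~\ref{thm:usg-spanner} follows by summing edge lengths along $\pi_H(a,b)$.
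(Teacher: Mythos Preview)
Your proposal has a genuine gap in the length accounting, and it stems from conflating the witness \emph{square} $S_{ab}$ with the rectangle $R(a,b)$. The inequality you need, $\difty{a,c} + \Sigma(c,b) \le \Sigma(a,b)$ with $\Sigma(p,q) = 2D(p,q)+\delta(p,q)$, is simply false once $c$ lies outside $R(a,b)$. For instance, take $a=(0,0)$, $b=(10,0.1)$ so $\Sigma(a,b)=20.1$, and suppose the only interior point of $S_{ab}$ is $c=(1,8)$. Then $\difty{a,c}=8$, while $D(c,b)=9$ and $\delta(c,b)=7.9$, giving $\Sigma(c,b)=25.9$ and a total of $33.9$. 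No ``extremal'' choice of $c$ rescues this when every available $c$ is high or low relative to $R(a,b)$, and such configurations do occur: $R(a,b)$ can be empty of points while $ab$ still fails to be a DT edge, because every axis-parallel square through $a,b$ is blocked by points just above or below the thin rectangle.

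This empty-rectangle case is precisely where the paper does the real work, and it cannot be handled by a single greedy step. The paper follows Bonichon et al.: it walks along the sequence of Delaunay triangles $T_1,\ldots,T_k$ crossed by $\seg{ab}$, tracks the upper and lower chains $(h_i)$, $(l_i)$, and uses two separate mechanisms --- a potential-function argument over the squares $S_i$ when none is ``inductive'' (\Cref{lem:potential-path}), and a combination of \Cref{lem:hl-path} and \Cref{lem:ind-path} together with the outer induction on $D(a,b)$ when an inductive square appears. The key structural fact that makes the bookkeeping close (\Cref{cl:hi-li-distances}, \Cref{cl:hili-bounded}) is that the empty-square property of each $S_i$ forces all the $h_i,l_i$ to stay within $\ell_\infty$-distance $w$ of $a$; this is what keeps every edge on the constructed path short and is not visible from your single-point argument. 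Your induction on ``number of points in $S_{ab}$'' is also fragile: the witness square of $(c,b)$ need not be contained in $S_{ab}$, so the count need not drop. The paper instead inducts on $D(a,b)$ itself, which does strictly decrease in every recursive call.
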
 

In the following, we fix a pair of points $a, b \in P$ such that $ab$ is not an edge, and follow a geometric construction from \cite{BonichonGHP15} using the Delaunay triangulation of points. We will perform a more careful analysis of this set of points so that subsequently these properties can be used later to prove \Cref{lem:usg-spanner}.

We proceed as in \cite{BonichonGHP15} and let $a = (0, 0)$ and $b = (w, h)$ be a pair of points, such that $ab$ is not a DT edge. Here, we assume that $0 \le h \le w$ and $0 < w$. Thus, note that $\difty{a, b} = w \ge h$. Let $T_1, T_2, \ldots, T_k$ be the sequence of triangles that $\seg{ab}$ intersects when moving from $a$ towards $b$. Suppose the rectangle $R(a, b)$ contains no point of $P$ other than $a$ and $b$. Let $h_0 = l_0 = a$. For each triangle $T_i$, $1 \le i \le k$, the segment $\seg{ab}$ intersects the sides of $T_i$ twice. Let $h_i$ and $l_i$ be the endpoints of the sides of $T_i$ \emph{last}, while moving from $a$ towards $b$, such that $h_i$ lies above $\seg{ab}$, and $l_i$ lies below $\seg{ab}$ (here, above and below are defined using the two half-planes defined by the line passing through $a$ and $b$). Note that either $h_{i-1} = h_i$ and $T_i = \Delta(h_i, l_i, l_{i-1})$, or $l_{i-1} = l_i$ and $T_i = \Delta(h_{i-1}, h_i, l_i)$ for $1 < i < k$. Recall that $T_1 = \Delta(a, h_1, l_1)$, and $T_k = \Delta(h_k, l_k, b)$. For $1 \le i \le k$, let $S_i$ to be the empty square having vertices of $T_i$ on its boundary (recall that the empty square assumption follows from the fact that the corresponding edges are DT edges). Finally, say that a point $u = (x_u, y_u)$ is \emph{high} (resp.\ \emph{low}) w.r.t. $R(a, b)$ if $0 \le x_u \le w$ and $y_u > h$ (resp.\ $y_u < 0$). Bonichon et al.~\cite{BonichonGHP15} prove the following lemma.

\begin{figure}
	\centering
	\includegraphics{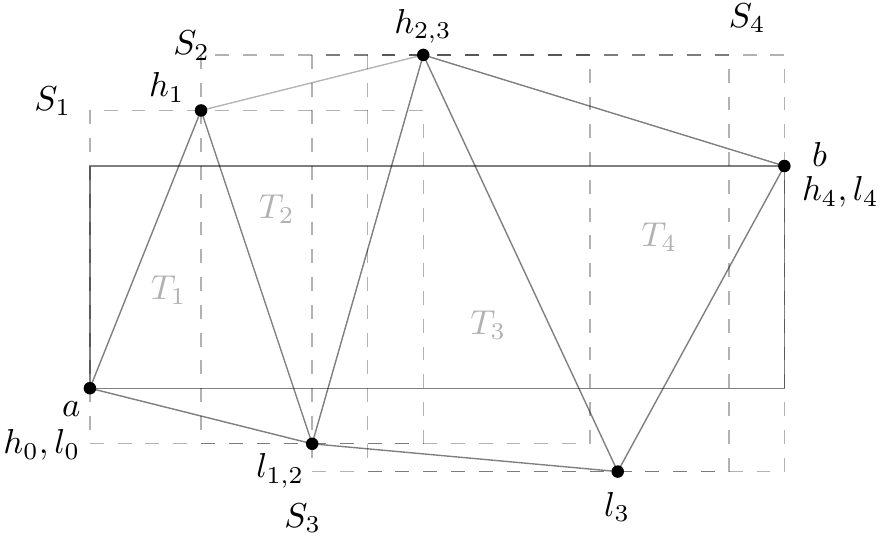}
	\caption{Various points, triangles, and squares used in the construction.} \label{fig:triangles}
\end{figure}

\begin{lemma}[Lemma 7 in \cite{BonichonGHP15}] \label{lem:bonichon-lemma}
	If $ab$ is not a DT edge, and if rectangle $R(a, b)$ contains no point of $P$ other than $a$ and $b$, then the following properties hold.
	\begin{enumerate}
		\item $a$ lies on the W side of $S_1$,
		\item $b$ lies on the E side of $S_k$,
		\item Points $h_1, \ldots, h_{k-1}$ are high, and points $l_1, \ldots, l_{k-1}$ are low w.r.t. $R(a, b)$, and
		\item For any $1 < i < k$, 
		\begin{itemize}
			\item Either $T_i = \Delta(h_{i-1}, h_i, l_{i-1} = l_i)$, and points $h_{i-1}, h_i$ and $l_{i-1} = l_i$ lie on the sides of $S_i$ in clockwise order with no two points on the same side, and $(h_{i-1}, h_i)$ is a WN, WE or NE edge in $S_i$, or
			\item  $T_i = \Delta(h_{i-1} = h_i, l_{i-1}, l_i)$, and points $h_{i-1} = h_i, l_{i}$, and $l_{i-1}$ lie on the sides of $S_i$ in clockwise order with no two points on the same side, and $(l_{i-1}, l_i)$ is a WS, WE or SE edge in $S_i$.
		\end{itemize}
	\end{enumerate}
\end{lemma}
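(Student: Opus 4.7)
The plan is to prove the four properties in order, leveraging two emptiness conditions throughout: the rectangle $R(a,b)$ is empty of points of $P$ (by hypothesis), and the interior of each circumscribing square $S_i$ is empty (since $S_i$ is the $\ell_\infty$ Delaunay square of $T_i$). The orientation $0\le h\le w$ forces $\overline{ab}$ to have non-negative slope at most $1$, and this anchors all the case distinctions. Items (1) and (2) are symmetric and can be handled first; they then feed the induction on $i$ that simultaneously establishes (3) and (4), because the inductive hypothesis for (3) is exactly what is needed to pin down the side of $S_i$ on which the reused vertex lies in (4).

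For (1), I would argue by contradiction that $a$ is on the W side of $S_1$, ruling out the other three sides (general position excludes corners). If $a$ were on the E side, then $S_1\subseteq\{x\le 0\}$, but the segment $\overline{ab}$ exits $T_1=\Delta(a,h_1,l_1)$ eastward through $\overline{h_1 l_1}$, so both $h_1$ and $l_1$ lie strictly east of $a$ and cannot also lie on $\partial S_1$. If $a$ were on the N side, then $S_1\subseteq\{y\le 0\}$; combined with $h_1$ lying above $\overline{ab}$ and (by the eastward exit) having $x_{h_1}\ge 0$, this forces $y_{h_1}>0$, contradicting $h_1\in\partial S_1$. The S case is symmetric using $l_1$. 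Property (2) then follows by the involution that reflects through the midpoint of $\overline{ab}$ and reverses the order of traversal, swapping the roles of $(a,S_1)$ and $(b,S_k)$ without altering the Delaunay or emptiness hypotheses.

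I would then prove (3) and (4) together by induction on $i$. Base case $i=1$: since (1) places $a$ on the W side of $S_1$, $S_1$ extends eastward of $a$; combining emptiness of $S_1$'s interior with emptiness of $R(a,b)$ rules out $h_1$ lying inside $R(a,b)$, and the ``east-exit'' geometry rules out $h_1$ lying west of $R(a,b)$, leaving $h_1$ high. The analogous argument makes $l_1$ low. Inductive step: assume $h_{i-1}$ high and $l_{i-1}$ low. The adjacency pattern gives either $h_i=h_{i-1}$ with new vertex $l_i$ or $l_i=l_{i-1}$ with new vertex $h_i$. In the first sub-case, the fact that $h_{i-1}$ is high and $l_{i-1}$ is low, combined with both lying on $\partial S_i$ and with $\overline{ab}$ entering $T_i$ through $\overline{h_{i-1}l_{i-1}}$, restricts $h_{i-1}$ to the W or N side of $S_i$ and $l_{i-1}$ to the S or E side; the new vertex $l_i$ must lie below $\overline{ab}$ but outside $R(a,b)$ (else $R(a,b)$ emptiness fails) and outside $S_i$'s interior, which combined with the requirement that $\overline{ab}$ exit eastward through $\overline{h_i l_i}$ forces $l_i$ to be low and yields the clockwise ordering with the new outer edge $(h_{i-1},h_i)$ being one of WN, WE, or NE. The second sub-case is symmetric and yields WS, WE, or SE.

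\textbf{Main obstacle.} The hard part will be the inductive step for (4): enumerating which side of $\partial S_i$ each of the three vertices can occupy and showing that in every admissible configuration they appear in clockwise order with no two on the same side, and that the ``outer'' edge falls in exactly the listed set. This requires a careful joint case analysis where the emptiness of $S_i$ (used to exclude $a$ and $b$ from its interior, preventing $S_i$ from being ``too large'' in the east--west direction) and the emptiness of $R(a,b)$ (used to prevent any vertex from lying inside the rectangle) are both repeatedly invoked. The order of proof is therefore important: (1) and (2) anchor the boundary cases of the induction, and (3) keeps the hypothesis strong enough that at each step the ``previous'' high and low vertices are known to lie outside $R(a,b)$, which is precisely what is needed to localize them on $\partial S_i$.
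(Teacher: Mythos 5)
This statement is imported verbatim from the cited reference: the paper labels it ``Lemma~7 in \cite{BonichonGHP15}'' and gives no proof of its own. There is consequently no proof in the paper to compare your proposal against; for the purposes of this paper the lemma is a black box, and the paper's contribution in \Cref{sec:usg-spanner} is to build \emph{on top of} it (via \Cref{cl:hi-li-distances}, \Cref{cl:hili-bounded}, and \Cref{lem:potential-path}) rather than to re-derive it.

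As to the proposal itself, the overall structure --- handle (1) and (2) first, derive (2) from (1) by the point reflection through the midpoint of $\seg{ab}$, then run a joint induction on $i$ for (3) and (4) using emptiness of $S_i$ and of $R(a,b)$ --- is the natural plan and is in the spirit of how Bonichon et al.\ argue. A couple of local imprecisions: in ruling out $a$ on the E side of $S_1$ you claim ``both $h_1$ and $l_1$ lie strictly east of $a$,'' but the eastward-exit argument only forces the crossing point $\pi_{h_1 l_1}\cap\seg{ab}$ to have positive $x$-coordinate, which guarantees only that \emph{at least one} endpoint does; the contradiction survives but the assertion should be weakened. Similarly, when ruling out the N side you invoke ``$x_{h_1}\ge 0$ by the eastward exit'' before establishing it --- here the cleaner route is to observe that $y_{h_1}\le 0$ together with $h_1$ lying strictly above the line through $a=(0,0)$ of slope $h/w\in[0,1]$ forces $x_{h_1}<0$, and then derive the contradiction from the exit geometry. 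The ``main obstacle'' you flag --- the exhaustive side-by-side case analysis in (4) --- is indeed the bulk of the work. If you want to write it out, the cited reference is the place to check it against; but the paper itself neither contains nor needs such a proof.
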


Using the previous lemma, we prove the following two properties.

\begin{claim} \label{cl:hi-li-distances}
	For any $1 \le i \le k$, $\max \LR{\difty{h_i, h_{i-1}}, \difty{l_i, l_{i-1}}, \difty{h_i, l_i}} < w = \difty{a, b}$.
\end{claim}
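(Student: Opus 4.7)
The plan is to reduce the claim to the single inequality $\ell(S_i) < w$ for each $1 \le i \le k$. Indeed, in both subcases of item~4 of \Cref{lem:bonichon-lemma}, and also for the boundary cases $i=1$ and $i=k$ (using items~1 and~2 of that lemma together with the identification $h_k = h_{k-1}$ and $l_k = l_{k-1}$ explained in the next paragraph), each of the three pairs $\{h_{i-1}, h_i\}, \{l_{i-1}, l_i\}, \{h_i, l_i\}$ (possibly coinciding) consists of points lying on $\partial S_i$, so by \Cref{prop:sqlen} the $\difty{\cdot}$-distance of each pair is at most $\ell(S_i)$.

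Before the main argument, I would first upgrade item~3 of \Cref{lem:bonichon-lemma} from $1 \le i \le k-1$ to the full range $1 \le i \le k$. Note that $k \ge 2$: if $k = 1$ then $T_1$ would contain both $a$ and $b$ as vertices, making $ab$ a DT edge, contrary to assumption. Consequently the side of $T_{k-1}$ through which $\seg{ab}$ exits---with endpoints $h_{k-1}, l_{k-1}$---coincides with the side of $T_k$ opposite $b$---with endpoints $h_k, l_k$---and the above/below-$\seg{ab}$ convention forces $h_k = h_{k-1}$ and $l_k = l_{k-1}$, so by item~3 $h_k$ is high and $l_k$ is low.

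The core of the argument is a contradiction. Suppose $s := \ell(S_i) \ge w$ and write $S_i = [X, X+s] \times [Y, Y+s]$. Since $h_i$ is high and $l_i$ is low and both lie on $\partial S_i$, one gets $Y + s \ge y_{h_i} > h \ge 0$ and $Y \le y_{l_i} < 0$, so in particular $\{0, h\} \subseteq [Y, Y+s]$. The $x$-coordinates of $h_i, l_i$ lie in $[0, w]$, giving $X \le w$. Split on the sign of $X$. If $X \le 0$, then $a = (0,0) \in S_i$; for $i \ge 2$ this is impossible because $a$ is not a vertex of $T_i$ (either $a$ lies in the interior of the empty square $S_i$, or four points of $P$ lie on $\partial S_i$, violating the general position assumption); for $i = 1$, item~1 of \Cref{lem:bonichon-lemma} forces $X = 0$, so $w \in [0, s] = [X, X+s]$, and together with $h \in [Y, Y+s]$ this gives $b \in S_1$, again impossible since $b$ is not a vertex of $T_1$. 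If instead $X > 0$, then $X + s > w$ and $X \le w$ give $w \in [X, X+s]$, whence $b \in S_i$; for $i \le k-1$ the same non-vertex contradiction applies, while for $i = k$ item~2 of \Cref{lem:bonichon-lemma} forces $X + s = w$, contradicting $X + s > w$.

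The hard part is that \Cref{lem:bonichon-lemma} only directly controls the high/low status of $h_i, l_i$ for $i \le k-1$ and pins down the positions of $a, b$ only on $S_1, S_k$ respectively; fitting these boundary facts into a single uniform case analysis requires the shared-edge identification $h_k = h_{k-1}, l_k = l_{k-1}$ and a careful use of the general position assumption to rule out the ``extra'' vertex ($b$ when $i=1$, $a$ when $i=k$) lying on $\partial S_i$.
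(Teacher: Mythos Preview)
Your proof is correct and follows essentially the same approach as the paper: reduce to $\ell(S_i) < w$ via \Cref{prop:sqlen}, then argue that $\ell(S_i) \ge w$ together with the high/low positions of $h_i, l_i$ forces $a$ or $b$ into $S_i$, contradicting emptiness (or general position). Your treatment of the boundary case $i = k$ is in fact more careful than the paper's: where the paper simply declares the $T_k$ case ``analogous'' to $T_1$, you explicitly justify that $h_k = h_{k-1}$ and $l_k = l_{k-1}$ via the shared edge between $T_{k-1}$ and $T_k$, thereby extending item~3 of \Cref{lem:bonichon-lemma} to cover $i = k$ as needed.
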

\begin{proof}
	First, consider triangle $T_1 = \Delta(a, h_1, l_1)$. Suppose there is some side $pq$ of $T_1$ such that $\difty{pq} \ge w$. Then, \Cref{prop:sqlen} implies that $\ell(S_1) \ge \difty{pq} \ge w$. Now, from properties 1 and 3 of \Cref{lem:bonichon-lemma}, $a$ lies on the W side of $S_1$, and $h_1$ (resp.\ $l_1$) is high (resp.\ low) w.r.t. $R(a, b)$. Therefore, point $b$ is contained in $S_1$. However, this is a contradiction, since $S_1$ is defined by three DT edges, and is thus empty. The case for the triangle $T_k = \Delta(h_k, l_k, b)$ is analogous.
	
	Now consider a triangle $T_i$ with $1 < i < k$, and suppose $T_i = \Delta(h_{i-1}, h_i, l_{i-1} = l_i)$ (the other case is analogous). Note that none of the vertices of $T_i$ is either $a$ or $b$. 	Again, suppose for some side $pq$ of $T_i$, it holds hat $\difty{pq} \ge w$. Again, by \Cref{prop:sqlen}, it follows that $\ell(S_i) \ge \difty{pq} \ge w$. Recall that from property 3 of \Cref{lem:bonichon-lemma}, $h_i$ (resp.\ $l_i$) is high (resp.\ low) w.r.t. $R(a, b)$, i.e., the $x$-coordinates of $h_i$ and $l_i$ lie within the interval $[0, w]$. Since $h_i$ and $l_i$ lie on the boundary of a square of sidelength at least $w$, this implies that $S_i$ must contain either $a$ or $b$, which contradicts the empty square property for $S_i$. 
\end{proof}

\begin{claim} \label{cl:hili-bounded}
	For any $1 \le i \le k$, $D(a, h_i) < D(a, b)$, and $D(a, l_i) < D(a, b)$.
\end{claim}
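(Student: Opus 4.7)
The plan is to deduce $D(a,h_i) < D(a,b)$ and $D(a,l_i) < D(a,b)$ for every $1 \le i \le k$ directly from \Cref{cl:hi-li-distances}, combined with the ``high/low'' positions of $h_i$ and $l_i$ with respect to $R(a,b)$, and to upgrade the resulting non-strict bound to a strict one via the general position assumption that all pairwise $\ell_\infty$ distances are unique.

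First I would extend property~3 of \Cref{lem:bonichon-lemma} (which only asserts that $h_1,\ldots,h_{k-1}$ are high and $l_1,\ldots,l_{k-1}$ are low) to the index $i=k$. Since $ab$ is not a DT edge, the segment $\overline{ab}$ must traverse at least two triangles, so $k \ge 2$. The triangle $T_k = \Delta(h_k,l_k,b)$ shares a side with $T_{k-1}$, which must be the exit side $h_{k-1}l_{k-1}$ of $T_{k-1}$; the third vertex of $T_k$ is $b$. Matching the above/below positions therefore forces $h_k = h_{k-1}$ and $l_k = l_{k-1}$, so $h_k$ is high and $l_k$ is low. Consequently, for every $1 \le i \le k$ we have $0 \le x_{h_i}, x_{l_i} \le w$, $y_{h_i} > h \ge 0$, $y_{l_i} < 0$, and in particular neither $h_i$ nor $l_i$ equals $b$.

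Next, I would combine \Cref{cl:hi-li-distances}, which gives $\difty{h_i,l_i} < w$, with the strict sign inequalities $y_{l_i} < 0 < y_{h_i}$. Since $\difty{h_i,l_i} \ge |y_{h_i}-y_{l_i}| = y_{h_i}-y_{l_i}$, this yields $y_{h_i} < w + y_{l_i} < w$ and symmetrically $-y_{l_i} < w - y_{h_i} < w$. Combined with $0 \le x_{h_i}, x_{l_i} \le w$, we obtain the non-strict bounds $D(a,h_i) = \max(x_{h_i}, y_{h_i}) \le w$ and $D(a,l_i) = \max(x_{l_i}, -y_{l_i}) \le w$.

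Finally, I would upgrade these to strict inequalities using general position. Uniqueness of pairwise $\ell_\infty$ distances, together with $h_i \ne b$ and $l_i \ne b$, gives $D(a,h_i) \ne D(a,b) = w$ and $D(a,l_i) \ne D(a,b) = w$, and combined with the upper bounds above this yields $D(a,h_i) < D(a,b)$ and $D(a,l_i) < D(a,b)$ as required. The only mildly subtle step is the treatment of the boundary index $i = k$, which requires unpacking how $T_k$ meets $T_{k-1}$ since \Cref{lem:bonichon-lemma} does not directly address it; the remaining inequalities are an essentially immediate consequence of \Cref{cl:hi-li-distances} and general position, so I do not expect any real technical obstacle beyond this edge case.
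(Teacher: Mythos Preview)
Your proposal is correct and follows essentially the same approach as the paper: both arguments combine the high/low positions of $h_i,l_i$ from \Cref{lem:bonichon-lemma} with the bound $D(h_i,l_i)<w$ from \Cref{cl:hi-li-distances} to control the coordinates of $h_i$ and $l_i$. The paper phrases this as a contradiction argument (assume $D(a,h_i)>w$, derive $D(h_i,l_i)>w$), whereas you argue directly; the underlying inequality chain is identical.

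Your version is in fact more careful than the paper's on two points. First, you explicitly handle the index $i=k$: \Cref{lem:bonichon-lemma} only asserts that $h_1,\ldots,h_{k-1}$ are high and $l_1,\ldots,l_{k-1}$ are low, and you correctly observe that $T_k=\Delta(h_k,l_k,b)$ shares the edge $h_{k-1}l_{k-1}$ with $T_{k-1}$, forcing $h_k=h_{k-1}$ and $l_k=l_{k-1}$. The paper's proof silently applies the high/low property at $i=k$ without this justification. Second, the paper's contradiction hypothesis is $D(a,h_i)>w$, which literally only yields $D(a,h_i)\le w$; the case $x_{h_i}=w$ (giving equality) is not excluded by the $D(h_i,l_i)<w$ argument alone. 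Your appeal to the uniqueness of pairwise $\ell_\infty$ distances to rule out $D(a,h_i)=D(a,b)$ is exactly what is needed to close this gap, and the paper implicitly relies on the same assumption without stating it.
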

\begin{proof}
	Fix some $1 \le i \le k$, and first suppose for contradiction that $D(a, h_i) > D(a, b) = w$. Now, $D(a, h_i) = \max\{|x_{h_i}|, |y_{h_i}|\}$. However, from \Cref{lem:bonichon-lemma}, we know that $h_i$ is high, i.e., $0 \le x_{h_i} \le w$, and $y_{h_i} > h$, as well as $l_i$ is low, i.e., $y_{l_i} < 0$. Therefore, $D(a, h_i) = y_{h_i} > w$. Now, consider $D(h_i, l_i) \ge y_{h_i} - y_{l_i} > w$, since $y_{l_i} < 0$. However, this is a contradiction to \Cref{cl:hi-li-distances}, which shows that $D(h_i, l_i) < w$.
	
	Now, suppose $D(a, l_i) > D(a, b) = w$. Now, $D(a, l_i) = \max\{ |x_{l_i}|, |y_{l_i}| \} = |y_{l_i}|$. This implies that $y_{l_i} < -w$. Now, consider $D(h_i, l_i) \ge y_{h_i} - y_{l_i} > h + w > w$, which contradicts \Cref{cl:hi-li-distances}. 
\end{proof}

Now we introduce further definitions and terminology. A vertex $c$ of a triangle $T_i$ is \emph{eastern in $S_i$} if it lies on the E side of $S_i$. An edge in $T$ is said to be \emph{gentle} if the corresponding line segment has slope within $[-1, 1]$; otherwise we say that it is \emph{steep}. By \Cref{lem:bonichon-lemma} and from the general position assumption, it follows that if an edge $l_j h_j$ in $T$ is gentle, then $l_j$ or $h_j$ must be eastern in $S_j$. 

\begin{definition}
	Suppose $(a, b)$ is not a DT edge, and rectangle $R(a, b)$ contains no point of $P$ other than $a$ and $b$. We say that a square $S_j$, $1 \le j \le k$, is \emph{inductive} if the edge $l_j h_j$ is gentle. The eastern point $c = h_j$ or $c = l_j$ is the \emph{inductive point} of $S_j$.
\end{definition}

Now, we have the following lemmas from Bonichon et al.\ \cite{BonichonGHP15}.
\begin{lemma}[Lemma 9 from \cite{BonichonGHP15}] \label{lem:hl-path}
	Suppose $ab$ is not a DT edge, and rectangle $R(a, b)$ contains no point of $P$ other than $a$ and $b$. Suppose the coordinates of point $c = h_i$ or $c = l_i$ satisfy $w - x_c < |h - y_c|$. 
	\begin{enumerate}
		\item If $c = h_i$, and thus $w - x_{h_i} < y_{h_i} - h$, then there exists an index $j$ with $i < j \le k$ such that all edges in path $h_i, h_{i+1}, \ldots, h_j$ are NE edges in the respective squares, and $w - x_{h_j} \ge y_{h_j} - h \ge 0$.
		\item If $c = l_i$, and thus $w - x_{l_i} < h - y_{l_i}$, then there exists an index $j$ with $i < j \le k$ such that all edges in path $l_i, l_{i+1}, \ldots, l_j$ are NE edges in the respective squares, and $w - x_{l_j} \ge y_{l_j} - h \ge 0$.
	\end{enumerate}
\end{lemma}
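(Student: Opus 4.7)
The two statements of \Cref{lem:hl-path} are symmetric under reflection about the line through $a$ and $b$, so I would prove only statement (1), where $c = h_i$ satisfies $w - x_{h_i} < y_{h_i} - h$. The plan is to construct the required path iteratively: starting from $h_i$, I would repeatedly extend the ``high'' side of the sequence by one NE edge and stop at the first index $j$ where the coordinate inequality reverses.

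The core technical step is the following single-step claim: \emph{if $l < k$ and $w - x_{h_l} < y_{h_l} - h$, then $T_{l+1}$ has the form $\Delta(h_l, h_{l+1}, l_l)$ with $h_{l+1} \neq h_l$, and the edge $h_l h_{l+1}$ is the NE edge of the empty square $S_{l+1}$.} I would prove this by case analysis over the possibilities allowed by part (4) of \Cref{lem:bonichon-lemma}, i.e., either (a) $T_{l+1} = \Delta(h_l, l_l, l_{l+1})$ (the high side stalls and the traversed edge is $l_l l_{l+1}$ of type WS, WE, or SE), or (b) $T_{l+1} = \Delta(h_l, h_{l+1}, l_l)$ but with the edge $h_l h_{l+1}$ of type WN or WE. In every such ``bad'' configuration, I would show that $b = (w, h)$ must lie inside $S_{l+1}$, contradicting its Delaunay emptiness. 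The quantitative input driving all these contradictions is exactly $w - x_{h_l} < y_{h_l} - h$: this says that $b$ is horizontally close to $h_l$ relative to $h_l$'s vertical slack above $\seg{ab}$, so any square having $h_l$ on a ``western'' side and a low vertex on a ``southern'' side is forced to extend past $w$ horizontally and below $h$ vertically, engulfing $b$. To make this quantitative I would use \Cref{cl:hi-li-distances} together with the fact that two vertices of $T_{l+1}$ must lie on opposite sides of $S_{l+1}$ (from the clockwise ordering), giving $\ell(S_{l+1}) < w$, and then combine with the hypothesis to place $b$ in both the horizontal and vertical ranges of $S_{l+1}$.

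Given the single-step claim, I would finish as follows. As long as $w - x_{h_l} < y_{h_l} - h$ and $l < k$, the claim produces the next NE edge $h_l h_{l+1}$, which strictly increases $x_{h_l}$. Since every $h_l$ in the path is a high vertex, \Cref{cl:hili-bounded} gives $x_{h_l} < w$, so only finitely many NE edges can occur before the coordinate inequality reverses or $l = k$. Letting $j$ be the first index with $i < j \le k$ satisfying $w - x_{h_j} \ge y_{h_j} - h$, all edges $h_l h_{l+1}$ with $i \le l < j$ are NE by the single-step claim, and $y_{h_j} - h \ge 0$ follows from $h_j$ being either a high vertex (so $y_{h_j} > h$) or the terminal vertex $b$ (so $y_{h_j} = h$). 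I expect the case analysis in the single-step claim to be the main obstacle, since it is there that the emptiness of $S_{l+1}$, the clockwise-order constraints from \Cref{lem:bonichon-lemma}, and the $\ell_\infty$-square geometry implied by $w - x_{h_l} < y_{h_l} - h$ all have to be reconciled simultaneously.
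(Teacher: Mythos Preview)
The paper does not give its own proof of this lemma: it is stated verbatim as ``Lemma 9 from \cite{BonichonGHP15}'' and used as a black box in the proof of \Cref{lem:usg-spanner}. So there is nothing in the present paper to compare your argument against.

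That said, your plan matches the natural strategy and is essentially how one would expect the original proof in \cite{BonichonGHP15} to go: iterate a single-step claim (``if $w - x_{h_l} < y_{h_l} - h$ then the next edge on the high side must be NE'') until the inequality flips, with the single-step claim established by showing that any configuration in which $h_l$ sits on a W or N side of $S_{l+1}$ forces $b$ into the interior of $S_{l+1}$, contradicting emptiness. One caution: your appeal to \Cref{cl:hi-li-distances} to get $\ell(S_{l+1}) < w$ is slightly off, since that claim only bounds the pairwise $\ell_\infty$ distances of the triangle vertices (and \Cref{prop:sqlen} goes the other way). The cleaner route in the single-step case analysis is to argue directly from emptiness of $S_{l+1}$ with respect to $b$: in each ``bad'' configuration (WN, WE, or the stall case WS/WE/SE), the hypothesis $w - x_{h_l} < y_{h_l} - h$ together with $l_l$ being low forces the horizontal and vertical extents of $S_{l+1}$ to cover $(w,h)$. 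Termination is fine as you wrote it, via strict increase of $x_{h_l}$ along NE edges and the bound $x_{h_l} \le w$ from \Cref{lem:bonichon-lemma} (or \Cref{cl:hili-bounded}).
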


\begin{lemma}[Partial Statement of Lemma 8 from \cite{BonichonGHP15}] \label{lem:ind-path}
	Suppose $(a, b)$ is not a DT edge, and rectangle $R(a, b)$ contains no point of $P$ other than $a$ and $b$. If $S_j$ is the first inductive square (if any) in the sequence $S_1, S_2, \ldots, S_{k-1}$. If $h_j$ is the inductive point of $S_j$, then $\d_H(a, h_j) + (y_j - h) \le 2hx_{h_j}$. Otherwise, if $l_j$ is the inductive point of $S_j$, then $\d_H(a, l_j) - y_{l_j} \le 2x_{l_j}$.
\end{lemma}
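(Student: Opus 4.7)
The plan is to prove the lemma by induction on the index $j$ of the first inductive square, following the argument of Bonichon et al.~\cite{BonichonGHP15}. I will treat the case where $h_j$ is the inductive point of $S_j$; the case where $l_j$ is inductive is symmetric, obtained by reflecting the configuration about the $x$-axis and replacing ``upper'' by ``lower'' throughout.

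First I would exhibit the candidate path $a = h_0 \to h_1 \to \cdots \to h_j$ in $H$: whenever $h_{i-1} \neq h_i$, the pair $(h_{i-1}, h_i)$ is a Delaunay edge by the very construction of the sequence $T_1, \ldots, T_k$, and by \Cref{cl:hi-li-distances} it satisfies $\difty{h_{i-1}, h_i} < \difty{a,b} \le 2$, so it is also a USG edge and hence lies in $E(H)$. When $h_{i-1} = h_i$, the path simply does not advance on the upper hull at step $i$. For the initial edge $(a, h_1)$, \Cref{cl:hili-bounded} gives $D(a, h_1) < D(a, b) \le 2$, so it too is in $H$.

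The heart of the induction is a local increment estimate of the form $\difty{h_{i-1}, h_i} + (y_{h_i} - y_{h_{i-1}}) \le 2h \cdot (x_{h_i} - x_{h_{i-1}})$ for each $i \le j$ with $h_{i-1} \neq h_i$. The input is property~4 of \Cref{lem:bonichon-lemma}, which tells us that $(h_{i-1}, h_i)$ is a WN, WE, or NE edge in $S_i$, together with the non-inductiveness of $S_i$ for $i < j$: the latter forces $l_i h_i$ to be steep, which in turn forces neither $l_i$ nor $h_i$ to lie on the E side of $S_i$. These positional constraints, combined with the fact that the vertices of $T_i$ lie on the boundary of a common axis-parallel square, pin down the geometry tightly enough that elementary planar reasoning yields the local bound; the factor $2h$ emerges from the worst case in which a WE edge is ``stretched'' in the $y$-direction as far as the slope $h/w$ of $\seg{ab}$ allows. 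Summing telescopically over $i = 1, \ldots, j$, using $h_0 = a = (0,0)$ and $h_j = h_j$, and rearranging, produces $\d_H(a, h_j) + (y_{h_j} - h) \le 2h \, x_{h_j}$ as required.

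The main obstacle is the case analysis in the inductive step: one must enumerate the positions of $h_{i-1}$ and $h_i$ on the boundary of $S_i$ (W, N, or E side), further constrained by the position of $l_i = l_{i-1}$ (W, S, or N side, forced by the steepness of $l_i h_i$ for $i < j$), and verify the local increment bound in each subcase. The E side is excluded for both $h_i$ and $l_i$ whenever $i < j$, leaving only a handful of configurations; the terminal square $S_j$ is the first one at which the inductive point $h_j$ lands on the E side, which matches the inductive endpoint cleanly. After all these subcases are checked, the reflection about the $x$-axis yields the statement for the case where $l_j$ is the inductive point, completing the proof.
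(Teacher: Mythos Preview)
The paper does not prove this lemma; it is quoted from \cite{BonichonGHP15} and used as a black box. Note also that the bound ``$2hx_{h_j}$'' in the statement is a typo for ``$2x_{h_j}$'', as is clear from its use in Case~2b of the proof of \Cref{lem:usg-spanner} (where $\d_H(a,h_i) \le 2x_{h_i} - (y_{h_i}-h)$ is invoked) and from the symmetric $l_j$ clause of the same lemma. Your attempt to interpret $2h$ as a genuine multiplicative factor and to justify it via the slope of $\seg{ab}$ is therefore chasing a phantom.

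More substantively, your telescoping along the upper chain $a=h_0,h_1,\ldots,h_j$ alone does not go through. The local increment you need, $\difty{h_{i-1},h_i} + (y_{h_i}-y_{h_{i-1}}) \le 2(x_{h_i}-x_{h_{i-1}})$, can fail: take a square $S_i$ in which $(h_{i-1},h_i)$ is a WN edge with $y_{h_i}-y_{h_{i-1}} > x_{h_i}-x_{h_{i-1}}$; this is compatible with $l_ih_i$ being steep and with all the positional constraints of \Cref{lem:bonichon-lemma}. Your assertion that steepness of $l_ih_i$ forces neither $l_i$ nor $h_i$ onto the E side of $S_i$ is also incorrect: the paper only states the implication ``gentle $\Rightarrow$ one of them is eastern'', whose contrapositive says nothing about steep edges. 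The actual argument in \cite{BonichonGHP15}, mirrored in the paper's proof of the companion \Cref{lem:potential-path}, does not telescope along one chain; it tracks the joint potential $\d_H(a,h_i)+\d_H(a,l_i)+P_{S_i}(h_i,l_i)\le 4\delta_i$, coupling the upper and lower chains through the clockwise boundary length $P_{S_i}(h_i,l_i)$. That coupling is precisely what absorbs a bad WN increment on one chain against the slack created on the other, and it is the missing idea in your proposal.
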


Now, we are ready to prove \Cref{lem:usg-spanner}.

\begin{proof}[Proof of \Cref{lem:usg-spanner}]
	We prove the lemma by induction on $D(a, b)$. Note that from our assumption, the distances are unique, and we only need to consider pairs $a, b \in P$ such that $D(a, b) \le 2$.
	
	\textbf{Base case.} Consider points $a, b \in P$ that have the smallest $D(a, b)$ value. Suppose $a = (0, 0)$ and $b = (w, h)$ with $0 \le h \le w$ and $0 < w = D(a, b)$. Since $a$ and $b$ are closest points (w.r.t. $\ell_\infty$ distance), the largest square having $a$ as its SW vertex, and containing no other point of $P$ in its interior, must have $b$ on its boundary (via \Cref{prop:sqineq}). Therefore, $ab$ is a DT edge. Furthermore, since $D(a, b) \le 2$, $ab \in E(G)$, which implies that $ab \in E(H)$. Finally, we observe that $\d_H(a, b) = D(a, b) \le w + h \le 3w + h$. 
	
	\textbf{Inductive step.} Consider points $a$ and $b$, and suppose the statement holds for all pairs of points with $\ell_\infty$ distance less than $D(a, b) \le 2$. If $ab \in E(H)$, then we proceed as in the previous case and obtain the result. Therefore, suppose $ab$ is not an edge in $T$.
	
	Again, we assume that $a= (0, 0)$ and $b = (w, h)$ where $0 \le h \le w$ and $0 < w = D(a, b)$ (note that this assumption is wlog, since we can appropriately translate and rotate the plane by an integral multiple of $90^\circ$, and observe neither of these operations affect the $\ell_\infty$ distances).
	
	\textbf{Case 1a.} Suppose there is at least one point of $P$ lying within rectangle $R(a, b)$. If there is a point $c \neq a, b$ such that $0 \le y_c, \le x_c$, $0 < x_c$, $0 \le h- y_c \le w - x_c$, and $0 < w - x_c$. Then, $D(a, c) = x_c < w = D(a, b)$, and $D(b, c) = w - x_c < w = D(a, b)$ region B in \Cref{fig:regions}. 
	Therefore, we use induction hypothesis to obtain that $\d_H(a, c) \le 3x_c + y_c$, and $\d_H(c, b) \le 3 (w-x_c) + (h-y_c)$. Furthermore, the length of each edge in the path $\pi_H(a, c)$ (resp.\ $\pi_H(c, b)$) is at most $D(a, c) < w$ (resp.\ $D(c, b) < w$). By concatenating the two paths, we obtain a path $\pi_H(a, b)$ of length at most $2(x_c + w-x_c) + (y_c + h-y_c) = 2w + h$, such that edge has length at most $w$.
	
	\begin{figure}
		\centering
		\includegraphics[scale=0.5]{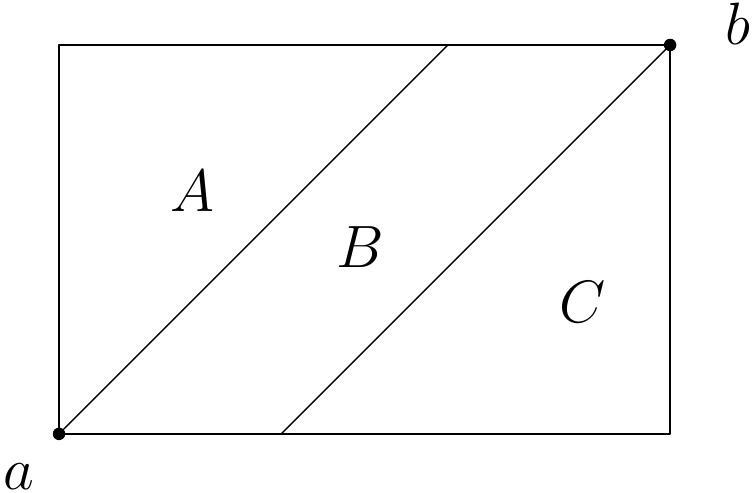}
		\caption{Different regions used in Case 1 in the proof of \Cref{lem:usg-spanner}.} \label{fig:regions}
	\end{figure}
	
	\textbf{Case 1b.} Now, suppose that there is no point in region B, but $R(a, b)$ contains a point of $P$. In this case, let $S_a$ (resp.\ $S_b$) be a square with $a$ as its SW corner (resp.\ $b$ its NE corner) and containing no other point of $P$ in its interior. Then, by assumption, there exists a point $c \neq a, b$ such that $c$ is on the boundary of square $S_a$ or of $S_b$. W.l.o.g. assume former (the other case is analogous). Then, note that $c$ belongs to region A in \Cref{fig:regions}, i.e., $y_c > x_c > 0$, and also $ac$ is a DT edge. Note that $x_c > 0$ since $a$ and $c$ cannot lie on the same side of an empty square by general position assumption. Then, $\d_H(a, c) = D(a, c) \le x_c + y_c$. From \Cref{prop:sqineq}, we also obtain that $D(a, c) \le D(a, b)$.  
	Furthermore, $D(b, c) = w - x_c < w = D(b, c)$. Therefore, we can apply induction hypothesis on the pair $b, c$, to conclude that there exists a path $\pi_H(c, b)$ such that $\d_H(b, c) \le 2 (w-x_c) + (h-y_c)$, and each edge along the path has length at most $D(b, c) < D(a, b)$. Then, by concatenating $ac$ with $\pi_H(c, b)$, we obtain a path of length at most $x_c + y_c + 2 (w-x_c) + h-y_c \le 2w + h$, such that the length of each edge is at most $D(a, b)$.
	
	\textbf{Case 2.} Suppose $R(a, b)$ contains no point of $P$ other than $a$ and $b$.
	
	\textbf{Case 2a.} If no square $S_1, S_2, \ldots, S_{k-1}$ is inductive, then  we use \Cref{lem:potential-path} to obtain that $\d_H(a, b) \le 2w$, and the corresponding path $\pi_H(a, b)$ contains edges $uv$ such that $D(u, v) < w$. This Lemma is proved using induction hypothesis, and we defer its proof after completing the current proof.
	
	Otherwise, let $S_i$ be the first inductive square. Now we consider two cases based on the inductive point of $S_i$. 
	\\\textbf{Case 2b.} Suppose $h_i$ is the inductive point of $S_i$. By \Cref{lem:bonichon-lemma}, $x_{ih} > 0$, and by \Cref{lem:hl-path}, there exists an index $j$ with $i \le j \le k$ such that $\pi_{H}(h_i, h_j) = (h_i, h_{i+1}, \ldots, h_j)$ is a path of length at most $(x_{h_j} - x_{h_i}) + (y_{h_i} - y_{h_j})$. Furthermore, from \Cref{cl:hi-li-distances}, it follows that $D(h_{i'}, h_{i'+1}) < D(a, b)$ for $i \le i' \le j-1$. \Cref{lem:hl-path} also implies that $w - x_{h_j} \ge y_{h_j} - h \ge 0$, and by \Cref{lem:bonichon-lemma}, $x_{h_j} \ge x_{h_i}$. Now, either (1) $h_j = b$, in which case it follows that $\d_H(h_j, b) = 0 \le 2(w-x_{h_j}) + (y_{h_j} - h)$. In this case, we let $\pi_{H}(h_j, b)$ to be a path of length $0$. (2) Otherwise, $h_j \neq b$, which implies that $x_{h_j} \le x_{h_{k-1}} < x_{b} = w$, where the strict inequality is due to the general position assumption. Therefore, $D(h_j, b) = \max\LR{w - x_{h_j}, y_{h_j} - h} = w - x_{h_j} < w = D(a, b)$, and $\delta(h_j, b) = y_{h_j} - h$. Therefore, by induction hypothesis, we obtain that there exists a path $\pi_{H}(h_j, b)$ of length at most $2(w-x_{h_j}) + (y_{h_j} - h)$, such that each edge of the path has length at most $D(a, b) \le 2$. Now, let $\pi_H(a, b)$ be a path obtained by concatenating the paths $\pi_H(a, h_i), \pi_H(h_i, h_j)$, and $\pi_{h_j, b}$ as defined above. It follows that
	\begin{align*}
		\d_H(a, b) &\le \d_H(a, h_i) + \d_H(h_i, h_j) + \d_H(h_j, b)
		\\&\le 2x_{h_i} - (y_{h_i} - h) + (x_{h_j} - x_{h_i}) + (y_{h_i} - y_{h_j}) + 2(w-x_{h_j}) + (y_{h_j} - h)
		\\&\le 2w \tag{Since $x_{h_i} \le x_{h_j}$}
	\end{align*}
	\textbf{Case 2c.} Suppose $l_i$ is the inductive point of $S_i$. By \Cref{lem:bonichon-lemma}, $x_{l_i}> 0$, and by \Cref{lem:hl-path}, there is a $j$ with $i \le j \le k$ such that $\pi_{H}(l_i, l_j) = (l_i, l_{i+1}, \ldots, l_{j})$ is a path of length at most $(x_{l_j} - x_{l_i}) + (y_{l_j} - y_{l_i})$. Furthermore, from \Cref{cl:hi-li-distances}, it follows that $D(l_{i'}, l_{i'+1}) < D(a, b)$ for $i \le i' \le j-1$. \Cref{lem:bonichon-lemma} also implies that $x_{l_j} \ge x_{l_i}$. Then, we consider two cases as before, namely (1) $l_j = b$, or (2) $w - x_{l_j} > 0$. In either case, $D(l_j, b) < D(a, b)$. Then, proceeding analogous to the previous case, we can obtain a path $\pi_H(l_j, b)$ of length at most $2(w-x_{l_j}) + (h-y_{l_j})$ such that each edge in the path has length at most $D(a, b)$. Again, by concatenating the paths $\pi_H(a, l_i), \pi_{H}(l_i, l_j)$, and $\pi_H(l_j, b)$, we obtain a path $\pi_H(a, b)$ such that each edge in the path has length at most $D(a, b) \le 2$. It follows that,
	\begin{align*}
		\d_H(a, b) &\le \d_H(a, l_i) + \d_H(l_i, l_j) + \d_H(l_j, b)
		\\&\le 2x_{l_j} + y_{l_i} + (x_{l_j} - x_{l_i}) + (y_{l_j} - y_{l_i}) + 2(w-x_{l_j}) + (h-y_{l_j})
		\\&\le 2w + h
	\end{align*}
\end{proof}

\begin{lemma} \label{lem:potential-path}
	Let $a, b \in P$ be a pair of points such that (i) $D(a, b) \le 2$, (ii) the rectangle $R(a, b)$ contains no point of $P$ other than $a$ and $b$, and (iii) no square $S_1, S_2, \ldots, S_{k-1}$ is inductive. Then, there exists a path $\pi_{H}(a, b)$ of length at most $2w$, such that each edge $uv$ along the path $\pi_H(a, b)$ satisfies that $D(u, v) < w$.
\end{lemma}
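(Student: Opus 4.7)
The plan is to construct an upper path $\pi_U=(a=h_0,h_1,\ldots,h_{k-1},b)$ and a lower path $\pi_L=(a=l_0,l_1,\ldots,l_{k-1},b)$ in $H$---noting that $h_{k-1}=h_k$ and $l_{k-1}=l_k$, since the side $(h_k,l_k)$ of $T_k$ is shared with $T_{k-1}$---and to show that the shorter of the two satisfies both conclusions of the lemma.

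The first step is a structural strengthening of \Cref{lem:bonichon-lemma} under the non-inductive hypothesis. Since $S_i$ is non-inductive for every $1\le i\le k-1$, neither $h_i$ nor $l_i$ lies on the east side of $S_i$; this rules out the WE, NE, and SE options in part~4 of that lemma (and similarly for the boundary triangle $T_1$ via part~1). A short case analysis on the two possible forms of $T_i$ then forces: $h_i$ lies on the N side of $S_i$, $l_i$ lies on the S side, and every nontrivial edge $(h_{i-1},h_i)$ (resp.\ $(l_{i-1},l_i)$) is a WN (resp.\ WS) edge of $S_i$. Consequently $x_{h_i}-x_{h_{i-1}}\ge 0$, $y_{h_i}-y_{h_{i-1}}\ge 0$, $x_{l_i}-x_{l_{i-1}}\ge 0$, and $y_{l_{i-1}}-y_{l_i}\ge 0$, so each path is coordinate-monotone.

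Next, I telescope. For a WN edge, $D(h_{i-1},h_i)\le(x_{h_i}-x_{h_{i-1}})+(y_{h_i}-y_{h_{i-1}})$, so $\sum_{i=1}^{k-1}D(h_{i-1},h_i)\le x_{h_k}+y_{h_k}$. Because $0\le x_{h_k}\le w$ and $y_{h_k}\ge h$ (\Cref{lem:bonichon-lemma}, part~3, applied to $h_{k-1}$), $D(h_k,b)\le(w-x_{h_k})+(y_{h_k}-h)$. Writing $A:=y_{h_k}-h>0$ and analogously $B:=-y_{l_k}>0$, I obtain $L_U\le w+2A+h$ and $L_L\le w+2B+h$. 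All internal edges satisfy $D<w$ by \Cref{cl:hi-li-distances}; for $(h_k,b)$ and $(l_k,b)$, the bound $D\le w$ is immediate, and uniqueness of pairwise $\ell_\infty$ distances (general position) rules out $D=w$ (which would force $D(h_k,b)=D(a,b)$), so $D<w$ there too, placing every edge of $\pi_U$ and $\pi_L$ in $H$.

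The crux is the bound $A+B<w-h$. Since $(h_k,l_k)$ is a side of both $T_{k-1}$ and $T_k$, \Cref{cl:hi-li-distances} gives $|y_{h_k}-y_{l_k}|\le D(h_k,l_k)<w$; combined with $y_{h_k}>h\ge 0>y_{l_k}$, this yields $A+B=y_{h_k}-y_{l_k}-h<w-h$. Hence $\min(A,B)\le(A+B)/2<(w-h)/2$, and $\min(L_U,L_L)\le w+2\min(A,B)+h<2w$, so the shorter path is the required $\pi_H(a,b)$. The main obstacle is the structural case analysis in the second paragraph---carefully verifying via \Cref{lem:bonichon-lemma} that non-inductivity forces $h_i$ on N and $l_i$ on S of $S_i$; once this monotonicity is established, the remainder is a short telescoping plus the two-path averaging trick.
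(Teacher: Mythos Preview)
There is a genuine gap in your second paragraph. You assert that non-inductivity of $S_i$ (i.e., the edge $l_ih_i$ being steep) forces neither $h_i$ nor $l_i$ to lie on the E side of $S_i$, and use this to rule out the WE and NE options from \Cref{lem:bonichon-lemma}. But this is the \emph{converse} of what the paper establishes: the text only says that if $l_jh_j$ is gentle then one of $l_j,h_j$ must be eastern. The reverse implication is false, and in fact the paper's own proof of this lemma explicitly treats the case where $h_i$ lies on the E side of $S_i$ while $l_ih_i$ is steep (their case (C) in the figure). A concrete instance: with $S_i=[3,13]\times[-4,6]$, $h_i=(13,5)$ on E, $l_i=(11,-4)$ on S, one has $\dx=2<9=\dy$, so the edge is steep yet $h_i$ is eastern. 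Once NE edges can occur in the $h$-chain, your $y$-monotonicity $y_{h_i}\ge y_{h_{i-1}}$ fails (for an NE edge, $h_{i-1}$ sits on the N side, so $y_{h_{i-1}}\ge y_{h_i}$), and the telescoping bound $\sum_i D(h_{i-1},h_i)\le x_{h_{k-1}}+y_{h_{k-1}}$ breaks down.

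The paper circumvents this by a potential-function argument rather than coordinate monotonicity: it tracks $\d_H(a,h_i)+\d_H(a,l_i)+P_{S_i}(h_i,l_i)$, where $P_{S_i}(h_i,l_i)$ is the clockwise boundary distance from $h_i$ to $l_i$ along $\partial S_i$, and shows inductively that this is at most $4\delta_i$. The invariant survives the transition $S_i\to S_{i+1}$ regardless of whether $h_i$ sits on N or on E; the E-side case is reduced to the N-side case via an auxiliary square $S'_i$. At $i=k$ one gets $2\d_H(a,b)\le 4w$, which is the same two-path averaging you use at the end, but the boundary term $P_{S_i}$ is precisely what absorbs the non-monotone steps that your argument cannot handle.
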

\begin{proof}
	In order to prove this lemma, we need to introduce additional terminology. 
	\\\textbf{Terminology.} Let $\delta_0 = 0$, and for $1 \le i \le k$, let $\delta_i$ be the horizontal distance between point $a$ and the E side of square $S_i$. A square $S_i$ has \emph{potential} if $\d_H(a, h_i) + \d_H(a, l_i) + P_{S_i}(h_i, l_i) \le 4\delta_i$, where $P_{S_i}(h_i, l_i)$ is the length of the path when moving from $h_i$ to $l_i$ along the sides of $S_i$ in clockwise manner. 
	
	From the assumptions of the statement of the lemma, we will show that the squares $S_1, S_2, \ldots, S_{k-1}, S_k$ all have potential, and use this fact to construct a path $\pi_H(a, b)$ with the desired properties.
	
	By \Cref{lem:bonichon-lemma}, $a$ lies on the W side of $S_1$ and $\delta_1$ is the sidelength of square $S_1$. Then, $\d_{H}(a, h_1) + \d_H(a, l_1) + P_{S_1}(h_1, l_1)$ is upper bounded by the perimeter of $S_1$, which is $4\delta_1$.
	
	Now, inductively assume that we have shown that square $S_i$ has potential, i.e., $\d_H(a, h_i) + \d_H(a, l_i) + P_{S_i}(h_i, l_i) \le 4 \delta_i$. Note that from the assumption, $S_i$ is not inductive. Also inductively suppose that the paths $\pi_H(a, h_i)$, $\pi_H(a, l_i)$ witnessing $\d_H(a, h_i)$ and $\d_H(a, l_i)$ respectively. Then, we show that the square $S_{i+1}$ has potential, and construct the corresponding paths $\pi_H(a, h_{i+1})$ and $\pi_H(a, l_{i+1})$ witnessing $\d_H(a, h_{i+1})$ and $\d_H(a, l_{i+1})$ respectively. 
	
	Squares $S_i$ and $S_{i+1}$ both contain points $l_i$ and $h_i$. Since $S_i$ is not inductive, the edge $l_i h_i$ must be steep, i.e., $\dx(l_i, h_i) < \dy(l_i, h_i)$. First, we consider the case when $x_{l_i} < x_{h_i}$, and the case when $x_{l_i} > x_{h_i}$ can be shown analogously.
	
	By \Cref{lem:bonichon-lemma}, $T_i = \Delta(h_{i-1}, h_{i}, l_{i-1} = l_i)$, or $T_{i} = \Delta(h_{i-1} = h_i, l_{i-1}, l_i)$, and there is a side of $S_i$ between the sides on which $l_i$ and $h_i$ lie, when moving clockwise from $l_i$ to $h_i$. From \Cref{lem:bonichon-lemma} and $x_{l_i} < x_{h_i}$, we conclude that $l_i$ lies on the S side, and $h_i$ lies on the N or E side of the square $S_i$.
	
	\begin{figure}
		\centering
		\includegraphics[scale=0.8]{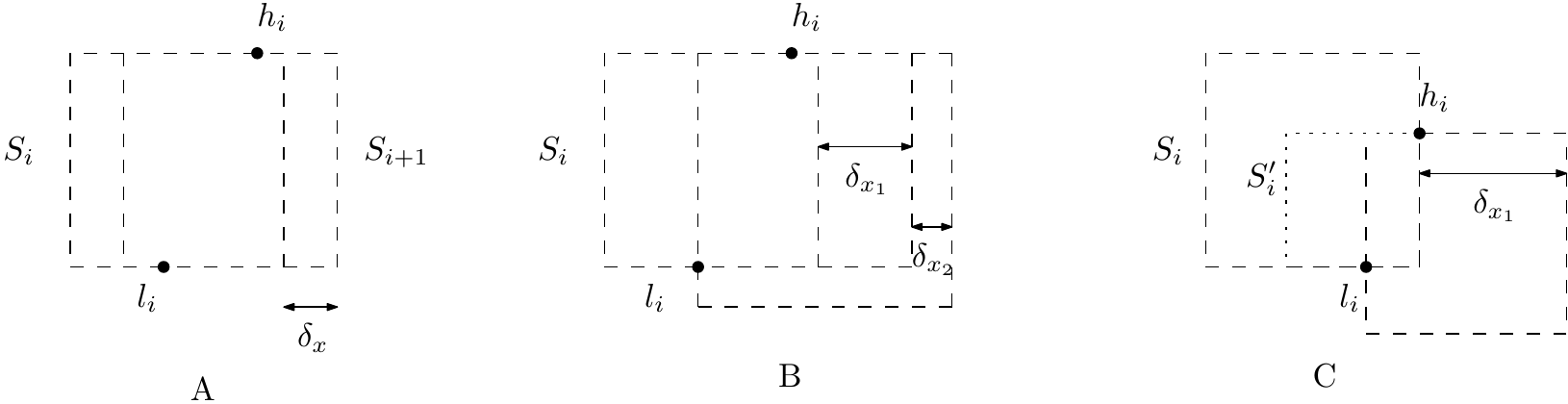}
		\caption{Illustration for the three cases in the proof of \Cref{lem:potential-path}} \label{fig:squares}
	\end{figure}
	
	If $h_i$ is on the N side of $S_i$, and since $x_{l_i} < x_{h_i}$, $h_i$ must also be on the N side of $S_{i+1}$. There are two possibilities for the position of $l_i$ on the boundary of $S_{i+1}$. If $l_i$ is on the S side of $S_{i+1}$, then $S_{i+1}$ is obtained by a horizontal translation of length $\delta_x = \delta_{i+1} - \delta_i$ as shown in \Cref{fig:squares} (A). Then,
	\begin{equation}
		P_{S_{i+1}}(h_i, l_i) - P_{S_i}(h_i, l_i) = 2\delta_x = 2(\delta_{i+1} - \delta_i) \label{eqn:ps-ineq1}
	\end{equation}
	Otherwise, $l_i$ is on the W side of $S_{i+1}$. Let $S'_i$ be the square with $l_i$ as its SW corner, and obtained by a horizontal translation of $S_i$. Let $\delta_{x_1}$ denote the length of the horizontal translation, and $\delta_{x_2}$ denote the difference between the sidelength of $S'_{i}$ and $S_{i+1}$ as shown in \Cref{fig:squares} (B). Then, $\delta_{x_1} + \delta_{x_2} = \delta_{i+1} - \delta_i$, and
	\begin{equation}
		P_{S_{i+1}}(h_i, l_i) - P_{S_i}(h_i, l_i) = 2\delta_{x_1} + 4\delta_{x_2} \le 4(\delta_{i+1} - \delta_i) \label{eqn:ps-ineq2}
	\end{equation}
	If $h_i$ is on the E side of $S_i$, then let $S'_i$ be the square that shares its SE corner with $S_i$, and with $h_i$ as its NE corner, as shown in \Cref{fig:squares} (C). Since $l_i h_i$ is steep, $l_i$ lies on the S side of $S'_i$. Then, using the analysis from the previous case, we obtain that $P_{S_{i+1}}(h_i, l_i) - P_{S'_i}(h_i, l_i) \le 4(\delta_{i+1} - \delta_i)$, and since $P_{S'_{i}}(h_i, l_i) = P_{S_i}(h_i, l_i)$, we conclude that the following inequality holds in all cases.
	\begin{align}
		P_{S_{i+1}}(h_i, l_i) - P_{S_i}(h_i, l_i) \le 4(\delta_{i+1} - \delta_i) \label{eqn:ps-ineq}
	\end{align}
	Now, since $S_i$ has potential, 
	\begin{align}
		&\d_H(a, h_i) + \d_H(a, l_i) + P_{S_{i+1}}(h_i, l_i) \nonumber
		\\&= \d_H(a, h_i) + \d_H(a, l_i) + P_{S_{i}}(h_i, l_i) + (P_{S_{i+1}}(h_i, l_i) -  P_{S_{i}}(h_i, l_i)) \nonumber
		\\&\le 4\delta_i + 4(\delta_{i+1} - \delta_i) = 4\delta_{i+1} \label{eqn:siplus1}
	\end{align}
	Now, suppose that $T_{i+1} = \Delta(h_i, h_{i+1}, l_i = l_{i+1})$. Then, $h_i h_{i+1}$ is an edge of DT. Therefore, $h_i h_{i+1}$ is an edge in $H$. By \Cref{lem:bonichon-lemma}, $h_{i+1}$ lies somewhere on the boundary of $S_{i+1}$ between $h_i$ and $l_i$, when moving clockwise from $h_i$ to $l_i$. Then, by triangle inequality, $D(h_i, h_{i+1}) \le P_{S_{i+1}}(h_i, h_{i+1})$. Then,
	\begin{align*}
		&\d_H(a, h_{i+1}) + \d_{H}(a, l_{i+1}) + P_{S_{i+1}}(h_{i+1}, l_{i+1}) 
		\\&\le \d_H(a, h_i) + D(h_i, h_{i+1}) + \d_H(a, l_i) + P_{S_{i+1}}(h_{i+1}, l_i)
		\\&\le \d_H(a, h_i) + \d_H(a, l_i) + P_{S_{i+1}}(h_i, l_i)
		\\&\le 4\delta_{i+1} \tag{From \Cref{eqn:siplus1}}
	\end{align*}
	Furthermore, by \Cref{cl:hi-li-distances}, $D(h_i, h_{i+1}) < D(a, b)$. Thus, we obtain the path $\pi_{H}(a, h_{i+1})$ by appending the edge $h_i h_{i+1}$ at the end of the inductively constructed path $\pi_{H}(a, h_i)$. Since $l_{i} = l_{i+1}$, the path $\pi_{H}(a, l_{i+1})$ is same as $\pi_{H}(a, l_i)$. The case when $T_{i+1} = \Delta(h_i = h_{i+1}, l_i, l_{i+1})$ is symmetric. Thus, we have completed the inductive step.
	
	Thus, at the end we have shown that the square $S_k$ has a potential, i.e., $\d_{H}(a, h_k) + \d_H(a, l_k) + P_{S_k}(h_k , l_k) \le 4\delta_k = 4w$. Recall that $h_k = l_k = b$, which implies that $P_{S_k}(h_k, l_k) = 0$. Therefore, at least one of the inductively constructed paths $\pi_H(a, b)$ has length at most $2\delta_k = 2w$, and each edge $uv$ on the path satisfies $D(u, v) < w$. This concludes the proof of the lemma.
\end{proof}

We conclude with the proof of the main theorem of the section, which we restate here.

\usgspanner*
\begin{proof}
	Consider two points $a, b \in P$, and let $\pi_G(a, b) = (a = v_0, v_1, v_2, \ldots, v_t = b)$ denote a shortest path in $G$, such that for $0 \le i < t$, $v_i v_{i+1} \in E(G)$ with $\d_G(v_i, v_{i+1}) = D(v_i, v_{i+1})$. Since $H$ is a subgraph of Delaunay Triangulation, it is planar, and by construction it is also a subgraph of $G$. Now, using \Cref{lem:usg-spanner} for each edge $v_i v_{i+1}$, we know that there exists a path $\pi_{H}(v_i, v_{i+1})$ in $H$ of length at most $3D(v_i, v_{i+1}) = 3\d_G(v_{i}, v_{i+1})$. It is easy to see that the path $\pi_{H}(a, b)$ is obtained by concatenating all such paths and short-cutting the edges if necessary, satisfies the conditions in the statement of the theorem.
\end{proof}

\section{Conclusion and Open Questions} \label{sec:conclusion}
We obtain the first coresets for $k$-clustering problems whose size is independent of $n$, on a variety of geometric graph metrics, such as weighted intersection graphs of unit disks and squares. A UDG (or a USG) can contain arbitrarily large cliques, i.e., they can be (locally) dense. Therefore, to the best of our knowledge, ours is the first small-sized (i.e., independent of $n$) coreset construction for a shortest-path metric on a dense family of graphs. Due to the inherently ``hybrid'' nature of such metrics, our coreset construction has to carefully navigate the locally-Euclidean and globally-sparse nature of the metric.

We believe the contribution of our work is also conceptual, in that we ``abstract out'' the geometric structural properties of the metrics that are sufficient to obtain small-sized coresets via the versatile framework of Cohen-Addad et al. \cite{Cohen-AddadSS21}. These structural properties are also satisfied by $\ell_p$-norm weighted UDGs and USGs. Furthermore, by suitably modifying the construction, we can also handle hop metrics (i.e., unweighted edges) induced by UDGs of bounded degree. Thus, we obtain small-sized coresets for $k$-clustering problems for all of these graph families. In order to obtain the result on USGs, we prove that these graphs admit a $3$-stretch planar spanner, a result that may be of independent interest and of further applicability.

The most natural question is to find more examples of geometric intersection graph families that satisfy the structural properties identified in this work. Disk graphs in $\real^2$ and Unit Ball Graphs in $\real^d$ (for constant $d \ge 3$) are two orthogonal generalizations of UDGs, and thus may be the most obvious candidates. However, these graph families are not known to admit a constant stretch planar spanner. As an intermediate step, it might be interesting to consider unit disk graphs on a surface $\Sigma$ of bounded genus. Here, it might be more natural to require whether such a graph admit constant stretch spanner that is also embeddable on $\Sigma$ (which is a relaxation of planarity). It might be possible to extend our framework with this relaxed setting, also yielding smaller coresets for such geometric intersection graph families.

\bibliography{references}
\end{document}